\newtheorem{theorem}{Theorem}[section]
\makeatletter \newcommand{\vast}{\bBigg@{4}} \newcommand{\Vast}{\bBigg@{5}} \makeatother 
\begin{document}

\title{Technical Report:
{An MGF-based Unified Framework to Determine the Joint Statistics of Partial Sums of Ordered i.n.d. Random Variables}}

\author{Sung~Sik~Nam,~\IEEEmembership{Member,~IEEE,}
	Hong-Chuan~Yang,~\IEEEmembership{Senior~Member,~IEEE,}
        Mohamed-Slim~Alouini,~\IEEEmembership{Fellow~Member,~IEEE,}
        and Dong~In~Kim,~\IEEEmembership{Senior~Member,~IEEE}
\thanks{$^{\star}$
This is an extended version of a paper which was presented in IEEE International Symposium on Information Theory (ISIT'2013), Istanbul, Turkey, July 2013.
S.~S.~Nam is with Hanyang University, Korea.  M.-S.~Alouini is with the Electrical Engineering Program, Computer, Electrical, Mathematical Sciences and Engineering (CEMSE) Division, KAUST, Thuwal, Makkah Province, Saudi Arabia. H.-C.~Yang is with Department of Electrical and Computer Engineering, University of Victoria, Canada.  D. I.~Kim are with the School of Information and Communication Engineering, Sungkyunkwan University, Korea. They can be reached by E-mail at
$<$ssnam@hanyang.ac.kr, dikim@skku.ac.kr, slim.alouini@kaust.edu.sa, hy@uvic.ca$>$.}
}

\markboth{S.S. Nam \MakeLowercase{\textit{et al.}}: An MGF-based Unified Framework ... on Ordered Statistics for i.n.d. RVs}{Shell \MakeLowercase{\textit{et al.}}: An MGF-based Unified Framework ... on Ordered Statistics for i.n.d. RVs}

\maketitle

\begin{abstract}
The joint statistics of partial sums of ordered random variables (RVs) are often needed for the accurate performance characterization of a wide variety of wireless communication systems. A unified analytical framework to determine the joint statistics of partial sums of ordered independent and identically distributed (i.i.d.) random variables was recently presented. However, the identical distribution assumption may not be valid in several real-world applications. With this motivation in mind, we consider in this paper the more general case in which the random variables are independent but not necessarily identically distributed (i.n.d.). More specifically, we extend the previous analysis and introduce a new more general unified analytical framework to determine the joint statistics of partial sums of ordered i.n.d. RVs. Our mathematical formalism is illustrated with an application on the exact performance analysis of the capture probability of generalized selection combining (GSC)-based RAKE receivers operating over frequency-selective fading channels with a non-uniform power delay profile. We also discussed a couple of other sample applications of the generic results presented in this work.
\end{abstract}

\begin{keywords}
Order statistics, Joint statistics, Non-identical distribution, Moment generating function (MGF), Probability density function (PDF), Exponential distribution.
\end{keywords}

\section{Introduction}

The subject of order statistics deals with the properties and distributions of the ordered random variables (RVs) and their functions. It has found applications in many areas of statistical theory and practice~\cite{kn:Order_Statistics}, with examples in life-testing, quality control, radar, as well as signal and image processing~\cite{kn:Handbook_of_Statistics,kn:Goldsmith_book,kn:Multi_PDF_Order_Selection,kn:Steven_Kay,kn:JAHwang,kn:RCHardie,kn:CKotropoulos}.
Order statistics has made over the last decade an increasing number of appearances in the design and analysis of wireless communication systems, specifically for the performance analysis of advanced diversity techniques, adaptive transmission techniques, and multiuser scheduling techniques (see for example~\cite{kn:simon_book, kn:MS_GSC, kn:OT_MRC_j, kn:Choi_finger, kn:Hong_Sumrate, kn:alouini_wi_j3, kn:ma00, kn:win2001, kn:annamalai2002, kn:Mallik_05, kn:bouida2008, kn:Alouini_Order_Statistics, kn:Chambers, kn:Smith}). 
In these performance analysis exercises, the joint statistics of partial sums of ordered RVs are often necessary for the accurate characterization of system performance~\cite{kn:Choi_finger, kn:bouida2008, kn:capture_outage_GSC}. 
Note that even if the original unordered RVs are independently distributed, their ordered versions are dependent due to the inequality relations among them, which makes it challenging to such joint statistics. Recently, a successive conditioning approach was used to convert dependent ordered random variables into independent unordered ones \cite{kn:MS_GSC, kn:OT_MRC_j}. However, this approach requires some case-specific manipulations, which may not always be generalizable.

Motivated by these facts, we introduced in \cite{kn:unified_approach} a unified analytical framework to determine the joint statistics of partial sums of ordered independent and identically distributed (i.i.d.) RVs by extending the interesting results published in~\cite{kn:Integral_Solution_Nuttall,kn:joint_PDF_Nuttall,kn:Multi_PDF_Order_Selection}. More specifically, our approach can be applied not only to the cases when all the $N$ ordered RVs are involved but also to the cases when only the $N_s$ $\left(N_s < N\right)$ best RVs are considered.  
With the proposed approach, we can systematically derive the joint statistics of any partial sums of ordered statistics, in terms of the moment generating function (MGF) and the probability density function (PDF).
These statistical results can be used for the performance analysis of various wireless communication systems over generalized fading channels \cite{kn:simon_book}. However, the identical fading assumption on all diversity branches is not always valid in real-life applications. 
The average fading power may vary from one path to the other because the branches of a diversity system are sometimes unbalanced and the communication system is sometimes operating over frequency-selective channels with a non-uniform power delay profile or channel multipath intensity profile (i.e. the average SNR of the diversity paths are not necessary the same). 

We therefore introduce in this paper an unified analytical framework to determine the joint statistics of partial sums of ordered independent non-identically distributed (i.n.d.) RVs by extending our previous work for i.i.d. fading scenarios \cite{kn:unified_approach}.
More specifically, we use an MGF based systematic analytical approach to investigate the joint statistics of any partial sums of ordered statistics for general i.n.d. fading, in terms of MGF and the PDF. We would like to emphasize that such generalization.
The main challenge for generalizing the work in \cite{kn:unified_approach} to i.n.d. general fading cases is that joint PDF of ordered i.n.d. RVs is much more complicated than that of ordered i.i.d. RVs. We need to carry out more detailed manipulation and introduce new mathematical representation to obtain the generic results (e.g. joint MGF and related joint PDF) for i.n.d. general cases in a compact form. 
In addition, we present the closed-form expressions for the exponential RV special case, which is most widely used in wireless literature. For other type of RVs, our approach will lead to much simpler results than the conventional approach involving multiple-fold integration. 
Furthermore, the exponential distribution is frequently used in the performance evaluation analysis of networks and telecommunication systems. It is also used to model the waiting times between occurrences of rare events, lifetimes of electrical or mechanical devices \cite{kn:Handbook_Stojmenovic, kn:Goldsmith_book,kn:Handbook_of_Statistics,kn:Introduction_Prob_and_Appli}. Finally, as an application of our analytical framework, we generalize the performance results of GSC-based RAKE receivers in \cite{kn:capture_outage_GSC} by maintaining the assumption of independence among the diversity paths but relaxing the identically distributed assumption. We also discussed a couple of other sample applications of the generic results presented in this work.

\section{Problem Statement and Main Idea}
Order statistics deals with the distributions and statistical properties of the new random variables obtained after ordering the realizations of some random variables.
Let $\left\{ \gamma_{i_l} \right\}$, $i_l = 1, 2,\cdots, N$ denote $N$ i.n.d. nonnegative random variables with PDF $p_{i_l}\left(\cdot \right)$ and CDF $P_{i_l}\left( \cdot \right)$. Let $u_i$ denote the random variable corresponding to the $i$-th largest observation of the $N$ original random variables (also called $i$-th order statistics), such that $u_1 \ge u_2 \ge \cdots \ge u_N$. The $N$-dimensional joint PDF of the ordered RVs $\left\{ {u_{i} } \right\}_{i = 1}^N$ is given by~\cite{kn:Order_Statistics}
\begin{equation} \small\label{eq:m-joint_PDF_MRC}
g\left( {{u_1},{u_2}, \ldots ,{u_N}} \right) = \sum\limits_{\scriptstyle {i_1},{i_2}, \ldots ,{i_N} \hfill \atop 
  \scriptstyle {i_1} \ne {i_2} \ne  \cdots  \ne {i_N} \hfill}^{1,2, \ldots ,N} {{p_{{i_1}}}\left( {{u_1}} \right){p_{{i_2}}}\left( {{u_2}} \right) \cdots {p_{{i_N}}}\left( {{u_N}} \right)}.
\end{equation}
Similarly, the $N_s$-dimensional joint PDF of $\left\{ {u_{i} } \right\}_{i = 1}^{N_s}$ is given by~\cite{kn:Order_Statistics}
\small\begin{eqnarray} \label{eq:m-joint_PDF_GSC}
\!\!\!\!\!\!g\left( {{u_1},{u_2}, \cdots ,{u_{{N_s}}}} \right) \!\!&=& \!\!\!\!\!\!\!\!\!\!\!\sum\limits_{\scriptstyle {i_1},{i_2}, \cdots ,{i_N} \atop 
  \scriptstyle {i_1} \ne {i_2} \ne  \cdots  \ne {i_N}}^{1,2, \cdots ,N} \!\!\! {{p_{{i_1}}}\left( {{u_1}} \right){p_{{i_2}}}\left( {{u_2}} \right) \cdots {p_{{i_{{N_s}}}}}\left( {{u_{{N_s}}}} \right)\prod\limits_{j = {N_s} + 1}^N {{P_{{i_j}}}\left( {{u_{{N_s}}}} \right)} } \nonumber \\
&{\rm{or}}& \nonumber \\
\!\!\!\!\!\!&=& \!\!\!\!\!\!\!\!\!\!\!\sum\limits_{\scriptstyle {i_1},{i_2}, \cdots ,{i_N} \atop 
  \scriptstyle {i_1} \ne {i_2} \ne  \cdots  \ne {i_{{N_s}}}}^{1,2, \cdots ,N} \!\!\! {{p_{{i_1}}}\left( {{u_1}} \right){p_{{i_2}}}\left( {{u_2}} \right) \cdots {p_{{i_{{N_s}}}}}\left( {{u_{{N_s}}}} \right)\!\!\!\!\sum\limits_{\substack{
 {i_{{N_s} + 1}}, \cdots ,{i_N} \\ 
 {i_{{N_s} + 1}} \ne  \cdots  \ne {i_N} \\ 
 {i_{{N_s} + 1}} \ne {i_1},{i_2}, \cdots ,{i_{{N_s}}} \\ 
  \vdots  \\ 
 {i_N} \ne {i_1},{i_2}, \cdots ,{i_{{N_s}}} \\ 
 }}^{1,2, \cdots ,N} {\prod\limits_{\scriptstyle l = {N_s} + 1 \atop 
  \scriptstyle \left\{ {{i_{{N_s} + 1}}, \cdots ,{i_N}} \right\}}^N \!\!\!\! {{P_{{i_l}}}\left( {{u_{{N_s}}}} \right)} } }.
\end{eqnarray}\normalsize

The objective is to derive the joint PDF of partial sums involving either all $N$ or the first $N_s$ ($N_s < N$) ordered RVs for the more general case in which the diversity paths are independent but not necessarily identically distributed.
Similar to \cite{kn:unified_approach}, we adopt a general two-step approach:
\begin{itemize}
\renewcommand{\labelitemi}{$\bullet$}
\item Step I: Obtain the analytical expressions of the joint MGF of partial sums (not necessarily the partial sums of interest as will be seen later). 

\item Step II: Apply inverse Laplace transform to derive the joint PDF of partial sums (additional integration may be required to obtain the desired joint PDF).
\end{itemize}
In step I, by interchanging the order of integration, while ensuring each pair of limits is chosen to be as tight as possible, the multiple integral can be rewritten into compact equivalent representations.
After obtaining the joint MGF in a compact form, we can derive joint PDF of selected partial sum through inverse Laplace transform. For most cases of our interest, the joint MGF involves basic functions, for which the inverse Laplace transform can be calculated analytically. In the worst case, we may rely on the Bromwich contour integral. In most of the case, the result involves a single one-dimensional contour integration, which can be easily and accurately evaluated numerically with the help of integral tables~\cite{kn:Mathematical_handbook, kn:gradshteyn_6} or using standard mathematical packages such as Mathematica and Matlab.

The above general steps  can be directly applied when all $N$ ordered RVs are considered and the RVs in the partial sums are continuous. When either of these conditions do not hold, we need to apply some extra steps in the analysis in order to obtain a valid joint MGF \cite{kn:unified_approach}.  For example, when the RVs involved in one partial sum is not continuous, i.e., separated by the other RVs, we need to divide these RVs into smaller sums. For example in~Fig.~\ref{Example_2}, we consider 3-dimensional joint PDF of $\{\gamma_{1:K}$, $\gamma_{2:K}$, $\gamma_{5:K}$, $\gamma_{6:K}\}$, $\{\gamma_{3:K},\gamma_{4:K}\}$, and $\{\gamma_{7:K},\gamma_{8:K}\}$ for $K>8$. Note that the first group is not continuous.
As a result, we will derive 5-dimensional joint MGF in step I, $\{\gamma_{1:K},\gamma_{2:K}\}$, $\{\gamma_{3:K},\gamma_{4:K}\}$, $\{\gamma_{5:K},\gamma_{6:K}\}$, $\{\gamma_{7:K}\}$, $\{\gamma_{8:K}\}$. After the joint PDF of the new substituted partial sums are derived with inverse Laplace transform in step II, we can transform it to a lower dimensional desired joint PDF with finite integration.

\section{Common Functions and Useful Relations}
In the following sections, we present several examples to illustrate the proposed analytical framework. Our focus is on how to obtain compact expressions of the joint MGFs for i.n.d. general fading conditions, which can be greatly simplified with the application of the following function and relations.

\subsection{Common Functions}

\begin{enumerate}
\item[i)]A mixture of a CDF and an MGF ${c_{{i_l}}}\left( {{\gamma},\lambda } \right)$:
\begin{equation}\small\label{eq:CDF_MGF}
{c_{{i_l}}}\left( {{\gamma},\lambda } \right) = \int_0^{{\gamma}} {{p_{{i_1}}}\left( x \right)\exp \left( {\lambda x} \right)dx},
\end{equation}
where $p_{{i_1}}\left( x \right)$ denotes the PDF of the RV of interest.
Note that $c_{{i_l}}\left( {{\gamma},0 } \right)=c_{{i_l}}\left( {{\gamma}} \right)$ is the CDF and $c_{{i_l}}\left( {\infty,\lambda } \right)$ leads to the MGF. 
Here, the variable $\gamma$ is real, while $\lambda$ can be complex.

\item[ii)]A mixture of an exceedance distribution function (EDF) and an MGF, ${e_{{i_l}}}\left( {{\gamma},\lambda } \right)$:
\begin{equation} \small\label{eq:EDF_MGF} 
{e_{{i_l}}}\left( {{\gamma},\lambda } \right) = \int_{{\gamma}}^\infty  {{p_{{i_1}}}\left( x \right)\exp \left( {\lambda x} \right)dx}.
\end{equation}

Note that ${e_{{i_l}}}\left( {{\gamma},0 } \right)={e_{{i_l}}}\left( {\gamma}\right)$ is the EDF while ${e_{{i_l}}}\left( {0,\lambda } \right)$ gives the MGF.

\item[iii)]An interval MGF ${\mu _{{i_l}}}\left( {{\gamma_a},{\gamma_b},\lambda } \right)$:
\begin{equation} \small\label{eq:IntervalMGF}
 {\mu _{{i_l}}}\left( {{z_a},{z_b},\lambda } \right) = \int_{{z_a}}^{{z_b}} {{p_{{i_1}}}\left( x \right)\exp \left( {\lambda x} \right)dx}. 
\end{equation}

Note that ${\mu _{{i_l}}}\left( {0, \infty, \lambda } \right)$ gives the MGF.
\end{enumerate}
Note that the functions defined in (\ref{eq:CDF_MGF}), (\ref{eq:EDF_MGF}) and (\ref{eq:IntervalMGF}) are related as follows:
\small\begin{eqnarray}
 {\mu _{{i_l}}}\left( {{z_a},{z_b},\lambda } \right)&=& {c_{{i_l}}}\left( {{z_b},\lambda } \right) - {c_{{i_l}}}\left( {{z_a},\lambda } \right) \\ 
  &=& {e_{{i_l}}}\left( {{z_b},\lambda } \right) - {e_{{i_l}}}\left( {{z_a},\lambda } \right). \label{eq:Interrelation_of_Interval_MGF}
\end{eqnarray}\normalsize

\subsection{Simplifying Relationship}
\begin{enumerate}
\item[i)] Integral $J_m$:
\\
Based on the derivation given in Appendix~\ref{AP:B}, the integral $J_m$ defined as:
\small \begin{eqnarray} \label{eq:common_fnt_1}
J_m &=& \sum\limits_{\substack{ 
 {i_m},{i_{m + 1}}, \ldots ,{i_N} \\ 
 {i_m} \ne {i_{m + 1}} \ne  \cdots  \ne {i_N} \\ 
 {i_m} \ne {i_1},{i_2}, \ldots ,{i_{m - 1}} \\ 
 {i_{m + 1}} \ne {i_1},{i_2}, \ldots ,{i_{m - 1}} \\ 
 \vdots  \\ 
 {i_N} \ne {i_1},{i_2}, \ldots ,{i_{m - 1}} 
 }}^{1,2, \ldots ,N} {\int\limits_0^{{u_{m - 1}}} {d{u_m}{p_{{i_m}}}\left( {{u_m}} \right)\exp \left( {\lambda {u_m}} \right)\int\limits_0^{{u_m}} {d{u_{m + 1}}{p_{{i_{m + 1}}}}\left( {{u_{m + 1}}} \right)\exp \left( {\lambda {u_{m + 1}}} \right) } } }  \nonumber \\
&& \cdots \int\limits_0^{u_{N - 1}} d{u_N} p_{i_N}\left( {u_N} \right)\exp \left( \lambda {u_N} \right),
\end{eqnarray}\normalsize
can be simply expressed in terms of the function $c_{i_l}\left( \gamma,\lambda  \right)$ as
\begin{equation} \small \label{eq:CDF_MGF_multiple}
{J_m} = \sum\limits_{\left\{ {{i_m},{i_{m + 1}}, \ldots ,{i_N}} \right\} \in {{\mathop{\rm P}\nolimits} _{N - m + 1}}\left( {{I_N} - \left\{ {{i_1},{i_2}, \ldots ,{i_{m - 1}}} \right\}} \right)} {\prod\limits_{\scriptstyle l = m \atop 
  \scriptstyle \left\{ {{i_m},{i_{m + 1}}, \ldots ,{i_N}} \right\}}^N {{c_{{i_l}}}\left( {{u_{m - 1}},\lambda } \right)} }.
\end{equation}
In here, the complicated summation notation used in eq. (\ref{eq:common_fnt_1}) is simplified based on the following power set definition.
We define index set $I_N$ as $I_N=\left\{1,2,\cdots,N \right\}$. The subset of $I_N$ with $n$ $\left(n\le N \right)$ elements is denoted by $\mathcal{P}_n\left(I_N\right)$. The remaining index can be grouped in the set $I_N - \mathcal{P}_n\left(I_N\right)$. Based on these definitions, a summation in (\ref{eq:common_fnt_1}) includes all possible subsets of the index set $I_N$ $\left(I_N=\left\{i_1,i_2,\cdots,i_N \right\}\right)$ excluding the subset $\left\{ {{i_1},{i_2}, \ldots ,{i_{m - 1}}} \right\}$ with $N-\left(m-1\right)$ elements and these subsets with $N-\left(m-1\right)$ elements can be denoted by  $\mathcal{P}_{N-m+1}\left(I_N-\left\{ {{i_1},{i_2}, \ldots ,{i_{m - 1}}} \right\}\right)$.

\item[ii)]Integral $J'_m$:
\\
Following the similar derivation as given in Appendix~\ref{AP:C}, the integral $J'_m$, defined as
\small\begin{eqnarray}  \label{eq:common_fnt_2}
{{J'}_m} &=& \sum\limits_{\substack{
 {i_1},{i_2}, \ldots ,{i_m} \\ 
 {i_1} \ne {i_2} \ne  \cdots  \ne {i_m} \\ 
 {i_1} \ne {i_{m + 1}},{i_{m + 2}}, \ldots ,{i_N} \\ 
 {i_2} \ne {i_{m + 1}},{i_{m + 2}}, \ldots ,{i_N} \\ 
  \vdots  \\ 
 {i_m} \ne {i_{m + 1}},{i_{m + 2}}, \ldots ,{i_N} \\ 
 }}^{1,2, \ldots ,N} {\int\limits_{{u_{m + 1}}}^\infty  {d{u_m}{p_{{i_m}}}\left( {{u_m}} \right)\exp \left( {\lambda {u_m}} \right)\int\limits_{{u_m}}^\infty  {d{u_{m - 1}}{p_{{i_{m - 1}}}}\left( {{u_{m - 1}}} \right)\exp \left( {\lambda {u_{m - 1}}} \right)  } } } \nonumber \\
&&\cdots \int\limits_{{u_2}}^\infty  {d{u_1}{p_{{i_1}}}\left( {{u_1}} \right)\exp \left( {\lambda {u_1}} \right)},
\end{eqnarray}\normalsize
can be simply re-written in terms of the function $e_{i_l}\left( \gamma,\lambda \right)$ with the help of the definition of power set used in III-B-i) as
\begin{equation}\small\label{eq:EDF_MGF_multiple}
{{J'}_m} = \sum\limits_{\left\{ {{i_1},{i_2}, \ldots ,{i_m}} \right\} \in {{\mathop{\rm P}\nolimits} _m}\left( {{I_N} - \left\{ {{i_{m + 1}},{i_{m + 2}}, \ldots ,{i_N}} \right\}} \right)} {\prod\limits_{\scriptstyle l = 1 \atop 
  \scriptstyle \left\{ {{i_1},{i_2}, \ldots ,{i_m}} \right\}}^m {{e_{{i_l}}}\left( {{u_{m + 1}},\lambda } \right)} }.
\end{equation}

\item[iii)] Integral $J''_{a,b}$:
\\
Based on the derivation given in Appendix~\ref{AP:D}, the integral $J''_{a,b}$, defined as
\small\begin{eqnarray}  \label{eq:common_fnt_3}
{{J'}_{a,b}} &=& \sum\limits_{\substack{
 {i_{a + 1}}, \ldots ,{i_{b - 1}} \\ 
 {i_{a + 1}} \ne {i_{a + 2}} \ne  \cdots  \ne {i_{b - 1}} \\ 
 {i_{a + 1}} \ne {i_1}, \cdots ,{i_a},{i_b}, \ldots ,{i_N} \\ 
 {i_{a + 2}} \ne {i_1}, \cdots ,{i_a},{i_b}, \ldots ,{i_N} \\ 
  \vdots  \\ 
 {i_{b - 1}} \ne {i_1}, \cdots ,{i_a},{i_b}, \ldots ,{i_N} \\ 
 }}^{1,2, \ldots ,N} {\int\limits_{{u_b}}^{{u_a}} {d{u_{b - 1}}{p_{{i_{b - 1}}}}\left( {{u_{b - 1}}} \right)\exp \left( {\lambda {u_{b - 1}}} \right)\int\limits_{{u_{b - 1}}}^{{u_a}} {d{u_{b - 2}}{p_{{i_{b - 2}}}}\left( {{u_{b - 2}}} \right)\exp \left( {\lambda {u_{b - 2}}} \right)  } } } \nonumber \\
&&\cdots \int\limits_{{u_{a + 2}}}^{{u_a}} {d{u_{a + 1}}{p_{{i_{a + 1}}}}\left( {{u_{a + 1}}} \right)\exp \left( {\lambda {u_{a + 1}}} \right)},
\end{eqnarray}\normalsize
can be simply re-written in terms of the function $\mu\left(\cdot,\cdot\right)$ as
\begin{equation} \small\label{eq:IntervalMGF_multiple}
{{J''}_{a,b}} = \sum\limits_{\left\{ {{i_{a + 1}}, \ldots ,{i_{b - 1}}} \right\} \in {{\mathop{\rm P}\nolimits} _{b - a + 1}}\left( {{I_N} - \left\{ {{i_1}, \cdots ,{i_a},{i_b}, \ldots ,{i_N}} \right\}} \right)} {\prod\limits_{\scriptstyle l = a + 1 \atop 
  \scriptstyle \left\{ {{i_{a + 1}}, \ldots ,{i_{b - 1}}} \right\}}^{b - 1} {{\mu _{{i_l}}}\left( {{u_b},{u_a},\lambda } \right)} }\quad\quad\text{for }b>a.
\end{equation}
\end{enumerate}

\section{Sample Cases when All $N$ Ordered RVs are Considered}
\begin{theorem} \label{th:case1_1} (PDF of $\sum\limits_{n = 1}^N {u_n }$ among $N$ ordered RVs)\\
Let $Z_1 = \sum\limits_{n=1}^{N} {u_n}$ for convenience. We can derive the PDF of $Z=\left[Z_1\right]$ as
\small\begin{eqnarray}
{p_Z}\left( {{z_1}} \right) &=& L_{{S_1}}^{ - 1}\left\{ {{\mu _Z}\left( { - {S_1}} \right)} \right\} \nonumber \\
&=& \sum\limits_{\left\{ {{i_1},{i_2}, \ldots ,{i_N}} \right\} \in {{\mathop{\rm P}\nolimits} _N}\left( {{I_N}} \right)} {L_{{S_1}}^{ - 1}\left\{ {\prod\limits_{\scriptstyle l = 1 \atop 
  \scriptstyle \left\{ {{i_1},{i_2}, \ldots ,{i_N}} \right\}}^N {{c_{{i_l}}}\left( {\infty , - {S_1}} \right)} } \right\}},
\end{eqnarray}\normalsize
where $\mathcal{L}_{S_1 }^{ - 1}\{\cdot\}$ denotes the inverse Laplace transform with respect to $S_1$. 
\end{theorem}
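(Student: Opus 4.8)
The plan is to compute the joint MGF of the single partial sum $Z_1=\sum_{n=1}^N u_n$ directly from the $N$-dimensional joint PDF in \eqref{eq:m-joint_PDF_MRC}, and then recover $p_Z(z_1)$ by a one-dimensional inverse Laplace transform, following the two-step programme of Section~II. Since all $N$ ordered RVs are involved and each appears in the single (continuous) partial sum, no extra splitting or post-integration step is needed; this is the simplest instance of the framework.

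First I would write $\mu_Z(-S_1)=E\big[\exp(-S_1 Z_1)\big]=E\big[\exp(-S_1\sum_{n=1}^N u_n)\big]$ and substitute the joint PDF \eqref{eq:m-joint_PDF_MRC}, giving a sum over all permutations $(i_1,\dots,i_N)$ with distinct indices of the integral
\small\begin{equation}
\int_0^\infty\!\!\! du_1\, p_{i_1}(u_1)e^{-S_1 u_1}\!\!\int_0^{u_1}\!\!\! du_2\, p_{i_2}(u_2)e^{-S_1 u_2}\cdots\!\!\int_0^{u_{N-1}}\!\!\! du_N\, p_{i_N}(u_N)e^{-S_1 u_N},
\end{equation}\normalsize
where the nested limits $0\le u_N\le\cdots\le u_1<\infty$ encode the ordering. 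The key step is to evaluate this nested integral from the innermost variable outward: the innermost integral is $c_{i_N}(u_{N-1},-S_1)$ by \eqref{eq:CDF_MGF}, and then I would recognize the remaining object as exactly the integral $J_m$ of \eqref{eq:common_fnt_1} with $m=1$ (the outer integration over $u_1$ runs to $\infty$, i.e.\ the ``$u_0$'' upper limit is $\infty$). Invoking the simplifying relation \eqref{eq:CDF_MGF_multiple} with $m=1$ — for which the excluded index set $\{i_1,\dots,i_{m-1}\}$ is empty and $u_{m-1}=u_0=\infty$ — collapses the permutation sum and the nested integrals into
\small\begin{equation}
\mu_Z(-S_1)=\sum_{\{i_1,\dots,i_N\}\in \mathrm{P}_N(I_N)}\ \prod_{\substack{l=1\\ \{i_1,\dots,i_N\}}}^{N} c_{i_l}(\infty,-S_1).
\end{equation}\normalsize
Then I would apply $L_{S_1}^{-1}\{\cdot\}$ termwise — legitimate because the sum is finite — to obtain the claimed expression for $p_Z(z_1)$.

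The only genuine obstacle is bookkeeping in the passage from the constrained permutation sum in \eqref{eq:m-joint_PDF_MRC} to the power-set sum $\sum_{\{i_1,\dots,i_N\}\in\mathrm{P}_N(I_N)}$: one must verify that Appendix~\ref{AP:B}'s evaluation of $J_m$ indeed covers the boundary case $m=1$ with infinite outermost limit, and that after integration the result depends only on the \emph{set} $\{i_1,\dots,i_N\}$ rather than its order (which it does, since for $m=1$ this set is all of $I_N$ and the product $\prod_l c_{i_l}(\infty,-S_1)$ is symmetric). Everything else is a direct substitution of the definitions in \eqref{eq:CDF_MGF} and \eqref{eq:CDF_MGF_multiple} plus termwise inversion, so I expect no analytic difficulty beyond this indexing check.
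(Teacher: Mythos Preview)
Your proposal is correct and follows essentially the same route as the paper: write the MGF as the nested integral over the ordered region coming from \eqref{eq:m-joint_PDF_MRC}, collapse it via the $J_m$ identity \eqref{eq:CDF_MGF_multiple} (here with $m=1$ and upper limit $\infty$), and then invert termwise. Your extra remarks about the $m=1$ boundary case and the symmetry that turns the permutation sum into the power-set sum are valid sanity checks that the paper leaves implicit.
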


\begin{proof}
The MGF of $Z=\left[Z_1\right]$ is given by the expectation
\small \begin{eqnarray} \label{eq:MGF_of_pure_MRC_integralform}
MGF_Z\left( {{\lambda _1}} \right) &=& E\left\{ {\exp \left( {{\lambda _1}{z_1}} \right)} \right\} \nonumber
\\
&=& \sum\limits_{\substack{
   {{i_1},{i_2}, \cdots ,{i_N}}  \\
   {{i_1} \ne {i_2} \ne  \cdots  \ne {i_N}}  \\
}}^{1,2, \cdots ,N} {\int\limits_0^\infty  {d{u_1}{p_{{i_1}}}\left( {{u_1}} \right)\exp \left( {{\lambda _1}{u_1}} \right)\int\limits_0^{{u_1}} {d{u_2}{p_{{i_2}}}\left( {{u_2}} \right)\exp \left( {{\lambda _1}{u_2}} \right)} } } \nonumber
\\
&&\times  \cdots  \times \int\limits_0^{{u_{N - 1}}} {d{u_N}{p_{{i_N}}}\left( {{u_N}} \right)\exp \left( {{\lambda _1}{u_N}} \right)},
\end{eqnarray} \normalsize
where ${\rm E}\left\{ \cdot \right\}$ denotes the expectation operator.
By applying (\ref{eq:CDF_MGF_multiple}), we can obtain the MGF of $Z_1 = \sum_{n=1}^{N} {u_n}$ as
\small\begin{eqnarray} \label{eq:MGF_of_pure_MRC}
MGF_Z\left( {{\lambda _1}} \right) = \sum\limits_{\left\{ {{i_1},{i_2}, \ldots ,{i_N}} \right\} \in {{\mathop{\rm P}\nolimits} _N}\left( {{I_N}} \right)} {\prod\limits_{\scriptstyle l = 1 \atop 
  \scriptstyle \left\{ {{i_1},{i_2}, \ldots ,{i_N}} \right\}}^N {{c_{{i_l}}}\left( {\infty ,{\lambda _1}} \right)} }.
\end{eqnarray} \normalsize
Therefore, we can derive the PDF of $Z_1 = \sum_{m=1}^{N} {u_{n}}$ by applying the inverse Laplace transform as
\small\begin{eqnarray} 
{p_Z}\left( {{z_1}} \right) &=& L_{{S_1}}^{ - 1}\left\{ {{\mu _Z}\left( { - {S_1}} \right)} \right\} \nonumber
\\
&=& \sum\limits_{\left\{ {{i_1},{i_2}, \ldots ,{i_N}} \right\} \in {{\mathop{\rm P}\nolimits} _N}\left( {{I_N}} \right)} {L_{{S_1}}^{ - 1}\left\{ {\prod\limits_{\scriptstyle l = 1 \atop 
  \scriptstyle \left\{ {{i_1},{i_2}, \ldots ,{i_N}} \right\}}^N {{c_{{i_l}}}\left( {\infty , - {S_1}} \right)} } \right\}}.
\end{eqnarray}\normalsize
\end{proof}

\begin{theorem} \label{th:case1_2}({Joint PDF of $\sum\limits_{n = 1}^m {u_n }$ and $\sum\limits_{n = m + 1}^N {u_n }$})
\\
Let $Z_1  = \sum\limits_{n = 1}^m {u_n }$ and $Z_2  = \sum\limits_{n = m + 1}^N {u_n }$ for convenience, then we can derive the 2-dimensional joint PDF of $Z=\left[Z_1, Z_2 \right]$  as
\small\begin{eqnarray} \label{eq:joint_PDF_3}
{p_Z}\left( {{z_1},{z_2}} \right) \!\!&=& \!\!L_{{S_1},{S_2}}^{ - 1}\left\{ {{\mu _Z}\left( { - {S_1}, - {S_2}} \right)} \right\} \nonumber \\
&=& \!\!\sum\limits_{{i_m} = 1}^N {\int\limits_0^\infty  {d{u_m}{p_{{i_m}}}\left( {{u_m}} \right)} } \sum\limits_{\left\{ {{i_1}, \ldots ,{i_{m - 1}}} \right\} \in {{\mathop{\rm P}\nolimits} _{m - 1}}\left( {{I_N} - \left\{ {{i_m}} \right\}} \right)} {L_{{S_1}}^{ - 1}\left\{ {\prod\limits_{\scriptstyle k = 1 \atop 
  \scriptstyle \left\{ {{i_1}, \ldots ,{i_{m - 1}}} \right\}}^{m - 1}\!\! {{e_{{i_k}}}\left( {{u_m}, - {S_1}} \right)\exp \left( { - {S_1}{u_m}} \right)} } \right\}} \nonumber \\
&&\!\!\times \sum\limits_{\left\{ {{i_{m + 1}}, \ldots ,{i_N}} \right\} \in {{\mathop{\rm P}\nolimits} _{N - m}}\left( {{I_N} - \left\{ {{i_m}} \right\} - \left\{ {{i_1}, \ldots ,{i_{m - 1}}} \right\}} \right)} {L_{{S_2}}^{ - 1}\left\{ {\prod\limits_{\scriptstyle l = m + 1 \atop 
  \scriptstyle \left\{ {{i_{m + 1}}, \ldots ,{i_N}} \right\}}^N \!\!{{c_{{i_l}}}\left( {{u_m}, - {S_2}} \right)} } \right\}} \nonumber
\\
&& {\rm for}\; z_1\ge \frac{m}{N-m}z_2.
\end{eqnarray}\normalsize
\end{theorem}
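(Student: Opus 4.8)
The plan is to mirror the two-step approach already used for Theorem~\ref{th:case1_1}, but now accounting for the split of the $N$ ordered RVs into the ``top'' block $u_1,\ldots,u_m$ (feeding $Z_1$) and the ``bottom'' block $u_{m+1},\ldots,u_N$ (feeding $Z_2$). First I would write the joint MGF of $(Z_1,Z_2)$ as the expectation
\small\begin{equation*}
MGF_Z(\lambda_1,\lambda_2) = E\left\{\exp(\lambda_1 z_1 + \lambda_2 z_2)\right\},
\end{equation*}\normalsize
and expand it using the $N$-dimensional joint PDF \eqref{eq:m-joint_PDF_MRC}, so that each $p_{i_n}(u_n)$ for $n\le m$ picks up a factor $\exp(\lambda_1 u_n)$ and each $p_{i_n}(u_n)$ for $n>m$ picks up $\exp(\lambda_2 u_n)$, with the nested integration limits $0\le u_N\le\cdots\le u_1<\infty$ inherited from the ordering.

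Next I would isolate the pivot variable $u_m$, which is the hinge between the two blocks. Conditioning on $i_m$ and $u_m$: the inner integrals over $u_{m+1},\ldots,u_N$ range over $[0,u_m]$ with weight $\exp(\lambda_2\cdot)$ and are exactly of the form handled by the simplifying relation \eqref{eq:CDF_MGF_multiple} (the integral $J_m$), so they collapse to a sum over $\{i_{m+1},\ldots,i_N\}\in\mathcal{P}_{N-m}$ of products of $c_{i_l}(u_m,\lambda_2)$. Meanwhile the integrals over $u_1,\ldots,u_{m-1}$ range over $[u_m,\infty)$ in the reversed (exceedance) direction with weight $\exp(\lambda_1\cdot)$; reindexing them as a chain $u_{m-1}\in[u_m,\infty)$, $u_{m-2}\in[u_{m-1},\infty)$, etc.\ puts them in the form of $J'_m$ in \eqref{eq:common_fnt_2}, which collapses via \eqref{eq:EDF_MGF_multiple} to a sum over $\{i_1,\ldots,i_{m-1}\}\in\mathcal{P}_{m-1}$ of products of $e_{i_k}(u_m,\lambda_1)$. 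The leftover factor $\exp(\lambda_1 u_m)$ stays attached to $u_m$ since $u_m$ itself belongs to $Z_1$. This yields the joint MGF as a single outer integral over $u_m$ times two nested power-set sums of products of $e$'s and $c$'s.

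Finally, in Step~II I would apply the two-dimensional inverse Laplace transform $L^{-1}_{S_1,S_2}\{\mu_Z(-S_1,-S_2)\}$. Because the $S_1$-dependence ($e$'s and the $\exp(-S_1 u_m)$ factor) and the $S_2$-dependence (the $c$'s) are multiplicatively separated inside the $u_m$-integral, the inverse transform factors into a product of two one-dimensional inverse transforms, giving precisely the stated expression. The support condition $z_1\ge\frac{m}{N-m}z_2$ should be argued separately: since $u_1\ge\cdots\ge u_m\ge u_{m+1}\ge\cdots\ge u_N$, each of the $m$ terms in $Z_1$ is at least $u_m$ and each of the $N-m$ terms in $Z_2$ is at most $u_m$, hence $Z_1\ge m u_m\ge \frac{m}{N-m}(N-m)u_m\ge\frac{m}{N-m}Z_2$.

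\textbf{Main obstacle.} The delicate point is the careful bookkeeping of the index constraints when splitting at $u_m$: after fixing $i_m$, the remaining indices must be partitioned so that $\{i_1,\ldots,i_{m-1}\}$ and $\{i_{m+1},\ldots,i_N\}$ are disjoint subsets of $I_N\setminus\{i_m\}$, and one must verify that summing over all such ordered choices reproduces exactly the original $i_1\ne i_2\ne\cdots\ne i_N$ summation — this is where the power-set reformulation of $J_m$ and $J'_m$ does the real work, and where an off-by-one in the cardinalities ($\mathcal{P}_{m-1}$ versus $\mathcal{P}_m$, and the nesting of the ``remaining index'' sets) would be easy to make. Matching the reversed integration order in the top block to the direction convention of \eqref{eq:common_fnt_2} also requires care.
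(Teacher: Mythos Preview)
Your proposal is correct and follows essentially the same approach as the paper: write the joint MGF as the nested integral from \eqref{eq:m-joint_PDF_MRC}, isolate the pivot $u_m$ (this is the paper's Appendix~\ref{AP:E}, which invokes \cite[Eq.~(2)]{kn:unified_approach} to reverse the top-block integration order), collapse the bottom and top blocks via \eqref{eq:CDF_MGF_multiple} and \eqref{eq:EDF_MGF_multiple} respectively, and then invert in $S_1,S_2$ separately. Your explicit argument for the support condition $z_1\ge\frac{m}{N-m}z_2$ is a welcome addition that the paper merely asserts.
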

\begin{proof}
The second order MGF of $Z=\left[Z_1,Z_2\right]$ is given by the expectation
\small\begin{eqnarray} \label{eq:joint_MGF_7_integralform}
MGF_Z\left( {{\lambda _1},{\lambda _2}} \right) &=& \sum\limits_{\substack{
   {{i_1},{i_2}, \cdots ,{i_N}}  \\
   {{i_1} \ne {i_2} \ne  \cdots  \ne {i_N}}  \\
}}^{1,2, \cdots ,N} {\int\limits_0^\infty  {d{u_1}{p_{{i_1}}}\left( {{u_1}} \right)\exp \left( {{\lambda _1}{u_1}} \right) \cdots \int\limits_0^{{u_{m - 1}}} {d{u_m}{p_{{i_m}}}\left( {{u_m}} \right)\exp \left( {{\lambda _1}{u_m}} \right)} } } \nonumber \\
&&\times \int\limits_0^{{u_m}} {d{u_{m + 1}}{p_{{i_{m + 1}}}}\left( {{u_{m + 1}}} \right)\exp \left( {{\lambda _2}{u_{m + !}}} \right) \cdots \int\limits_0^{{u_{N - 1}}} {d{u_N}{p_{{i_N}}}\left( {{u_N}} \right)\exp \left( {{\lambda _2}{u_N}} \right)} }.
\end{eqnarray}\normalsize
We show in Appendix~\ref{AP:E} that by applying (\ref{eq:CDF_MGF_multiple}) and \cite[Eq. (2)]{kn:unified_approach} and then (\ref{eq:EDF_MGF_multiple}), we can obtain the second order MGF of $Z$ as
\small\begin{eqnarray} \label{eq:joint_MGF_2}
MGF_Z \left( {\lambda _1 ,\lambda _2 } \right) \!\!\!&=& \!\!\!\sum\limits_{{i_m} = 1}^N {\int\limits_0^\infty  {d{u_m}{p_{{i_m}}}\left( {{u_m}} \right)\exp \left( {{\lambda _1}{u_m}} \right)} } \nonumber \\
&&\!\!\times \sum\limits_{\left\{ {{i_1}, \ldots ,{i_{m - 1}}} \right\} \in {{\mathop{\rm P}\nolimits} _{m - 1}}\left( {{I_N} - \left\{ {{i_m}} \right\}} \right)} {\prod\limits_{\scriptstyle k = 1 \atop 
  \scriptstyle \left\{ {{i_1}, \ldots ,{i_{m - 1}}} \right\}}^{m - 1} {{e_{{i_k}}}\left( {{u_m},{\lambda _1}} \right)} } \nonumber \\
&&\!\!\times \sum\limits_{\left\{ {{i_{m + 1}}, \ldots ,{i_N}} \right\} \in {{\mathop{\rm P}\nolimits} _{N - m}}\left( {{I_N} - \left\{ {{i_m}} \right\} - \left\{ {{i_1}, \ldots ,{i_{m - 1}}} \right\}} \right)} {\prod\limits_{\scriptstyle l = m + 1 \atop 
  \scriptstyle \left\{ {{i_{m + 1}}, \ldots ,{i_N}} \right\}}^N {{c_{{i_l}}}\left( {{u_m},{\lambda _2}} \right)} }.
\end{eqnarray}\normalsize

Again, letting $\lambda _1  =  - S_1$ and $\lambda _2  =  - S_2$, we can obtain the desired 2-dimensional joint PDF of $Z_1  = \sum\limits_{n = 1}^m {u_n }$ and $Z_2  = \sum\limits_{n = m + 1}^N {u_n }$ by applying the inverse Laplace transform as
\small\begin{eqnarray}
{p_Z}\left( {{z_1},{z_2}} \right) \!\!\!&=& \!\!\!L_{{S_1},{S_2}}^{ - 1}\left\{ {{\mu _Z}\left( { - {S_1}, - {S_2}} \right)} \right\} \nonumber \\
&=& \!\!\!\sum\limits_{{i_m} = 1}^N {\int\limits_0^\infty  {d{u_m}{p_{{i_m}}}\left( {{u_m}} \right)} } \sum\limits_{\left\{ {{i_1}, \ldots ,{i_{m - 1}}} \right\} \in {{\mathop{\rm P}\nolimits} _{m - 1}}\left( {{I_N} - \left\{ {{i_m}} \right\}} \right)} {L_{{S_1}}^{ - 1}\left\{ {\prod\limits_{\scriptstyle k = 1 \atop 
  \scriptstyle \left\{ {{i_1}, \ldots ,{i_{m - 1}}} \right\}}^{m - 1} {{e_{{i_k}}}\left( {{u_m}, - {S_1}} \right)\exp \left( { - {S_1}{u_m}} \right)} } \right\}} \nonumber \\
&&\!\!\!\times \sum\limits_{\left\{ {{i_{m + 1}}, \ldots ,{i_N}} \right\} \in {{\mathop{\rm P}\nolimits} _{N - m}}\left( {{I_N} - \left\{ {{i_m}} \right\} - \left\{ {{i_1}, \ldots ,{i_{m - 1}}} \right\}} \right)} {L_{{S_2}}^{ - 1}\left\{ {\prod\limits_{\scriptstyle l = m + 1 \atop 
  \scriptstyle \left\{ {{i_{m + 1}}, \ldots ,{i_N}} \right\}}^N {{c_{{i_l}}}\left( {{u_m}, - {S_2}} \right)} } \right\}}.
\end{eqnarray}\normalsize
\end{proof}

\begin{theorem} (Joint PDF of $u_m$ and $\sum\limits_{\scriptstyle n = 1 \hfill \atop \scriptstyle n \ne m \hfill}^N {u_n } $)
\\
Let ${Z_1} = {u_m}$ and ${Z_2} = \sum\limits_{\scriptstyle n = 1 \atop \scriptstyle n \ne m}^N {{u_n}}$ for convenience. We can obtain the 2-dimensional joint PDF of $Z=\left[Z_1, Z_2 \right]$ as
\small\begin{eqnarray}
\!\!\!\!\!\!\!\!\!\!\!\!\!\!\!\!\!\!\!\!\!\!\!\!\!\!\!\!\!\!\!\!&&{p_Z}\left( {{z_1},{z_2}} \right) = L_{{S_1},{S_2}}^{ - 1}\left\{ {{\mu _Z}\left( { - {S_1}, - {S_2}} \right)} \right\} \nonumber \\
\!\!\!\!\!\!\!\!\!\!\!\!\!\!\!\!\!\!\!\!\!\!\!\!\!\!\!\!\!\!\!\!&&= \sum\limits_{{i_m} = 1}^N {\int\limits_0^\infty  {d{u_m}{p_{{i_m}}}\left( {{u_m}} \right)L_{{S_1}}^{ - 1}\left\{ {\exp \left( { - {S_1}{u_m}} \right)} \right\}} } \sum\limits_{\left\{ {{i_1}, \ldots ,{i_{m - 1}}} \right\} \in {{\mathop{\rm P}\nolimits} _{m - 1}}\left( {{I_N} - \left\{ {{i_m}} \right\}} \right)}  \nonumber \\
\!\!\!\!\!\!\!\!\!\!\!\!\!\!\!\!\!\!\!\!\!\!\!\!\!\!\!\!\!\!\!\!
&&\quad\times {\sum\limits_{\left\{ {{i_{m + 1}}, \ldots ,{i_N}} \right\} \in {{\mathop{\rm P}\nolimits} _{N - m}}\left( {{I_N} - \left\{ {{i_m}} \right\} - \left\{ {{i_1}, \ldots ,{i_{m - 1}}} \right\}} \right)} {L_{{S_2}}^{ - 1}\left\{ {\prod\limits_{\scriptstyle k = 1 \atop 
  \scriptstyle \left\{ {{i_1}, \ldots ,{i_{m - 1}}} \right\}}^{m - 1} {{e_{{i_k}}}\left( {{u_m}, - {S_2}} \right)} \prod\limits_{\scriptstyle l = m + 1 \atop 
  \scriptstyle \left\{ {{i_{m + 1}}, \ldots ,{i_N}} \right\}}^N {{c_{{i_l}}}\left( {{u_m}, - {S_2}} \right)} } \right\}} } .
\end{eqnarray}\normalsize

\end{theorem}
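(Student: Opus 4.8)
The plan is to mirror the structure of the proof of Theorem~\ref{th:case1_2}, since the present statement is the limiting case $m\to 1$ of that theorem with the first partial sum degenerating to a single ordered variable $u_m$ (more precisely, it is the analogue where the isolated variable sits in the ``middle'' rather than at the top). First I would write down the second-order MGF of $Z=\left[Z_1,Z_2\right]=\left[u_m,\sum_{n\ne m}u_n\right]$ as an expectation, starting from the $N$-dimensional joint PDF~(\ref{eq:m-joint_PDF_MRC}). Because $u_m$ is singled out, the ordering constraint $u_1\ge\cdots\ge u_N$ splits the nested integral into three blocks: the variables $u_1,\ldots,u_{m-1}$ which all exceed $u_m$, the pivot variable $u_m$ itself, and the variables $u_{m+1},\ldots,u_N$ which all lie below $u_m$. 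Thus I would write
\small\begin{eqnarray}
MGF_Z\left(\lambda_1,\lambda_2\right) &=& \sum\limits_{\substack{i_1,\ldots,i_N\\ i_1\ne\cdots\ne i_N}}^{1,\ldots,N} \int\limits_0^\infty du_m\, p_{i_m}\left(u_m\right)\exp\left(\lambda_1 u_m\right) \nonumber \\
&&\times \left[\int\limits_{u_m}^\infty du_{m-1}\,p_{i_{m-1}}\left(u_{m-1}\right)\exp\left(\lambda_2 u_{m-1}\right)\cdots\int\limits_{u_2}^\infty du_1\,p_{i_1}\left(u_1\right)\exp\left(\lambda_2 u_1\right)\right] \nonumber \\
&&\times \left[\int\limits_0^{u_m} du_{m+1}\,p_{i_{m+1}}\left(u_{m+1}\right)\exp\left(\lambda_2 u_{m+1}\right)\cdots\int\limits_0^{u_{N-1}} du_N\,p_{i_N}\left(u_N\right)\exp\left(\lambda_2 u_N\right)\right]. \nonumber
\end{eqnarray}\normalsize

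Next I would peel off the sum over $i_m$ as an outer summation from $1$ to $N$, fix the pivot integral over $u_m$, and treat the two bracketed nested integrals separately. The ``upper'' block is exactly of the form of the integral $J'_m$ in~(\ref{eq:common_fnt_2}) with lower limit $u_m$, so by~(\ref{eq:EDF_MGF_multiple}) it collapses to $\sum_{\{i_1,\ldots,i_{m-1}\}\in\mathcal{P}_{m-1}(I_N-\{i_m\})}\prod_{l=1}^{m-1} e_{i_l}\left(u_m,\lambda_2\right)$; the ``lower'' block is exactly of the form $J_m$ in~(\ref{eq:common_fnt_1}) with upper limit $u_m$ restricted to the remaining indices, so by~(\ref{eq:CDF_MGF_multiple}) it collapses to $\sum_{\{i_{m+1},\ldots,i_N\}\in\mathcal{P}_{N-m}(I_N-\{i_m\}-\{i_1,\ldots,i_{m-1}\})}\prod_{l=m+1}^{N} c_{i_l}\left(u_m,\lambda_2\right)$. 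Assembling these gives the compact joint MGF
\small\begin{eqnarray}
MGF_Z\left(\lambda_1,\lambda_2\right) &=& \sum\limits_{i_m=1}^N \int\limits_0^\infty du_m\, p_{i_m}\left(u_m\right)\exp\left(\lambda_1 u_m\right) \sum\limits_{\left\{i_1,\ldots,i_{m-1}\right\}\in\mathcal{P}_{m-1}\left(I_N-\left\{i_m\right\}\right)} \prod\limits_{\substack{k=1\\ \left\{i_1,\ldots,i_{m-1}\right\}}}^{m-1} e_{i_k}\left(u_m,\lambda_2\right) \nonumber \\
&&\times \sum\limits_{\left\{i_{m+1},\ldots,i_N\right\}\in\mathcal{P}_{N-m}\left(I_N-\left\{i_m\right\}-\left\{i_1,\ldots,i_{m-1}\right\}\right)} \prod\limits_{\substack{l=m+1\\ \left\{i_{m+1},\ldots,i_N\right\}}}^{N} c_{i_l}\left(u_m,\lambda_2\right). \nonumber
\end{eqnarray}\normalsize

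Finally I would set $\lambda_1=-S_1$, $\lambda_2=-S_2$ and apply the two-dimensional inverse Laplace transform. The key observation is that the pivot factor $\exp\left(\lambda_1 u_m\right)=\exp\left(-S_1 u_m\right)$ is the only place $S_1$ appears, so $L_{S_1}^{-1}$ acts on it alone and returns $\delta\left(z_1-u_m\right)$-type behaviour — in the notation of the theorem, $L_{S_1}^{-1}\{\exp\left(-S_1 u_m\right)\}$; meanwhile $L_{S_2}^{-1}$ acts on the product $\prod_k e_{i_k}\left(u_m,-S_2\right)\prod_l c_{i_l}\left(u_m,-S_2\right)$, which is a function of $S_2$ only (with $u_m$ a parameter). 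Pushing both inverse transforms inside the finite sums and the $u_m$-integral yields precisely the claimed expression. The only genuinely delicate point — and the step I expect to be the main obstacle — is justifying the interchange of the order of integration in the original nested integral so that the ``upper'' and ``lower'' blocks factor cleanly around the pivot $u_m$, and correctly bookkeeping the index-exclusion constraints so that the three power sets partition $I_N$ as $\{i_m\}\cup\{i_1,\ldots,i_{m-1}\}\cup\{i_{m+1},\ldots,i_N\}$; this is where the i.n.d. case differs from the i.i.d. case of~\cite{kn:unified_approach}, since the summands no longer collapse by symmetry and one must track which PDF goes with which order index. I would relegate the detailed interchange-of-integration argument to an appendix along the lines of Appendix~\ref{AP:E}, invoking the nonnegativity of the integrands and Tonelli's theorem to license the rearrangement.
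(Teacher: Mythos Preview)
Your proposal is correct and follows essentially the same route as the paper: the paper's proof is in fact a one-line ``Similarly to {\it Theorem \ref{th:case1_1}} and {\it \ref{th:case1_2}}, by applying (\ref{eq:CDF_MGF_multiple}), \cite[Eq.~(2)]{kn:unified_approach} and (\ref{eq:EDF_MGF_multiple})\ldots Detailed derivation is omitted,'' and you have accurately reconstructed exactly those omitted details, including the pivot-at-$u_m$ rearrangement and the correct index bookkeeping via the power-set notation.
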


\begin{proof}
Similarly to {\it Theorem \ref{th:case1_1}} and {\it \ref{th:case1_2}}, by applying (\ref{eq:CDF_MGF_multiple}), \cite[Eq. (2)]{kn:unified_approach} and (\ref{eq:EDF_MGF_multiple}), we can obtain the second order MGF of $Z_1  = u_m $ and $Z_2  = \sum\limits_{\scriptstyle n = 1 \hfill \atop \scriptstyle n \ne m \hfill}^N {u_n }$. Detailed derivation is omitted.
\end{proof}

\section{Sample cases when only $N_s$ ordered RVs are considered}
Let us now consider the cases where only the best $N_s$($\le N$) ordered RVs are involved.

\begin{theorem} \label{th:case2_1} ({PDF of $\sum\limits_{n = 1}^{N_s } {u_n }$}, $N_s\ge 2$)
\\
Let $Z' = \sum\limits_{n=1}^{N_s} {u_n}$ for convenience, then we can derive the PDF of $Z'$ as
\small\begin{eqnarray} \label{eq:PDF_of_pure_GSC}
p_{Z'} \left( x \right) = p_{\sum_{n = 1}^{N_s } {u_n } } \left( x \right)=
\int_0^{\frac{x}{{N_s }}} {p_Z \left( {x - z_2 ,z_2 } \right)dz_2 } & \text{for } N_s \ge 2, 
\end{eqnarray}\normalsize
\end{theorem}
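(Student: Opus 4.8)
The plan is to obtain the PDF of the single partial sum $Z' = \sum_{n=1}^{N_s} u_n$ by starting from the two-dimensional joint PDF already available from an earlier result, namely the joint PDF of $\left[Z_1, Z_2\right]$ where $Z_1 = \sum_{n=1}^{N_s} u_n$ and $Z_2 = \sum_{n=N_s+1}^{N} u_n$ (the $N_s$-versus-$(N-N_s)$ split analogue of {\it Theorem \ref{th:case1_2}}). The key observation is that when only the best $N_s$ ordered RVs are of interest, the remaining $N - N_s$ variables $u_{N_s+1}, \ldots, u_N$ enter the joint PDF of \eqref{eq:m-joint_PDF_GSC} only through the CDF factors $P_{i_j}(u_{N_s})$ evaluated at $u_{N_s}$; equivalently, the desired marginal PDF of $Z'$ is the marginal of $Z_1$ obtained by integrating the two-dimensional joint PDF $p_Z(z_1, z_2)$ over the admissible range of $z_2$. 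Thus the statement should follow by the marginalization identity $p_{Z'}(x) = \int p_Z(x - z_2, z_2)\, dz_2$, once the correct limits of integration are pinned down.

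First I would write down the joint PDF $p_Z(z_1,z_2)$ of $\left[Z_1,Z_2\right]$, using the same MGF-then-inverse-Laplace machinery as in {\it Theorem \ref{th:case1_2}}: apply \eqref{eq:CDF_MGF_multiple} to collapse the innermost $N - N_s$ integrals over $u_{N_s+1},\ldots,u_N$ into a product of $c_{i_l}(u_{N_s}, \lambda_2)$ factors, then \cite[Eq. (2)]{kn:unified_approach} to reorganize the ordering constraints, then \eqref{eq:EDF_MGF_multiple} for the factors attached to $u_1,\ldots,u_{N_s-1}$, producing a compact joint MGF in $(\lambda_1,\lambda_2)$; setting $\lambda_1 = -S_1$, $\lambda_2 = -S_2$ and inverting gives $p_Z(z_1,z_2)$. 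Next I would substitute $z_1 = x - z_2$ (so that $Z_1 + Z_2$ is not what we fix — rather we fix $Z_1 = Z'$ and integrate out $Z_2$), i.e. compute $p_{Z'}(x) = \int_0^{\,\cdot\,} p_Z(x, z_2)\, dz_2$. Wait — more precisely, since $Z' = Z_1$ directly, the marginal is $p_{Z'}(x) = \int_{0}^{z_2^{\max}(x)} p_Z(x, z_2)\, dz_2$; the argument $x - z_2$ appearing in the statement reflects the particular form in which $p_Z$ is parametrized there, so I would reconcile the two by matching the support constraint.

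The main obstacle — and the only genuinely non-routine step — is determining the correct upper limit of integration, namely establishing that $z_2$ ranges over $\left[0, x/N_s\right]$. This comes from the ordering constraint $u_1 \ge u_2 \ge \cdots \ge u_{N_s} \ge u_{N_s+1} \ge \cdots \ge u_N \ge 0$: since all of $u_{N_s+1},\ldots,u_N$ are bounded above by $u_{N_s}$, and $u_{N_s}$ is in turn the smallest of the first $N_s$ variables whose sum is $x$, one has $u_{N_s} \le x/N_s$, hence $Z_2 = \sum_{n=N_s+1}^N u_n \le (N-N_s)\, u_{N_s}$, while also each term of $Z_2$ is at most $u_{N_s} \le x/N_s$. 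I would need to argue carefully that the tightest bound making the joint density $p_Z(x,z_2)$ nonzero is $z_2 \le x/N_s$ (consistent with the support condition $z_1 \ge \tfrac{m}{N-m} z_2$ appearing in {\it Theorem \ref{th:case1_2}} with $m = N_s$, which rearranges to $z_2 \le \tfrac{N-N_s}{N_s} z_1$ — and combined with $z_1 + z_2 = $ total does not directly give $x/N_s$, so the relevant constraint here is instead $u_{N_s} \le z_1/N_s$ forcing $z_2 \le \tfrac{N-N_s}{N_s}\cdot\tfrac{z_1}{1}$, which at the extreme $z_1 = x$ needs reconciling). Once the support $z_2 \in \left[0, x/N_s\right]$ is justified, the result \eqref{eq:PDF_of_pure_GSC} follows immediately by Fubini, and the remaining computation is the routine substitution of the explicit $p_Z$ from the preceding theorem into the integral.
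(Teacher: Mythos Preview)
Your decomposition is not the one used in the theorem, and this mismatch is the source of your confusion about both the integrand $p_Z(x-z_2,z_2)$ and the upper limit $x/N_s$. The joint PDF $p_Z(z_1,z_2)$ appearing in \eqref{eq:PDF_of_pure_GSC} is \emph{not} the joint density of $\bigl(\sum_{n=1}^{N_s}u_n,\ \sum_{n=N_s+1}^{N}u_n\bigr)$; it is the joint density of $Z_1=\sum_{n=1}^{N_s-1}u_n$ and $Z_2=u_{N_s}$, as can be read off from \eqref{eq:PDF_of_pure_GSC_m} (note the factor $L_{S_2}^{-1}\{\exp(-S_2 u_{N_s})\}$, which yields a Dirac delta and identifies $z_2$ with $u_{N_s}$, and the product $\prod P_{i_k}(u_{N_s})$, which is what remains after the tail variables $u_{N_s+1},\ldots,u_N$ have already been integrated out). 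With that split, $Z'=Z_1+Z_2$ genuinely, so $p_{Z'}(x)=\int p_Z(x-z_2,z_2)\,dz_2$ is the correct convolution-type identity, and the upper limit is immediate: $z_2=u_{N_s}$ is the smallest of $N_s$ nonnegative numbers summing to $x$, hence $z_2\le x/N_s$.

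Your alternative split would in principle also give a valid route to $p_{Z'}$ --- one could take $Z_1=\sum_{n=1}^{N_s}u_n$, $Z_2=\sum_{n=N_s+1}^{N}u_n$ as in {\it Theorem~\ref{th:case1_2}} with $m=N_s$, and then marginalize out $Z_2$ --- but then $Z'=Z_1$, the integrand would be $p_Z(x,z_2)$ (not $p_Z(x-z_2,z_2)$), and the support constraint would be $z_2\le \frac{N-N_s}{N_s}x$, not $x/N_s$. Your attempt to force the limit $x/N_s$ onto that decomposition cannot succeed, which is exactly the inconsistency you ran into. So the gap is not in the analysis but in identifying which two-dimensional object $p_Z$ refers to; once you separate out $u_{N_s}$ alone (as the paper does, precisely because this is the boundary variable on which the CDF factors from the discarded paths depend), everything you wrote about the MGF/inverse-Laplace machinery goes through and the limits fall out with no work.
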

where
\small\begin{eqnarray} \label{eq:PDF_of_pure_GSC_m}
{p_Z}\left( {{z_1},{z_2}} \right) &=& L_{{S_1},{S_2}}^{ - 1}\left\{ {{\mu _Z}\left( { - {S_1}, - {S_2}} \right)} \right\} \nonumber \\
&=& \sum\limits_{{i_{{N_s}}} = 1}^N {\int\limits_0^\infty  {d{u_{{N_s}}}{p_{{i_{{N_s}}}}}\left( {{u_{{N_s}}}} \right)L_{{S_2}}^{ - 1}\left\{ {\exp \left( { - {S_2}{u_{{N_s}}}} \right)} \right\}\sum\limits_{\substack{
 {i_{{N_s} + 1}}, \ldots ,{i_N} \\ 
 {i_{{N_s} + 1}} \ne  \cdots  \ne {i_N} \\ 
 {i_{{N_s} + 1}} \ne {i_{{N_s}}} \\ 
  \vdots  \\ 
 {i_N} \ne {i_{{N_s}}} \\ 
 }}^{1,2, \ldots ,N} {\prod\limits_{\scriptstyle k = {N_s} + 1 \atop 
  \scriptstyle \left\{ {{i_{{N_s} + 1}}, \ldots ,{i_N}} \right\}}^N {{P_{{i_k}}}\left( {{u_{{N_s}}}} \right)} } } } \nonumber \\
&&\times \sum\limits_{\left\{ {{i_1}, \ldots ,{i_{{N_s} - 1}}} \right\} \in {{\mathop{\rm P}\nolimits} _{{N_s} - 1}}\left( {{I_N} - \left\{ {{i_{{N_s}}}} \right\} - \left\{ {{i_{{N_s} + 1}}, \ldots ,{i_N}} \right\}} \right)} {L_{{S_1}}^{ - 1}\left\{ {\prod\limits_{\scriptstyle l = 1 \atop 
  \scriptstyle \left\{ {{i_1}, \ldots ,{i_{{N_s} - 1}}} \right\}}^{{N_s} - 1} {{e_{{i_l}}}\left( {{u_{{N_s}}}, - {S_1}} \right)} } \right\}}.
\end{eqnarray}\normalsize

\begin{proof}
We only need to consider $u_{N_s}$ separately in this case. 
Let $Z_1=\sum\limits_{n=1}^{N_s-1} {u_n}$ and $Z_2=u_{N_s}$. The target second order MGF of $Z=[Z_1, Z_2]$ is given by the expectation in 
\small\begin{eqnarray} \label{eq:MGF_of_pure_GSC_integralform}
MGF_Z \left( {\lambda _{1,} \lambda _2 } \right) &=& {\rm E}\left\{ {\exp \left( {\lambda _1 z_1  + \lambda _2 z_2 } \right)} \right\} \nonumber \\
&=&\sum\limits_{\substack{
   {{i_1},{i_2}, \cdots ,{i_N}}  \\
   {{i_1} \ne {i_2} \ne  \cdots  \ne {i_N}}  \\
}}^{1,2, \cdots ,N} {\int\limits_0^\infty  {d{u_1}{p_{{i_1}}}\left( {{u_1}} \right)\exp \left( {{\lambda _1}{u_1}} \right) \cdots \int\limits_0^{{u_{{N_s} - 2}}} {d{u_{{N_s} - 1}}{p_{{i_{{N_s} - 1}}}}\left( {{u_{{N_s} - 1}}} \right)\exp \left( {{\lambda _1}{u_{{N_s} - 1}}} \right)} } } \nonumber \\
&&\times \int\limits_0^{{u_{{N_s} - 1}}} {d{u_{{N_s}}}{p_{{i_{{N_s}}}}}\left( {{u_{{N_s}}}} \right)\exp \left( {{\lambda _2}{u_{{N_s}}}} \right)\prod\limits_{j = {N_s} + 1}^N {{P_{{i_j}}}\left( {{u_{{N_s}}}} \right)} }.
\end{eqnarray}\normalsize
By simply applying \cite[Eq. (2)]{kn:unified_approach} and then (\ref{eq:EDF_MGF_multiple}) to (\ref{eq:MGF_of_pure_GSC_integralform}), we can obtain the second order MGF result as
\small\begin{eqnarray} \label{MGF_pure_GSC}
MGF_Z \left( {\lambda _{1,} \lambda _2 } \right) \!\!&=&\!\! \sum\limits_{{i_{{N_s}}} = 1}^N {\int\limits_0^\infty  {d{u_{{N_s}}}{p_{{i_{{N_s}}}}}\left( {{u_{{N_s}}}} \right)\exp \left( {{\lambda _2}{u_{{N_s}}}} \right)\sum\limits_{\substack{
 {i_{{N_s} + 1}}, \ldots ,{i_N} \\ 
 {i_{{N_s} + 1}} \ne  \cdots  \ne {i_N} \\ 
 {i_{{N_s} + 1}} \ne {i_{{N_s}}} \\ 
  \vdots  \\ 
 {i_N} \ne {i_{{N_s}}} \\ 
 }}^{1,2, \ldots ,N} {\prod\limits_{\scriptstyle k = {N_s} + 1 \atop 
  \scriptstyle \left\{ {{i_{{N_s} + 1}}, \ldots ,{i_N}} \right\}}^N {{P_{{i_k}}}\left( {{u_{{N_s}}}} \right)} } } } \nonumber \\
&&\!\!\times \sum\limits_{\left\{ {{i_1}, \ldots ,{i_{{N_s} - 1}}} \right\} \in {{\mathop{\rm P}\nolimits} _{{N_s} - 1}}\left( {{I_N} - \left\{ {{i_{{N_s}}}} \right\} - \left\{ {{i_{{N_s} + 1}}, \ldots ,{i_N}} \right\}} \right)} {\prod\limits_{\scriptstyle l = 1 \atop 
  \scriptstyle \left\{ {{i_1}, \ldots ,{i_{{N_s} - 1}}} \right\}}^{{N_s} - 1} {{e_{{i_l}}}\left( {{u_{{N_s}}},{\lambda _1}} \right)} }.
\end{eqnarray} \normalsize
Again, letting $\lambda _1  =  - S_1$ and $\lambda _2  =  - S_2$, we can obtain the 2-dimensional joint PDF of $Z_1=\sum\limits_{n=1}^{N_s-1} {u_n}$ and $Z_2=u_{N_s}$ by applying the inverse Laplace transform as
\small\begin{eqnarray}
{p_Z}\left( {{z_1},{z_2}} \right)\!\!&=& \!\!L_{{S_1},{S_2}}^{ - 1}\left\{ {{\mu _Z}\left( { - {S_1}, - {S_2}} \right)} \right\} \nonumber \\
&=&\!\! \sum\limits_{{i_{{N_s}}} = 1}^N {\int\limits_0^\infty  {d{u_{{N_s}}}{p_{{i_{{N_s}}}}}\left( {{u_{{N_s}}}} \right)L_{{S_2}}^{ - 1}\left\{ {\exp \left( { - {S_2}{u_{{N_s}}}} \right)} \right\}\sum\limits_{\substack{
 {i_{{N_s} + 1}}, \ldots ,{i_N} \\ 
 {i_{{N_s} + 1}} \ne  \cdots  \ne {i_N} \\ 
 {i_{{N_s} + 1}} \ne {i_{{N_s}}} \\ 
  \vdots  \\ 
 {i_N} \ne {i_{{N_s}}} \\ 
 }}^{1,2, \ldots ,N} {\prod\limits_{\scriptstyle k = {N_s} + 1 \atop 
  \scriptstyle \left\{ {{i_{{N_s} + 1}}, \ldots ,{i_N}} \right\}}^N {{P_{{i_k}}}\left( {{u_{{N_s}}}} \right)} } } } \nonumber \\
&&\!\! \times \sum\limits_{\left\{ {{i_1}, \ldots ,{i_{{N_s} - 1}}} \right\} \in {{\mathop{\rm P}\nolimits} _{{N_s} - 1}}\left( {{I_N} - \left\{ {{i_{{N_s}}}} \right\} - \left\{ {{i_{{N_s} + 1}}, \ldots ,{i_N}} \right\}} \right)} {L_{{S_1}}^{ - 1}\left\{ {\prod\limits_{\scriptstyle l = 1 \atop 
  \scriptstyle \left\{ {{i_1}, \ldots ,{i_{{N_s} - 1}}} \right\}}^{{N_s} - 1} {{e_{{i_l}}}\left( {{u_{{N_s}}}, - {S_1}} \right)} } \right\}}.
\end{eqnarray}\normalsize
Finally, noting that  $Z'=Z_1+Z_2$, we can obtain the target PDF of $Z'$ with the following finite integration
\begin{equation} \small
p_{Z'}(x)=\int_0^{\frac{x}{{N_s }}} {p_Z \left( {x - z_2 ,z_2 } \right)dz_2 }.
\end{equation}
\end{proof}

\begin{theorem} \label{th:case2_2} ({Joint PDF of $u_{m}$ and $\sum\limits_{\scriptstyle n = 1 \hfill \atop \scriptstyle n \ne m \hfill}^{N_s } {u_{n} }$} for $1 < m < N_s  - 1$)

Let $X=u_n$ and $Y=\sum\limits_{\scriptstyle n = 1 \hfill \atop \scriptstyle n \ne m \hfill}^{N_s } {u_n }$, then the joint PDF of $Z=\left[X, Y\right]$ can be obtained as
\small\begin{eqnarray} \label{eq:joint_PDF_1}
\!\!\!\!\!\!\!\!\!\!\!\!\!\!  p_{Z}\left(x, y\right) &=& p_{u_m ,\sum\limits_{\scriptstyle n = 1 \hfill \atop 
  \scriptstyle n \ne m \hfill}^{N_s } {u_n } } \left( {x,y} \right) \nonumber
\\
\!\!\!\!\!\!\!\!\!\!\!\!\!\! &=&\!\!\!\!\!\!\!
\int_0^x {\int_{\left( {m - 1} \right)x}^{y - \left( {N_s  - m} \right)z_4 } {p_{\sum\limits_{n = 1}^{m - 1} {u_n } , u_m ,\sum\limits_{n = m + 1}^{N_s  - 1} {u_n } , u_{N_s} } \!\!\!\!\!\!\! \left( {z_1 ,x,y \!-\!z_1\!-\!z_4 ,z_4 } \right)dz_1 } dz_4 }.
\end{eqnarray}\normalsize
\end{theorem}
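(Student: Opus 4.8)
The plan is to apply the general two-step MGF approach, treating both $u_m$ and $u_{N_s}$ as pivot variables: $u_m$ because the partial sum defining $Y$ omits it (so, in the sense discussed earlier, $Y$ is ``not continuous''), and $u_{N_s}$ because it is the boundary order statistic carrying the trailing product $\prod_{j = N_s + 1}^N P_{i_j}\left( u_{N_s} \right)$ of the discarded RVs in (\ref{eq:m-joint_PDF_GSC}). Consequently, in Step~I I would not aim at the two-dimensional joint MGF of $X$ and $Y$ directly but at the four-dimensional joint MGF of the auxiliary partial sums $W_1 = \sum_{n = 1}^{m - 1} u_n$, $W_2 = u_m$, $W_3 = \sum_{n = m + 1}^{N_s - 1} u_n$, and $W_4 = u_{N_s}$ --- exactly the quantity whose joint PDF appears inside the integrand of (\ref{eq:joint_PDF_1}). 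The hypothesis $1 < m < N_s - 1$ guarantees that the blocks $\{u_1,\ldots,u_{m-1}\}$, $\{u_{m+1},\ldots,u_{N_s-1}\}$ and $\{u_{N_s+1},\ldots,u_N\}$ are all nonempty.

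Starting from (\ref{eq:m-joint_PDF_GSC}), $MGF_W\left( \lambda_1, \lambda_2, \lambda_3, \lambda_4 \right) = {\rm E}\left\{ \exp\left( \lambda_1 W_1 + \lambda_2 W_2 + \lambda_3 W_3 + \lambda_4 W_4 \right) \right\}$ is a nested multiple integral over the region $u_1 \ge u_2 \ge \cdots \ge u_{N_s} \ge 0$, with each discarded variable integrated over $\left[ 0, u_{N_s} \right]$. I would interchange the order of integration so that $u_m$ and $u_{N_s}$ become the two outermost variables and every remaining pair of limits is chosen as tight as possible. The block $\{ u_1, \ldots, u_{m - 1} \}$, all lying above $u_m$, then collapses through the identity (\ref{eq:EDF_MGF_multiple}) into a power-set sum of products of $e_{i_k}\left( u_m, \lambda_1 \right)$; the block $\{ u_{m + 1}, \ldots, u_{N_s - 1} \}$, sandwiched between $u_{N_s}$ and $u_m$, collapses through (\ref{eq:IntervalMGF_multiple}) into a power-set sum of products of $\mu_{i_l}\left( u_{N_s}, u_m, \lambda_3 \right)$; and the discarded block $\{ u_{N_s + 1}, \ldots, u_N \}$ simply carries over the product $\prod_j P_{i_j}\left( u_{N_s} \right)$, exactly as in the proof of \emph{Theorem~\ref{th:case2_1}}. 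What survives is a double integral over $u_m$ and $u_{N_s}$ against $p_{i_m}\left( u_m \right)\exp\left( \lambda_2 u_m \right)$ and $p_{i_{N_s}}\left( u_{N_s} \right)\exp\left( \lambda_4 u_{N_s} \right)$, with the appropriate power-set summations over the mutually disjoint index subsets. In Step~II I would set $\lambda_j = -S_j$ and invert the four-fold Laplace transform term by term; the factors $\exp\left( -S_2 u_m \right)$ and $\exp\left( -S_4 u_{N_s} \right)$ invert to $\delta\left( w_2 - u_m \right)$ and $\delta\left( w_4 - u_{N_s} \right)$, which collapse the $u_m$- and $u_{N_s}$-integrations and produce $p_W$ in precisely the form $p_{\sum_{n = 1}^{m - 1} u_n,\, u_m,\, \sum_{n = m + 1}^{N_s - 1} u_n,\, u_{N_s}}\left( w_1, w_2, w_3, w_4 \right)$ that appears in the integrand of (\ref{eq:joint_PDF_1}).

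The final step is the deterministic change of variables $X = W_2$ and $Y = W_1 + W_3 + W_4$. I would integrate $p_W$ over the hyperplane $W_1 + W_3 + W_4 = y$, parametrised by $z_1 = W_1$ and $z_4 = W_4$ with $W_3 = y - z_1 - z_4$, obtaining $p_Z\left( x, y \right) = \int\!\!\int p_W\left( z_1, x, y - z_1 - z_4, z_4 \right)\, dz_1\, dz_4$. The integration limits follow from the order constraints among the $u_n$: since $u_{N_s} \le u_m$ we need $z_4 \in \left[ 0, x \right]$; since each of the $m - 1$ terms of $W_1$ is at least $u_m$ we need $z_1 \ge \left( m - 1 \right)x$; and since each of the $N_s - 1 - m$ terms of $W_3$ is at least $u_{N_s}$ we need $y - z_1 - z_4 \ge \left( N_s - m - 1 \right)z_4$, i.e.\ $z_1 \le y - \left( N_s - m \right)z_4$ (the remaining support restrictions on $W_3$ being automatically enforced by the vanishing of $p_W$). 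This reproduces (\ref{eq:joint_PDF_1}).

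I expect the main obstacle to be the combinatorial bookkeeping in Step~I: applying the three collapsing identities --- (\ref{eq:EDF_MGF_multiple}), (\ref{eq:IntervalMGF_multiple}), and the discarded-CDF step --- in succession to three different blocks of RVs while correctly removing from the running index pool the labels already assigned to the two pivots $u_m, u_{N_s}$ and to the previously processed blocks. This is the i.n.d.\ analogue of the computation carried out in Appendix~\ref{AP:E} for \emph{Theorem~\ref{th:case1_2}}, but now with two pivots and an additional discarded block. Once the joint MGF is written in the compact power-set form, the Laplace inversion and the final change of variables are routine, the only delicate point being the careful reading of the support constraints that fixes the limits in (\ref{eq:joint_PDF_1}).
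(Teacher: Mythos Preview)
Your proposal is correct and follows essentially the same approach as the paper: introducing the four auxiliary partial sums $W_1=\sum_{n=1}^{m-1}u_n$, $W_2=u_m$, $W_3=\sum_{n=m+1}^{N_s-1}u_n$, $W_4=u_{N_s}$, collapsing the three blocks via (\ref{eq:EDF_MGF_multiple}), (\ref{eq:IntervalMGF_multiple}), and the trailing CDF product respectively to obtain the compact four-dimensional MGF, inverting term by term, and then marginalising under the change of variables $X=W_2$, $Y=W_1+W_3+W_4$ with the support constraints you identify. The paper carries out exactly this computation in Appendix~\ref{AP:G}, and your reading of the integration limits in (\ref{eq:joint_PDF_1}) matches theirs.
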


\begin{proof}
For the joint PDF of $u_m$ and $\sum\limits_{\scriptstyle n = 1 \hfill \atop \scriptstyle n \ne m \hfill}^{N_s } {u_n }$, as one of original groups is split by $u_m$, we should consider substituted groups for the split group instead of original groups as shown in Fig.~\ref{Example_3_2}. As a result, we will start by obtaining a four order MGF. In this case, the higher dimensional joint PDF can then be used to find the desired 2-dimensional joint PDF of interest by transformation.

Applying the results in \cite[Eq. (2)]{kn:unified_approach}, (\ref{eq:CDF_MGF_multiple}), (\ref{eq:EDF_MGF_multiple}) and (\ref{eq:IntervalMGF_multiple}), we derive in Appendix~\ref{AP:G} the target joint MGF.
Let $Z_1  = \sum\limits_{n = 1}^{m - 1} {u_n }$, $Z_2  = u_m$, $Z_3 = \sum\limits_{n = m + 1}^{N_s  - 1} {u_n }$, and $Z_4  = u_{N_s}$, then
\small \begin{eqnarray} \label{eq:joint_MGF_GSC_2}
\!\!\!\!\!\!MGF_Z \left( {\lambda _1 ,\lambda _2 ,\lambda _3 ,\lambda _4 } \right) \!\!\!\!&=&\!\!\!\! \sum\limits_{\scriptstyle {i_{{N_s}}}, \ldots ,{i_N} \atop 
  \scriptstyle {i_{{N_s}}} \ne  \cdots  \ne {i_N}}^{1,2, \ldots ,N} {\int\limits_0^\infty  {d{u_{{N_s}}}{p_{{i_{{N_s}}}}}\left( {{u_{{N_s}}}} \right)\exp \left( {{\lambda _4}{u_{{N_s}}}} \right)\!\!\prod\limits_{\scriptstyle j = {N_s} + 1 \atop 
  \scriptstyle \left\{ {{i_{{N_s} + 1}}, \ldots ,{i_N}} \right\}}^N {{P_{{i_j}}}\left( {{u_{{N_s}}}} \right)} } } \nonumber \\
\!\!\!\!\!\!&&\!\!\!\!\times \sum\limits_{\scriptstyle {i_m} = 1 \atop 
  \scriptstyle {i_m} \ne {i_{{N_s}, \ldots ,}}{i_N}}^N {\int\limits_{{u_{{N_s}}}}^\infty  {d{u_m}{p_{{i_m}}}\left( {{u_m}} \right)\exp \left( {{\lambda _2}{u_m}} \right)} } \nonumber \\
\!\!\!\!\!\!&&\!\!\!\!\times \sum\limits_{\left\{ {{i_{m + 1}}, \ldots ,{i_{{N_s} - 1}}} \right\} \in {{\mathop{\rm P}\nolimits} _{{N_s} - m - 1}}\left( {{I_N} - \left\{ {{i_m}} \right\} - \left\{ {{i_{{N_s}}}, \ldots ,{i_N}} \right\}} \right)} {\prod\limits_{\scriptstyle k = m + 1 \atop 
  \scriptstyle \left\{ {{i_{m + 1}}, \ldots ,{i_{{N_s} - 1}}} \right\}}^{{N_s} - 1} {{\mu _{{i_k}}}\left( {{u_{{N_s}}},{u_m},{\lambda _3}} \right)} } \nonumber \\
\!\!\!\!\!\!&&\!\!\!\!\times \sum\limits_{\left\{ {{i_1}, \ldots ,{i_{m - 1}}} \right\} \in {{\mathop{\rm P}\nolimits} _{m - 1}}\left( {{I_N} - \left\{ {{i_m}} \right\} - \left\{ {{i_{{N_s}}}, \ldots ,{i_N}} \right\} - \left\{ {{i_{m + 1}}, \ldots ,{i_{{N_s} - 1}}} \right\}} \right)} {\prod\limits_{\scriptstyle l = 1 \atop 
  \scriptstyle \left\{ {{i_1}, \ldots ,{i_{m - 1}}} \right\}}^{m - 1}\!\! {{e_{{i_l}}}\left( {{u_m},{\lambda _1}} \right)} }.
\end{eqnarray} \normalsize

Starting from the MGF expressions given above, we apply inverse Laplace transforms in Appendix~\ref{AP:G} in order to derive the following joint PDFs

\small \begin{eqnarray} \label{eq:joint_PDF_GSC_2} 
{p_Z}\left( {{z_1},{z_2},{z_3},{z_4}} \right) \!\!\!&=& \!\!\!L_{{S_1},{S_2},{S_3},{S_4}}^{ - 1}\left\{ {{\mu _Z}\left( { - {S_1}, - {S_2}, - {S_3}, - {S_4}} \right)} \right\} \nonumber \\
&=&\!\!\! \sum\limits_{\scriptstyle {i_{{N_s}}}, \ldots ,{i_N} \atop 
  \scriptstyle {i_{{N_s}}} \ne  \cdots  \ne {i_N}}^{1,2, \ldots ,N} {\int\limits_0^\infty  {d{u_{{N_s}}}{p_{{i_{{N_s}}}}}\left( {{u_{{N_s}}}} \right)L_{{S_4}}^{ - 1}\left\{ {\exp \left( { - {S_4}{u_{{N_s}}}} \right)} \right\}\prod\limits_{\scriptstyle j = {N_s} + 1 \atop 
  \scriptstyle \left\{ {{i_{{N_s} + 1}}, \ldots ,{i_N}} \right\}}^N {{P_{{i_j}}}\left( {{u_{{N_s}}}} \right)} } } \nonumber \\
&&\!\!\!\times \sum\limits_{\scriptstyle {i_m} = 1 \atop 
  \scriptstyle {i_m} \ne {i_{{N_s}, \ldots ,}}{i_N}}^N {\int\limits_{{u_{{N_s}}}}^\infty  {d{u_m}{p_{{i_m}}}\left( {{u_m}} \right)L_{{S_2}}^{ - 1}\left\{ {\exp \left( { - {S_2}{u_m}} \right)} \right\}} } \nonumber \\
&&\!\!\!\times \sum\limits_{\left\{ {{i_{m + 1}}, \ldots ,{i_{{N_s} - 1}}} \right\} \in {{\mathop{\rm P}\nolimits} _{{N_s} - m - 1}}\left( {{I_N} - \left\{ {{i_m}} \right\} - \left\{ {{i_{{N_s}}}, \ldots ,{i_N}} \right\}} \right)} \!\! {L_{{S_3}}^{ - 1}\!\left\{\! {\prod\limits_{\scriptstyle k = m + 1 \atop 
  \scriptstyle \left\{\! {{i_{m + 1}}, \ldots ,{i_{{N_s} - 1}}} \!\right\}}^{{N_s} - 1} \!{{\mu _{{i_k}}}\left(\! {{u_{{N_s}}},{u_m}, - {S_3}} \!\right)} } \!\right\}} \nonumber \\
&&\!\!\!\times \!\sum\limits_{\left\{\! {{i_1}, \ldots ,{i_{m - 1}}} \!\right\} \in {{\mathop{\rm P}\nolimits} _{m - 1}}\left(\! {{I_N} - \left\{\! {{i_m}} \!\right\} - \left\{\! {{i_{{N_s}}}, \ldots ,{i_N}} \!\right\} - \left\{\! {{i_{m + 1}}, \ldots ,{i_{{N_s} - 1}}}\! \right\}} \!\right)} \!\! {L_{{S_1}}^{ - 1}\left\{\! {\prod\limits_{\scriptstyle l = 1 \atop 
  \scriptstyle \left\{\! {{i_1}, \ldots ,{i_{m - 1}}} \!\right\}}^{m - 1} {{e_{{i_l}}}\left(\! {{u_m}, - {S_1}} \!\right)} } \!\right\}}, \nonumber
\\
&&\text{for } z_4<z_2,\; z_1>(m-1)z_2\;\text{and}\; \left(N_s-m-1\right)z_4<z_3<\left(N_s-m-1\right)z_2. 
\end{eqnarray} \normalsize
\end{proof}
Note that (\ref{eq:joint_PDF_1}) involves only finite integrations of joint PDFs. Therefore, while a generic closed-form expression is not possible, the desired joint PDF can be easily numerically evaluated with the help of integral tables~\cite{kn:Mathematical_handbook, kn:gradshteyn_6} or using standard mathematical packages, such as Mathematica or Matlab etc.

\begin{theorem} \label{th:case2_3} (Joint PDF of $\sum\limits_{n = 1}^m {u_n}$ and $\sum\limits_{n = m + 1}^{N_s } {u_n}$)
\\
Let $X=\sum\limits_{n = 1}^m {u_n}$ and $Y=\sum\limits_{n = m + 1}^{N_s } {u_n}$, then we can simply obtain the joint PDF of $ Z=[X,Y]$ as
\small\begin{eqnarray}
 \!\!\!\! \!\!\!\!\!\!\!\!\!\!\!\! \!\!\!\!  && \!\!\!\!\!\!\!\! p_Z \left( {x,y} \right) = p_{\sum\limits_{n = 1}^m {u_n},\sum\limits_{n = m + 1}^{N_s } {u_n}} \left( {x,y} \right)\nonumber
\\
\!\!\!\! \!\!\!\!\!\!\!\!\!\!\!\! \!\!\!\!  &=& \!\!\!\!  \int_0^{\frac{y}{{N_s  - m}}} \!\! {\int_{\frac{y}{{N_s  - m}}}^{\frac{x}{m}} {p_{\sum\limits_{n = 1}^{m - 1} {u_n} ,u_m,\sum\limits_{n = m + 1}^{K_s  - 1} {u_n} , u_{N_s} } \!\! \left( {x - z_2 ,z_2 ,y - z_4 ,z_4 } \right)dz_2 } dz_4 }, \text{ for } x>\frac{m}{N_s-m}y.
\end{eqnarray} \normalsize
\end{theorem}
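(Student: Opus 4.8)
The plan is to reduce this two-sum joint PDF to the four-dimensional joint PDF already obtained in {\it Theorem~\ref{th:case2_2}} by a change of variables and marginalization. The key observation is that the desired random variables can be written as $X = Z_1 + Z_2$ and $Y = Z_3 + Z_4$, where $Z_1 = \sum_{n=1}^{m-1} u_n$, $Z_2 = u_m$, $Z_3 = \sum_{n=m+1}^{N_s-1} u_n$, and $Z_4 = u_{N_s}$ are exactly the four partial sums whose joint PDF $p_Z(z_1,z_2,z_3,z_4)$ is given by (\ref{eq:joint_PDF_GSC_2}). So the first step is simply to apply the standard transformation $X = Z_1 + Z_2$, $Y = Z_3 + Z_4$ (keeping $Z_2$ and $Z_4$ as the remaining two coordinates, with unit Jacobian), which gives
\small\begin{equation} \label{eq:prop_transform}
p_Z(x,y) = \int \!\!\int p_{Z_1,Z_2,Z_3,Z_4}\!\left(x - z_2,\, z_2,\, y - z_4,\, z_4\right) dz_2\, dz_4,
\end{equation}\normalsize
and the only remaining work is to pin down the correct limits of the inner and outer integrals.

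The second step is to determine those limits from the ordering constraints $u_1 \ge u_2 \ge \cdots \ge u_{N_s}$, which are already encoded in the support conditions attached to (\ref{eq:joint_PDF_GSC_2}), namely $z_4 < z_2$, $z_1 > (m-1)z_2$, and $(N_s-m-1)z_4 < z_3 < (N_s-m-1)z_2$. Translating these into the new variables: substituting $z_1 = x - z_2$ and $z_3 = y - z_4$ turns $z_1 > (m-1)z_2$ into $x > m z_2$, i.e. $z_2 < x/m$; the left half of the $z_3$ constraint becomes $y - z_4 > (N_s-m-1)z_4$, i.e. $z_4 < y/(N_s-m)$; and one also needs $z_2 \ge z_4$ together with the feasibility conditions $z_2 \ge z_4$ forcing the lower limit of the $z_2$-integral up to $y/(N_s-m)$ once the $z_3$-constraints are combined with $z_4 < z_2$. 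Collecting these yields the outer integral over $z_4 \in \left[0,\, y/(N_s-m)\right]$ and the inner integral over $z_2 \in \left[y/(N_s-m),\, x/m\right]$, matching the stated result, and the overall region is non-empty precisely when $x > \tfrac{m}{N_s-m}\,y$, which is the condition quoted in the theorem.

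The third step is a consistency check: one should verify that the edge cases $m=1$ (so that $Z_1$ is empty) and $N_s - m = 1$ (so that $Z_3$ is empty) degenerate gracefully, and that the quoted range condition $x > \tfrac{m}{N_s-m}y$ is exactly the projection of the four-dimensional support onto the $(x,y)$-plane, so that $p_Z(x,y)$ vanishes outside it. The main obstacle here is bookkeeping rather than anything deep: getting every inequality in the support of (\ref{eq:joint_PDF_GSC_2}) correctly rewritten in terms of $x, y, z_2, z_4$ and then checking that the three transformed inequalities plus $z_2 \ge z_4 \ge 0$ collapse to exactly the two nested intervals stated — with no spurious extra constraints and no missing ones — is the delicate part, and it is the step most prone to off-by-one errors in the multiplicities $m$ versus $m-1$ and $N_s-m$ versus $N_s-m-1$.
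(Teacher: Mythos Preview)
Your approach is correct and is exactly the route the paper takes: the paper's own proof is simply ``Omitted,'' but the theorem statement itself expresses $p_Z(x,y)$ as a double marginalization of the four–dimensional joint PDF $p_{Z_1,Z_2,Z_3,Z_4}$ derived in {\it Theorem~\ref{th:case2_2}}, which is precisely the transformation $X=Z_1+Z_2$, $Y=Z_3+Z_4$ with unit Jacobian that you describe. One small sharpening of your second step: the cleanest way to see the lower limit $z_2\ge y/(N_s-m)$ is directly from $Y=\sum_{n=m+1}^{N_s}u_n\le (N_s-m)\,u_m=(N_s-m)z_2$, rather than trying to combine the $z_3$--constraint with $z_2\ge z_4$; the latter actually gives the tighter (and $z_4$–dependent) bound $z_2>(y-z_4)/(N_s-m-1)$, but since the four–dimensional PDF already vanishes off its support, the looser constant limit $y/(N_s-m)$ used in the theorem still yields the correct integral.
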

\begin{proof}Omitted.
\end{proof}
Note again that only the finite integrations of joint PDFs are involved.

\section{Closed-form expressions for exponential RV case} \label{SEC:VI}
Now, we focus on obtaining the joint PDFs for i.n.d. exponential RV special cases in a ready-to-use form. The PDF and the CDF of the RVs are given by ${p_{{i_l}}}\left( x \right) = \frac{1}{{{{\bar \gamma }_{{i_l}}}}}\exp \left( { - \frac{x}{{{{\bar \gamma }_{{i_l}}}}}} \right)$ and $P_{{i_l}}\left( x \right) = 1-\exp \left( { - \frac{x}{{{{\bar \gamma }_{{i_l}}}}}} \right)$ for $\gamma\ge 0$, respectively, where ${\bar \gamma }_{i_l}$ is the average of the $l$-th RV.

The above novel generic results are quite general and apply to any RVs. We now focus on obtaining the joint PDFs for i.n.d. exponential RV special cases in a ready-to-use form and illustrate in this section some results for the independent non-identical exponential RV special case, where the PDF and the CDF of $\gamma$ are given by ${p_{{i_l}}}\left( x \right) = \frac{1}{{{{\bar \gamma }_{{i_l}}}}}\exp \left( { - \frac{x}{{{{\bar \gamma }_{{i_l}}}}}} \right)$ and $P_{{i_l}}\left( x \right) = 1-\exp \left( { - \frac{x}{{{{\bar \gamma }_{{i_l}}}}}} \right)$ for $\gamma\ge 0$, respectively, where ${\bar \gamma }_{i_l}$ is the average of the $l$-th RV. 
As shown in Appendix~\ref{AP:H}, (\ref{eq:CDF_MGF_multiple}), (\ref{eq:EDF_MGF_multiple}) and (\ref{eq:IntervalMGF_multiple}) can be specialized to
\begin{enumerate}
\item[i)] For special case:
\begin{equation}\small \label{eq:common_function_Rayleigh_1_s}
 {c_{{i_l}}}\left( {{z_a},\lambda } \right) = \frac{1}{{1 - {{\bar \gamma }_{{i_l}}}\lambda }}\left[ {1 - \exp \left( {\left( {\lambda  - \frac{1}{{{{\bar \gamma }_{{i_l}}}}}} \right){z_a}} \right)} \right], 
\end{equation}
\begin{equation} \small \label{eq:common_function_Rayleigh_2_s}
 {e_{{i_l}}}\left( {{z_a},\lambda } \right) = \frac{1}{{1 - {{\bar \gamma }_{{i_l}}}\lambda }}\left[ {\exp \left( {\left( {\lambda  - \frac{1}{{{{\bar \gamma }_{{i_l}}}}}} \right){z_a}} \right)} \right], 
\end{equation}
\begin{equation} \small \label{eq:common_function_Rayleigh_3_s}
 {\mu _{{i_l}}}\left( {{z_a},{z_b},\lambda } \right) = \frac{1}{{1 - {{\bar \gamma }_{{i_l}}}\lambda }}\left[ {\exp \left( {\left( {\lambda  - \frac{1}{{{{\bar \gamma }_{{i_l}}}}}} \right){z_b}} \right) - \exp \left( {\left( {\lambda  - \frac{1}{{{{\bar \gamma }_{{i_l}}}}}} \right){z_a}} \right)} \right]. 
\end{equation}
\item[ii)] For general case:
\small
\begin{eqnarray} \label{eq:common_function_Rayleigh_1}
\!\!\!\!\!\!\!\!\!\!\!\!\!\!\!\!\!\!&& \prod\limits_{l = {n_1}}^{{n_2}} {{c_{{i_l}}}\left( {{z_a},\lambda } \right)}  = \frac{1}{{\prod\limits_{l = {n_1}}^{{n_2}} {\left( {1 - {{\bar \gamma }_{{i_l}}}\lambda } \right)} }}\prod\limits_{l = {n_1}}^{{n_2}} {\left[ {1 - \exp \left( {\left( {\lambda  - \frac{1}{{{{\bar \gamma }_{{i_l}}}}}} \right){z_a}} \right)} \right]}  \nonumber\\ 
\!\!\!\!\!\!\!\!\!\!\!\!\!\!\!\!\!\!&&= \!\sum\limits_{k = {n_1}}^{{n_2}} \!{{C_{k,n_1,n_2}}\left[\! {\frac{{1 + \left[\! {\sum\limits_{l = 1}^{{n_2} - {n_1} + 1} {\exp \left(\! {l \cdot {z_a} \cdot \lambda } \!\right)\left\{\! {{{\left( { - 1} \right)}^l}\sum\limits_{{j_1} = {j_0} + {n_1}}^{{n_2} - l + 1} { \cdots \sum\limits_{{j_l} = {j_{l - 1}} + 1}^{{n_2}} {\exp \left(\! { - \sum\limits_{m = 1}^l {\frac{{{z_a}}}{{{{\bar \gamma }_{{i_{{j_m}}}}}}}} } \!\right)} } } \!\right\}} } \!\right]}}{{\left( \!{\lambda  - \frac{1}{{{{\bar \gamma }_{{i_k}}}}}} \!\right)}}} \!\right]}, 
\end{eqnarray}
\begin{eqnarray} \label{eq:common_function_Rayleigh_2}
\!\!\!\!\!\!\!\!\!\!\!\!\!\!\!\!\!\!&&\prod\limits_{l = {n_1}}^{{n_2}} {{e_{{i_l}}}\left( {{z_a},\lambda } \right)}  = \frac{1}{{\prod\limits_{l = {n_1}}^{{n_2}} {\left( {1 - {{\bar \gamma }_{{i_l}}}\lambda } \right)} }}\exp \left( {\left\{ {\sum\limits_{l = {n_1}}^{{n_2}} {\left( {\lambda  - \frac{1}{{{{\bar \gamma }_{{i_l}}}}}} \right)} } \right\}{z_a}} \right) \nonumber\\ 
\!\!\!\!\!\!\!\!\!\!\!\!\!\!\!\!\!\!&&= \sum\limits_{k = {n_1}}^{{n_2}} {\frac{C_{k,n_1,n_2}}{{\left( {\lambda  - \frac{1}{{{{\bar \gamma }_{{i_k}}}}}} \right)}}\exp \left( { - \sum\limits_{l = {n_1}}^{{n_2}} {\left( {\frac{{{z_a}}}{{{{\bar \gamma }_{{i_l}}}}}} \right)} } \right)\exp \left( {\left( {{n_2} - {n_1} + 1} \right){z_a}\lambda } \right)},
\end{eqnarray}
\begin{eqnarray} \label{eq:common_function_Rayleigh_3}
\!\!\!\!\!\!\!\!\!\!\!\!\!\!\!\!\!\!&& \prod\limits_{l = {n_1}}^{{n_2}} {{\mu _{{i_l}}}\left( {{z_a},{z_b},\lambda } \right)} 
= \frac{1}{{\prod\limits_{l = {n_1}}^{{n_2}} {\left( {1 - {{\bar \gamma }_{{i_l}}}\lambda } \right)} }}\prod\limits_{l = {n_1}}^{{n_2}} {\left[ {\exp \left( {\left( {\lambda  - \frac{1}{{{{\bar \gamma }_{{i_l}}}}}} \right){z_a}} \right) - \exp \left( {\left( {\lambda  - \frac{1}{{{{\bar \gamma }_{{i_l}}}}}} \right){z_b}} \right)} \right]}  \nonumber\\ 
\!\!\!\!\!\!\!\!\!\!\!\!\!\!\!\!\!\!&&= \sum\limits_{k = {n_1}}^{{n_2}} {{C_{k,n_1,n_2}}\left[ {\frac{{\exp \left( {\left( {{n_2} - {n_1} + 1} \right) \cdot {z_a} \cdot \lambda } \right)\exp \left( { - \sum\limits_{l = {n_1}}^{{n_2}} {\left( {\frac{{{z_a}}}{{{{\bar \gamma }_{{i_l}}}}}} \right)} } \right)}}{{\left( {\lambda  - \frac{1}{{{{\bar \gamma }_{{i_k}}}}}} \right)}}} \right.} \nonumber
\nonumber\\
\!\!\!\!\!\!\!\!\!\!\!\!\!\!\!\!\!\!&&\quad\left. { \times \left\{ {1 + \sum\limits_{l = 1}^{{n_2} - {n_1} + 1} {\exp \left( {l \cdot \left( {{z_b} - {z_a}} \right) \cdot \lambda } \right)\left\{ {{{\left( { - 1} \right)}^l}\sum\limits_{{j_1} = {j_0} + {n_1}}^{{n_2} - l + 1} { \cdots \sum\limits_{{j_l} = {j_{l - 1}} + 1}^{{n_2}} {\exp \left( { - \sum\limits_{m = 1}^l {\frac{{{z_b} - {z_a}}}{{{{\bar \gamma }_{{i_{{j_m}}}}}}}} } \right)} } } \right\}} } \right\}} \right], 
\end{eqnarray} \normalsize
where 
\begin{equation} \small
{C_{l,n_1,n_2}} = \frac{1}{{\prod\limits_{l = {n_1}}^{{n_2}} {\left( { - {{\bar \gamma }_{{i_l}}}} \right)} }{F'\left( {\frac{1}{{{{\bar \gamma }_{{i_l}}}}}} \right)}},
\end{equation}
\begin{equation} \small
F'\left( x \right) \!= \left[\! {\sum\limits_{l = 1}^{{n_2} - {n_1}}\! {\left(\! {{n_2} - {n_1} - l + 1} \!\right){x^{{n_2} - {n_1} - l}}{{\left(\! { - 1} \!\right)}^l}\sum\limits_{{j_1} = {j_0} + {n_1}}^{{n_2} - l + 1} { \cdots \sum\limits_{{j_l} = {j_{l - 1}} + 1}^{{n_2}} \!{\prod\limits_{m = 1}^l {\frac{1}{{{{\bar \gamma }_{{i_{{j_m}}}}}}}} } } } } \!\right] \!+ \!\left(\! {{n_2} - {n_1} + 1} \!\right){x^{{n_2} - {n_1}}}.
\end{equation}
\end{enumerate}

After substituting (\ref{eq:common_function_Rayleigh_1}), (\ref{eq:common_function_Rayleigh_2}) and (\ref{eq:common_function_Rayleigh_3}) into the derived expressions of the joint PDF of partial sums of ordered statistics presented in the previous sections, it is easy to derive the following closed-form expressions for the PDFs by applying the classical inverse Laplace transform pair and the property given in \cite[Appendix I]{kn:unified_approach}. While some of these results have been derived using the successive conditioning approach previously, we list them here for the sake of convenience and completeness in the next page.

\begin{landscape}
\begin{enumerate}
\item[1)] PDF of $\sum\limits_{n = 1}^N {u_n}$:
\begin{equation} \footnotesize\label{eq:closed_form_1}
{p_Z}\left( {{z_1}} \right)= \sum\limits_{\left\{ {{i_1},{i_2}, \ldots ,{i_N}} \right\} \in {{\mathop{\rm P}\nolimits} _N}\left( {{I_N}} \right)} {\sum\limits_{l = 1}^N {{C_{l,1,N}}L_{{S_1}}^{ - 1}\left\{ {\frac{1}{{\left( { - {S_1} - \frac{1}{{{{\bar \gamma }_{{i_l}}}}}} \right)}}} \right\}} },
\end{equation}
where
\begin{equation}\footnotesize
L_{{S_1}}^{ - 1}\left\{ {\frac{1}{{\left( { - {S_1} - \frac{1}{{{{\bar \gamma }_{{i_l}}}}}} \right)}}} \right\} =  - \exp \left( { - \frac{{{z_1}}}{{{{\bar \gamma }_{{i_l}}}}}} \right).
\end{equation}

\item[2)] Joint PDF of $u_m$ and $\sum\limits_{\scriptstyle n = 1 \hfill \atop \scriptstyle n \ne m \hfill}^N {u_n} $:
\footnotesize \begin{eqnarray} \label{eq:non_closed_form_2}
\!\!\!\!\!\!\!\!\!\!\!\!\!\!\!\!\!\!\!\!\!\!\!\!\!\!\!\!\!\!\!\!\!\!\!\!\!\!\!\!\!\!\!\!&&{p_Z}\left( {{z_1},{z_2}} \right) = L_{{S_1},{S_2}}^{ - 1}\left\{ {{\mu _Z}\left( { - {S_1}, - {S_2}} \right)} \right\} \nonumber \\
\!\!\!\!\!\!\!\!\!\!\!\!\!\!\!\!\!\!\!\!\!\!\!\!\!\!\!\!\!\!\!\!\!\!\!\!\!\!\!\!\!\!\!\!&&= \sum\limits_{{i_m} = 1}^N {\int\limits_0^\infty  {d{u_m}\frac{1}{{{{\bar \gamma }_{{i_m}}}}}\exp \left( { - \frac{{{u_m}}}{{{{\bar \gamma }_{{i_m}}}}}} \right)L_{{S_1}}^{ - 1}\left\{ {\exp \left( { - {S_1}{u_m}} \right)} \right\}} } \nonumber \\
\!\!\!\!\!\!\!\!\!\!\!\!\!\!\!\!\!\!\!\!\!\!\!\!\!\!\!\!\!\!\!\!\!\!\!\!\!\!\!\!\!\!\!\!\!\!\!\!\!\!\!\!\!\!\!\!\!\!\!\!\!\!\!\!\!\!\!\!\!\!\!\!\!\!\!\!\!\!\!\!&&\quad \times \!\! \sum\limits_{\left\{ {{i_1}, \ldots ,{i_{m - 1}}} \right\} \in {{\mathop{\rm P}\nolimits} _{m - 1}}\left( {{I_N} - \left\{ {{i_m}} \right\}} \right)} \!\! {\sum\limits_{\scriptstyle k = 1 \atop 
  \scriptstyle \left\{ {{i_1}, \ldots ,{i_{m - 1}}} \right\}}^{m - 1} \!\! {{C_{k,1,m-1}}\exp \left( \!{ - \sum\limits_{l = 1}^{m - 1} {\left(\! {\frac{{{u_m}}}{{{{\bar \gamma }_{{i_l}}}}}} \!\right)} } \!\right)\! \sum\limits_{\left\{ {{i_{m + 1}}, \ldots ,{i_N}} \right\} \in {{\mathop{\rm P}\nolimits} _{N - m}}\left( \! {{I_N} - \left\{ \!{{i_m}} \!\right\} - \left\{\! {{i_1}, \ldots ,{i_{m - 1}}} \!\right\}} \!\right)} {\sum\limits_{\scriptstyle q = m + 1 \atop 
  \scriptstyle \left\{ {{i_{m + 1}}, \ldots ,{i_N}} \right\}}^N \!\! {{C_{q,m+1,N}}L_{{S_2}}^{ - 1}\left\{ \! {\frac{{\exp \left( \!{ - \left( \!{m - 1} \! \right){u_m}{S_2}}\! \right)}}{{\left( \!{ - {S_2} - \frac{1}{{{{\bar \gamma }_{{i_k}}}}}} \!\right)\left( \!{ - {S_2} - \frac{1}{{{{\bar \gamma }_{{i_q}}}}}} \!\right)}}} \!\right\}} } } } \nonumber \\
\!\!\!\!\!\!\!\!\!\!\!\!\!\!\!\!\!\!\!\!\!\!\!\!\!\!\!\!\!\!\!\!\!\!\!\!\!\!\!\!\!\!\!\!&&\quad + \sum\limits_{{i_m} = 1}^N {\int\limits_0^\infty  {d{u_m}\frac{1}{{{{\bar \gamma }_{{i_m}}}}}\exp \left( { - \frac{{{u_m}}}{{{{\bar \gamma }_{{i_m}}}}}} \right)L_{{S_1}}^{ - 1}\left\{ {\exp \left( { - {S_1}{u_m}} \right)} \right\}} } \sum\limits_{\left\{ {{i_1}, \ldots ,{i_{m - 1}}} \right\} \in {{\mathop{\rm P}\nolimits} _{m - 1}}\left( {{I_N} - \left\{ {{i_m}} \right\}} \right)} {\sum\limits_{\scriptstyle k = 1 \atop 
  \scriptstyle \left\{ {{i_1}, \ldots ,{i_{m - 1}}} \right\}}^{m - 1} {{C_{k,1,m-1}}\exp \left( { - \sum\limits_{l = 1}^{m - 1} {\left( {\frac{{{u_m}}}{{{{\bar \gamma }_{{i_l}}}}}} \right)} } \right)} } \nonumber \\
\!\!\!\!\!\!\!\!\!\!\!\!\!\!\!\!\!\!\!\!\!\!\!\!\!\!\!\!\!\!\!\!\!\!\!\!\!\!\!\!\!\!\!\!&&\quad \times \!\! \sum\limits_{\left\{ {{i_{m + 1}}, \ldots ,{i_N}} \right\} \in {{\mathop{\rm P}\nolimits} _{N - m}}\left( {{I_N} - \left\{ {{i_m}} \right\} - \left\{ {{i_1}, \ldots ,{i_{m - 1}}} \right\}} \right)} \!\! {\sum\limits_{\scriptstyle q = m + 1 \atop 
  \scriptstyle \left\{ {{i_{m + 1}}, \ldots ,{i_N}} \right\}}^N {{C_{q,m+1,N}}\left[ \! {\sum\limits_{h = 1}^{N - m} \! {\left\{ \! {{{\left(\! { - 1} \!\right)}^h}\! \sum\limits_{{j_1} = {j_0} + m + 1}^{N - h + 1} {\! \cdots\! \sum\limits_{{j_h} = {j_{h - 1}} + 1}^N {\exp \left( \!{ - \sum\limits_{m = 1}^h {\frac{{{u_m}}}{{{{\bar \gamma }_{{i_{{j_m}}}}}}}} } \!\right)} } } \!\right\}L_{{S_2}}^{ - 1}\left\{\! {\frac{{\exp \left( \!{ - \left( \!{h + m - 1} \!\right){u_m}{S_2}} \!\right)}}{{\left(\! { - {S_2} - \frac{1}{{{{\bar \gamma }_{{i_k}}}}}}\! \right)\left(\! { - {S_2} - \frac{1}{{{{\bar \gamma }_{{i_q}}}}}} \!\right)}}} \!\right\}} } \!\right]} },
\end{eqnarray} \normalsize
where
\begin{equation} \footnotesize
L_{{S_1}}^{ - 1}\left\{ {\exp \left( { - {S_1}{u_m}} \right)} \right\} = \delta \left( {{z_1} - {u_m}} \right),
\end{equation}
\begin{equation} \footnotesize 
\!\!\!\!\!\!\!\!\!\!\!\!L_{{S_2}}^{ - 1}\left\{ {\frac{{\exp \left( { - \left( {m - 1} \right){u_m}{S_2}} \right)}}{{\left( { - {S_2} - \frac{1}{{{{\bar \gamma }_{{i_k}}}}}} \right)\left( { - {S_2} - \frac{1}{{{{\bar \gamma }_{{i_q}}}}}} \right)}}} \right\} = \frac{{\exp \left( { - \left( {{z_2} - \left( {m - 1} \right){u_m}} \right)\left( {\frac{1}{{{{\bar \gamma }_{{i_k}}}}} + \frac{1}{{{{\bar \gamma }_{{i_q}}}}}} \right)} \right)\left\{ {\exp \left( {\frac{{{z_2} - \left( {m - 1} \right){u_m}}}{{{{\bar \gamma }_{{i_q}}}}}} \right) - \exp \left( {\frac{{{z_2} - \left( {m - 1} \right){u_m}}}{{{{\bar \gamma }_{{i_k}}}}}} \right)} \right\}U\left( {{z_2} - \left( {m - 1} \right){u_m}} \right)}}{{\left( {\frac{1}{{{{\bar \gamma }_{{i_q}}}}} - \frac{1}{{{{\bar \gamma }_{{i_k}}}}}} \right)}},
\end{equation}
\begin{equation} \footnotesize 
\!\!\!\!\!\!\!\!\!\!\!\!L_{{S_2}}^{ - 1}\left\{ {\frac{{\exp \left( { - \left( {h + m - 1} \right){u_m}{S_2}} \right)}}{{\left( { - {S_2} - \frac{1}{{{{\bar \gamma }_{{i_k}}}}}} \right)\left( { - {S_2} - \frac{1}{{{{\bar \gamma }_{{i_q}}}}}} \right)}}} \right\} = \frac{{\exp \left( { - \left( {{z_2} - \left( {h + m - 1} \right){u_m}} \right)\left( {\frac{1}{{{{\bar \gamma }_{{i_k}}}}} + \frac{1}{{{{\bar \gamma }_{{i_q}}}}}} \right)} \right)\left\{ {\exp \left( {\frac{{{z_2} - \left( {h + m - 1} \right){u_m}}}{{{{\bar \gamma }_{{i_q}}}}}} \right) - \exp \left( {\frac{{{z_2} - \left( {h + m - 1} \right){u_m}}}{{{{\bar \gamma }_{{i_k}}}}}} \right)} \right\}U\left( {{z_2} - \left( {h + m - 1} \right){u_m}} \right)}}{{\left( {\frac{1}{{{{\bar \gamma }_{{i_q}}}}} - \frac{1}{{{{\bar \gamma }_{{i_k}}}}}} \right)}}.
\end{equation}

\item[3)] Joint PDF of $\sum\limits_{n = 1}^m {u_n}$ and $\sum\limits_{n = m + 1}^N {u_n}$:
\footnotesize \begin{eqnarray} \label{eq:non_closed_form_3}
\!\!\!\!\!\!\!\!\!\!\!\!\!\!\!\!\!\!\!\!\!\!\!\!\!\!\!\!\!\!\!\!\!\!\!\!\!\!\!\!\!\!\!\!&& {p_Z}\left( {{z_1},{z_2}} \right) = L_{{S_1},{S_2}}^{ - 1}\left\{ {{\mu _Z}\left( { - {S_1}, - {S_2}} \right)} \right\} \nonumber \\
\!\!\!\!\!\!\!\!\!\!\!\!\!\!\!\!\!\!\!\!\!\!\!\!\!\!\!\!\!\!\!\!\!\!\!\!\!\!\!\!\!\!\!\!&&= \sum\limits_{{i_m} = 1}^N {\int\limits_0^\infty  {d{u_m}\frac{1}{{{{\bar \gamma }_{{i_m}}}}}\exp \left( { - \frac{{{u_m}}}{{{{\bar \gamma }_{{i_m}}}}}} \right)} } \nonumber \\
\!\!\!\!\!\!\!\!\!\!\!\!\!\!\!\!\!\!\!\!\!\!\!\!\!\!\!\!\!\!\!\!\!\!\!\!\!\!\!\!\!\!\!\!&&\quad \times \sum\limits_{\left\{ {{i_1}, \ldots ,{i_{m - 1}}} \right\} \in {{\mathop{\rm P}\nolimits} _{m - 1}}\left( {{I_N} - \left\{ {{i_m}} \right\}} \right)} {\sum\limits_{\scriptstyle k = 1 \atop 
  \scriptstyle \left\{ {{i_1}, \ldots ,{i_{m - 1}}} \right\}}^{m - 1} {{C_{k,1,m-1}}\exp \left( { - \sum\limits_{l = 1}^{m - 1} {\left( {\frac{{{u_m}}}{{{{\bar \gamma }_{{i_l}}}}}} \right)} } \right)L_{{S_1}}^{ - 1}\left\{ {\frac{{\exp \left( { - m{u_m}{S_1}} \right)}}{{\left( { - {S_1} - \frac{1}{{{{\bar \gamma }_{{i_k}}}}}} \right)}}} \right\}\sum\limits_{\left\{ {{i_{m + 1}}, \ldots ,{i_N}} \right\} \in {{\mathop{\rm P}\nolimits} _{N - m}}\left( {{I_N} - \left\{ {{i_m}} \right\} - \left\{ {{i_1}, \ldots ,{i_{m - 1}}} \right\}} \right)}  } } \nonumber \\
\!\!\!\!\!\!\!\!\!\!\!\!\!\!\!\!\!\!\!\!\!\!\!\!\!\!\!\!\!\!\!\!\!\!\!\!\!\!\!\!\!\!\!\!&&\quad \times \sum\limits_{\scriptstyle q = m + 1 \atop 
  \scriptstyle \left\{ {{i_{m + 1}}, \ldots ,{i_N}} \right\}}^N {C_{q,m+1,N}}L_{{S_2}}^{ - 1}\left\{ {\frac{1}{{\left( { - {S_2} - \frac{1}{{{{\bar \gamma }_{{i_q}}}}}} \right)}}} \right\}
\nonumber\\
\!\!\!\!\!\!\!\!\!\!\!\!\!\!\!\!\!\!\!\!\!\!\!\!\!\!\!\!\!\!\!\!\!\!\!\!\!\!\!\!\!\!\!\!&&\quad + \sum\limits_{{i_m} = 1}^N {\int\limits_0^\infty  {d{u_m}\frac{1}{{{{\bar \gamma }_{{i_m}}}}}\exp \left( { - \frac{{{u_m}}}{{{{\bar \gamma }_{{i_m}}}}}} \right)} } \sum\limits_{\left\{ {{i_1}, \ldots ,{i_{m - 1}}} \right\} \in {{\mathop{\rm P}\nolimits} _{m - 1}}\left( {{I_N} - \left\{ {{i_m}} \right\}} \right)} {\sum\limits_{\scriptstyle k = 1 \atop 
  \scriptstyle \left\{ {{i_1}, \ldots ,{i_{m - 1}}} \right\}}^{m - 1} {{C_{k,1,m-1}}\exp \left( { - \sum\limits_{l = 1}^{m - 1} {\left( {\frac{{{u_m}}}{{{{\bar \gamma }_{{i_l}}}}}} \right)} } \right)L_{{S_1}}^{ - 1}\left\{ {\frac{{\exp \left( { - m{u_m}{S_1}} \right)}}{{\left( { - {S_1} - \frac{1}{{{{\bar \gamma }_{{i_k}}}}}} \right)}}} \right\}} } \nonumber \\
\!\!\!\!\!\!\!\!\!\!\!\!\!\!\!\!\!\!\!\!\!\!\!\!\!\!\!\!\!\!\!\!\!\!\!\!\!\!\!\!\!\!\!\!&&\quad \times \sum\limits_{\left\{ {{i_{m + 1}}, \ldots ,{i_N}} \right\} \in {{\mathop{\rm P}\nolimits} _{N - m}}\left( {{I_N} - \left\{ {{i_m}} \right\} - \left\{ {{i_1}, \ldots ,{i_{m - 1}}} \right\}} \right)} {\sum\limits_{\scriptstyle q = m + 1 \atop 
  \scriptstyle \left\{ {{i_{m + 1}}, \ldots ,{i_N}} \right\}}^N {{C_{q,m+1,N}}\left[ {\sum\limits_{h = 1}^{N - m} {\left\{ {{{\left( { - 1} \right)}^h}\sum\limits_{{j_1} = {j_0} + m + 1}^{N - h + 1} { \cdots \sum\limits_{{j_h} = {j_{h - 1}} + 1}^N {\exp \left( { - \sum\limits_{m = 1}^h {\frac{{{u_m}}}{{{{\bar \gamma }_{{i_{{j_m}}}}}}}} } \right)} } } \right\}L_{{S_2}}^{ - 1}\left\{ {\frac{{\exp \left( { - h{u_m}{S_2}} \right)}}{{\left( { - {S_2} - \frac{1}{{{{\bar \gamma }_{{i_q}}}}}} \right)}}} \right\}} } \right]} },
\end{eqnarray} \normalsize
where
\begin{equation} \footnotesize
L_{{S_2}}^{ - 1}\left\{ {\frac{1}{{\left( { - {S_2} - \frac{1}{{{{\bar \gamma }_{{i_q}}}}}} \right)}}} \right\} =  - \exp \left( { - \frac{{{z_2}}}{{{{\bar \gamma }_{{i_q}}}}}} \right),
\end{equation}
\begin{equation} \footnotesize
L_{{S_1}}^{ - 1}\left\{ {\frac{{\exp \left( { - m{u_m}{S_1}} \right)}}{{ - {S_1} - \frac{1}{{{{\bar \gamma }_{{i_k}}}}}}}} \right\} =  - \exp \left( -{\frac{{{z_1} - m{u_m}}}{{{{\bar \gamma }_{{i_k}}}}}} \right)U\left( {{z_1} - m{u_m}} \right),
\end{equation}
\begin{equation} \footnotesize
L_{{S_2}}^{ - 1}\left\{ {\frac{{\exp \left( { - h{u_m}{S_2}} \right)}}{{ - {S_2} - \frac{1}{{{{\bar \gamma }_{{i_q}}}}}}}} \right\} =  - \exp \left( -{\frac{{{z_2} - h{u_m}}}{{{{\bar \gamma }_{{i_q}}}}}} \right)U\left( {{z_2} - h{u_m}} \right).
\end{equation}

\item[4)] PDF of $\sum\limits_{n = 1}^{N_s } {u_n}$:
\footnotesize\begin{eqnarray} \label{eq:non_closed_form_4}
\!\!\!\!\!\!\!\!\!\!\!\!\!\!\!\!\!\!\!\!\!\!\!\!\!\!\!\!\!\!\!\!\!\!\!\!\!\!\!\!\!\!\!\!&&{p_Z}\left( {{z_1},{z_2}} \right) = L_{{S_1},{S_2}}^{ - 1}\left\{ {{\mu _Z}\left( { - {S_1}, - {S_2}} \right)} \right\} \nonumber \\
\!\!\!\!\!\!\!\!\!\!\!\!\!\!\!\!\!\!\!\!\!\!\!\!\!\!\!\!\!\!\!\!\!\!\!\!\!\!\!\!\!\!\!\!&&= \sum\limits_{{i_{{N_s}}} = 1}^N {\int\limits_0^\infty  {d{u_{{N_s}}}\frac{1}{{{{\bar \gamma }_{{i_{{N_s}}}}}}}\exp \left( { - \frac{{{u_{{N_s}}}}}{{{{\bar \gamma }_{{i_{{N_s}}}}}}}} \right)L_{{S_2}}^{ - 1}\left\{ {\exp \left( { - {S_2}{u_{{N_s}}}} \right)} \right\}\sum\limits_{\substack{
 {i_{{N_s} + 1}}, \ldots ,{i_N} \\ 
 {i_{{N_s} + 1}} \ne  \cdots  \ne {i_N} \\ 
 {i_{{N_s} + 1}} \ne {i_{{N_s}}} \\ 
  \vdots  \\ 
 {i_N} \ne {i_{{N_s}}} \\ 
 }}^{1,2, \ldots ,N} {\prod\limits_{\scriptstyle k = {N_s} + 1 \atop 
  \scriptstyle \left\{ {{i_{{N_s} + 1}}, \ldots ,{i_N}} \right\}}^N {\left\{ {1 - \exp \left( { - \frac{{{u_{{N_s}}}}}{{{{\bar \gamma }_{{i_k}}}}}} \right)} \right\}} } } } \nonumber \\
\!\!\!\!\!\!\!\!\!\!\!\!\!\!\!\!\!\!\!\!\!\!\!\!\!\!\!\!\!\!\!\!\!\!\!\!\!\!\!\!\!\!\!\!&&\quad \times \sum\limits_{\left\{ {{i_1}, \ldots ,{i_{{N_s} - 1}}} \right\} \in {{\mathop{\rm P}\nolimits} _{{N_s} - 1}}\left( {{I_N} - \left\{ {{i_{{N_s}}}} \right\} - \left\{ {{i_{{N_s} + 1}}, \ldots ,{i_N}} \right\}} \right)} {\prod\limits_{\scriptstyle q = 1 \atop 
  \scriptstyle \left\{ {{i_1}, \ldots ,{i_{{N_s} - 1}}} \right\}}^{{N_s} - 1} {{C_{q,1,{N_s} - 1}}\exp \left( { - \sum\limits_{l = 1}^{{N_s} - 1} {\left( {\frac{{{u_{{N_s}}}}}{{{{\bar \gamma }_{{i_l}}}}}} \right)} } \right)L_{{S_1}}^{ - 1}\left\{ {\frac{{\exp \left( { - \left( {{N_s} - 1} \right){u_{{N_s}}}{S_1}} \right)}}{{\left( { - {S_1} - \frac{1}{{{{\bar \gamma }_{{i_q}}}}}} \right)}}} \right\}} },
\end{eqnarray} \normalsize
where
\begin{equation} \footnotesize
L_{{S_2}}^{ - 1}\left\{ {\exp \left( { - {S_2}{u_{{N_s}}}} \right)} \right\} = \delta \left( {{z_2} - {u_{{N_s}}}} \right),
\end{equation}
\begin{equation} \footnotesize 
L_{{S_1}}^{ - 1}\left\{ {\frac{{\exp \left( { - \left( {{N_s} - 1} \right){u_{{N_s}}}{S_1}} \right)}}{{\left( { - {S_1} - \frac{1}{{{{\bar \gamma }_{{i_q}}}}}} \right)}}} \right\} =  - \exp \left( { - \frac{{{z_1} - \left( {{N_s} - 1} \right){u_{{N_s}}}}}{{{{\bar \gamma }_{{i_q}}}}}} \right)U\left( {{z_1} - \left( {{N_s} - 1} \right){u_{{N_s}}}} \right).
\end{equation}

\item[5)] Joint PDF of $u_m$ and $\sum\limits_{\scriptstyle n = 1 \hfill \atop \scriptstyle n \ne m \hfill}^{N_s } {u_{n} }$ for $1<m<N_s-1$:
\footnotesize\begin{eqnarray} \label{eq:non_closed_form_5_2}
\!\!\!\!\!\!\!\!\!\!\!\!\!\!\!\!\!\!\!\!\!\!\!\!\!\!\!\!\!\!\!\!\!\!\!\!\!\!\!\!\!\!\!\!&&{p_Z}\left( {{z_1},{z_2},{z_3},{z_4}} \right) =
L_{{S_1},{S_2},{S_3},{S_4}}^{ - 1}\left\{ {{\mu _Z}\left( { - {S_1}, - {S_2}, - {S_3}, - {S_4}} \right)} \right\} \nonumber \\
\!\!\!\!\!\!\!\!\!\!\!\!\!\!\!\!\!\!\!\!\!\!\!\!\!\!\!\!\!\!\!\!\!\!\!\!\!\!\!\!\!\!\!\!&& = \sum\limits_{\scriptstyle {i_{{N_s}}}, \ldots ,{i_N} \atop 
  \scriptstyle {i_{{N_s}}} \ne  \cdots  \ne {i_N}}^{1,2, \ldots ,N} {\int\limits_0^\infty  {d{u_{{N_s}}}\frac{1}{{{{\bar \gamma }_{{i_{{N_s}}}}}}}\exp \left( { - \frac{{{u_{{N_s}}}}}{{{{\bar \gamma }_{{i_{{N_s}}}}}}}} \right)L_{{S_4}}^{ - 1}\left\{ {\exp \left( { - {u_{{N_s}}}{S_4}} \right)} \right\}\prod\limits_{\scriptstyle j = {N_s} + 1 \atop 
  \scriptstyle \left\{ {{i_{{N_s} + 1}}, \ldots ,{i_N}} \right\}}^N {\left\{ {1 - \exp \left( { - \frac{{{u_{{N_s}}}}}{{{{\bar \gamma }_{{i_j}}}}}} \right)} \right\}} } } 
\times \sum\limits_{\scriptstyle {i_m} = 1 \atop 
  \scriptstyle {i_m} \ne {i_{{N_s}, \ldots ,}}{i_N}}^N {\int\limits_{{u_{{N_s}}}}^\infty  {d{u_m}\frac{1}{{{{\bar \gamma }_{{i_m}}}}}\exp \left( { - \frac{{{u_m}}}{{{{\bar \gamma }_{{i_m}}}}}} \right)L_{{S_2}}^{ - 1}\left\{ {\exp \left( { - {u_m}{S_2}} \right)} \right\}} } \nonumber \\
\!\!\!\!\!\!\!\!\!\!\!\!\!\!\!\!\!\!\!\!\!\!\!\!\!\!\!\!\!\!\!\!\!\!\!\!\!\!\!\!\!\!\!\!&&\quad \times \sum\limits_{\left\{ {{i_{m + 1}}, \ldots ,{i_{{N_s} - 1}}} \right\} \in {{\mathop{\rm P}\nolimits} _{{N_s} - m - 1}}\left( {{I_N} - \left\{ {{i_m}} \right\} - \left\{ {{i_{{N_s}}}, \ldots ,{i_N}} \right\}} \right)} {\sum\limits_{k = m + 1}^{{N_s} - 1} {{C_{k,m+1,{N_s} - 1}}\left[ {L_{{S_3}}^{ - 1}\left\{ {\frac{{\exp \left( { - \left( {{N_s} - m - 1} \right) \cdot {u_{{N_s}}} \cdot {S_3}} \right)}}{{\left( { - {S_3} - \frac{1}{{{{\bar \gamma }_{{i_k}}}}}} \right)}}} \right\}} \right.\exp \left( { - \sum\limits_{l = m + 1}^{{N_s} - 1} {\left( {\frac{{{u_{{N_s}}}}}{{{{\bar \gamma }_{{i_l}}}}}} \right)} } \right)} } \nonumber \\
\!\!\!\!\!\!\!\!\!\!\!\!\!\!\!\!\!\!\!\!\!\!\!\!\!\!\!\!\!\!\!\!\!\!\!\!\!\!\!\!\!\!\!\!&&\quad \left. { + \exp \left( { - \sum\limits_{l = m + 1}^{{N_s} - 1} {\left( {\frac{{{u_{{N_s}}}}}{{{{\bar \gamma }_{{i_l}}}}}} \right)} } \right)\sum\limits_{l = 1}^{{N_s} - m - 1} {L_{{S_3}}^{ - 1}\left\{ {\frac{{\exp \left( { - \left( {l \cdot {u_m} + \left( {{N_s} - m - l - 1} \right) \cdot {u_{{N_s}}}} \right) \cdot {S_3}} \right)}}{{\left( { - {S_3} - \frac{1}{{{{\bar \gamma }_{{i_k}}}}}} \right)}}} \right\}\left\{ {{{\left( { - 1} \right)}^l}\sum\limits_{{j_1} = {j_0} + m + 1}^{{N_s} - l} { \cdots \sum\limits_{{j_l} = {j_{l - 1}} + 1}^{{N_s} - 1} {\exp \left( { - \sum\limits_{m = 1}^l {\frac{{{u_m} - {u_{{N_s}}}}}{{{{\bar \gamma }_{{i_{{j_m}}}}}}}} } \right)} } } \right\}} } \right] \nonumber \\
\!\!\!\!\!\!\!\!\!\!\!\!\!\!\!\!\!\!\!\!\!\!\!\!\!\!\!\!\!\!\!\!\!\!\!\!\!\!\!\!\!\!\!\!&&\quad \times \sum\limits_{\left\{ {{i_1}, \ldots ,{i_{m - 1}}} \right\} \in {{\mathop{\rm P}\nolimits} _{m - 1}}\left( {{I_N} - \left\{ {{i_m}} \right\} - \left\{ {{i_{{N_s}}}, \ldots ,{i_N}} \right\} - \left\{ {{i_{m + 1}}, \ldots ,{i_{{N_s} - 1}}} \right\}} \right)} {\sum\limits_{h = 1}^{m - 1} {{C_{h,1,m-1}}\exp \left( { - \sum\limits_{l = 1}^{m - 1} {\left( {\frac{{{u_m}}}{{{{\bar \gamma }_{{i_l}}}}}} \right)} } \right)L_{{S_1}}^{ - 1}\left\{ {\frac{{\exp \left( { - \left( {m - 1} \right){u_m}{S_1}} \right)}}{{\left( { - {S_1} - \frac{1}{{{{\bar \gamma }_{{i_h}}}}}} \right)}}} \right\}} },
\end{eqnarray} \normalsize
where
\begin{equation} \footnotesize
L_{{S_4}}^{ - 1}\left\{ {\exp \left( { - {u_{{N_s}}}{S_4}} \right)} \right\} = \delta \left( {{z_4} - {u_{{N_s}}}} \right),
\end{equation}
\begin{equation} \footnotesize
L_{{S_2}}^{ - 1}\left\{ {\exp \left( { - {u_m}{S_2}} \right)} \right\} = \delta \left( {{z_2} - {u_m}} \right),
\end{equation}
\begin{equation} \footnotesize
L_{{S_3}}^{ - 1}\left\{ {\frac{{\exp \left( { - \left( {{N_s} - m - 1} \right) \cdot {u_{{N_s}}} \cdot {S_3}} \right)}}{{\left( { - {S_3} - \frac{1}{{{{\bar \gamma }_{{i_k}}}}}} \right)}}} \right\} = - \exp \left( { - \frac{{{z_3} - \left( {{N_s} - m - 1} \right) \cdot {u_{{N_s}}}}}{{{{\bar \gamma }_{{i_k}}}}}} \right)U\left( {{z_3} - \left( {{N_s} - m - 1} \right) \cdot {u_{{N_s}}}} \right),
\end{equation}
\begin{equation}\footnotesize
L_{{S_3}}^{ - 1}\left\{ {\frac{{\exp \left( { - \left( {l \cdot {u_m} + \left( {{N_s} - m - l - 1} \right) \cdot {u_{{N_s}}}} \right) \cdot {S_3}} \right)}}{{\left( { - {S_3} - \frac{1}{{{{\bar \gamma }_{{i_k}}}}}} \right)}}} \right\} =  - \exp \left( { - \frac{{{z_3} - \left( {l \cdot {u_m} + \left( {{N_s} - m - l - 1} \right) \cdot {u_{{N_s}}}} \right)}}{{{{\bar \gamma }_{{i_k}}}}}} \right)U\left( {{z_3} - \left( {l \cdot {u_m} + \left( {{N_s} - m - l - 1} \right) \cdot {u_{{N_s}}}} \right)} \right),
\end{equation}
\begin{equation} \footnotesize
L_{{S_1}}^{ - 1}\left\{ {\frac{{\exp \left( { - \left( {m - 1} \right){u_m}{S_1}} \right)}}{{\left( { - {S_1} - \frac{1}{{{{\bar \gamma }_{{i_h}}}}}} \right)}}} \right\} =  - \exp \left( { - \frac{{{z_1} - \left( {m - 1} \right) \cdot {u_m}}}{{{{\bar \gamma }_{{i_h}}}}}} \right)U\left( {{z_1} - \left( {m - 1} \right) \cdot {u_m}} \right),
\end{equation}
\begin{equation} \footnotesize
\prod\limits_{k = {n_1}}^{{n_2}} {\left( {1 - \exp \left( { - \frac{{{u_{{N_s}}}}}{{{{\bar \gamma }_{{i_k}}}}}} \right)} \right) = 1 + } \sum\limits_{k = 1}^{{n_2} - {n_1} + 1} {{{\left( { - 1} \right)}^k}\sum\limits_{{j_1} = {j_0} + {n_1}}^{{n_2} - k + 1} { \cdots \sum\limits_{{j_k} = {j_{k - 1}} + 1}^{{n_2}} {\exp \left( { - \sum\limits_{m = 1}^k {\frac{{{u_{{N_s}}}}}{{{{\bar \gamma }_{{i_{{j_m}}}}}}}} } \right)} } }.
\end{equation} \normalsize

\end{enumerate}
\end{landscape}

\section{Application Example}
The above derived joint PDFs of partial sums of ordered statistics can be applied to the performance analysis of various wireless communication systems.
In this section, we discuss several selected application examples.

\subsection{Derivation of the Capture Probability of GSC RAKE receiver over i.n.d. Rayleigh fading conditions}
Recently, we presented the exact performance analyses of the capture probability on GSC RAKE receivers in \cite{kn:capture_outage_GSC}. For analytical simplification, the fading was assumed both independent and identically distributed from path to path. However, the average SNR of each path (or branch) is different for most practical channel models, especially for wide-band SS signals since the average fading power may vary from one path to the other. For example, experimental measurements indicate that the radio channel is characterized by an exponentially decaying multipath intensity profile (MIP) for indoor office buildings \cite{kn:Hashemi} as well as urban \cite{kn:Erceg} and suburban areas \cite{kn:Turin}. Based on this motivation in mind, with the help of our derived results in Sec V, we can extend our previous result (a closed-form formula of the capture probability on GSC RAKE receivers) by maintaining the assumption of independence among the diversity paths but relaxing the identically distributed assumption. 

Let $u_i$ be the order statistics obtained by arranging $N$ $(N\ge2)$ nonnegative i.n.d. RVs, $\left\{ {\gamma_{i_l} } \right\}_{i_l = 1}^N$, in decreasing order of magnitude such that $u_1 \ge u_2 \ge \cdots \ge u_N$. Based on the system model and definition in \cite{kn:capture_outage_GSC}, the capture probability can be written as
\begin{equation} \small \label{eq:Prob_capture}
\text{Prob}_{GSC-capture}=\text{Pr} \left[ {\frac{\sum\limits_{n = 1}^m {{{u}_n}}}{\sum\limits_{n = 1}^N {{{u}_n}}} > T} \right],
\end{equation}
where $0 < T < 1$ and $m < N$.
If we assume $Z=[Z_1, Z_2]$, $Z_1=\sum\limits_{n = 1}^{m}{u_{n}}$ and $Z_2=\sum\limits_{n = m+1}^N{u_{n}}$, then (\ref{eq:Prob_capture}) can be calculated in terms of the 2-dimensional joint PDF of $Z_1$ and $Z_2$ easily as
\begin{equation} \small \label{eq:Capture_probability_closed_form_1}
\text{Prob}_{GSC-capture}=\text{Pr} \left[ {\frac{{{Z_1}}}{{{Z_1} + {Z_2}}} > T} \right] = \int_0^\infty  {\int_0^{\left( {\frac{{1 - T}}{T}} \right){z_1}} {{p_Z}\left( {{z_1},{z_2}} \right)d{z_2}d{z_1}} }. 
\end{equation}
The joint PDF of $\sum\limits_{n = 1}^{m}{u_{n}}$ and $\sum\limits_{n = m+1}^N{u_{n}}$, ${p_Z}\left( {{z_1},{z_2}} \right)$ can be derived with the help of our extended approach in this paper. More specifically, inserting (\ref{eq:non_closed_form_3}) into (\ref{eq:Capture_probability_closed_form_1}),  the closed-form expression for i.n.d. Rayleigh fading conditions is shown at the top of the next page (refer to Appendix-\ref{AP:capture_prob_GSC} for details).
\begin{figure*} [!h]
\setcounter{equation}{62}
\tiny
\begin{eqnarray} \label{eq:Capture_probability_closed_form_2}
\!\!\!\!\!\!\!\!\!\!\!\!\!\!\!\!\!\!\!\!\!\!\!\!\!\!\!\!\!\!&&\!\!\!\!\!\!\!\!\!\!\!\!{\text{Prob}_{GSC-capture}} \nonumber
\\
\!\!\!\!\!\!\!\!\!\!\!\!\!\!\!\!\!\!\!\!\!\!\!\!\!\!\!\!\!\!&=&\!\!\!\!\!\!\!\sum\limits_{{i_m} = 1}^N {\frac{1}{{{{\bar \gamma }_{{i_m}}}}}\sum\limits_{\left\{ {{i_1}, \ldots ,{i_{m - 1}}} \right\} \in {{\mathop{\rm P}\nolimits} _{m - 1}}\left( {{I_N} - \left\{ {{i_m}} \right\}} \right)} {\sum\limits_{\scriptstyle k = 1 \atop 
  \scriptstyle \left\{ {{i_1}, \ldots ,{i_{m - 1}}} \right\}}^{m - 1} {{C_{k,1,m-1}}\sum\limits_{\left\{ {{i_{m + 1}}, \ldots ,{i_N}} \right\} \in {{\mathop{\rm P}\nolimits} _{N - m}}\left( {{I_N} - \left\{ {{i_m}} \right\} - \left\{ {{i_1}, \ldots ,{i_{m - 1}}} \right\}} \right)} {\sum\limits_{\scriptstyle q = m + 1 \atop 
  \scriptstyle \left\{ {{i_{m + 1}}, \ldots ,{i_N}} \right\}}^N {{C_{q,m+1,N}}} } } } } \nonumber
\\
\!\!\!\!\!\!\!\!\!\!\!\!\!\!\!\!\!\!\!\!\!\!\!\!\!\!\!\!\!\!&&\!\!\!\!\!\!\!\quad \times\left[ {\frac{1}{{\left( {\sum\limits_{l = 1}^m {\left( {\frac{1}{{{{\bar \gamma }_{{i_l}}}}}} \right) - \frac{m}{{{{\bar \gamma }_{{i_k}}}}}} } \right)}}} \right]\int_0^\infty  {\int_0^{\left( {\frac{{1 - T}}{T}} \right){z_1}} {\exp \left( { - \frac{{{z_2}}}{{{{\bar \gamma }_{{i_q}}}}}} \right)\exp \left( { - \frac{{{z_1}}}{{{{\bar \gamma }_{{i_k}}}}}} \right)d{z_2}d{z_1}} } \nonumber
\\
\!\!\!\!\!\!\!\!\!\!\!\!\!\!\!\!\!\!\!\!\!\!\!\!\!\!\!\!\!\!&&\!\!\!\!\!\!\!- \sum\limits_{{i_m} = 1}^N {\frac{1}{{{{\bar \gamma }_{{i_m}}}}}\sum\limits_{\left\{ {{i_1}, \ldots ,{i_{m - 1}}} \right\} \in {{\mathop{\rm P}\nolimits} _{m - 1}}\left( {{I_N} - \left\{ {{i_m}} \right\}} \right)} {\sum\limits_{\scriptstyle k = 1 \atop 
  \scriptstyle \left\{ {{i_1}, \ldots ,{i_{m - 1}}} \right\}}^{m - 1} {{C_{k,1,m-1}}\sum\limits_{\left\{ {{i_{m + 1}}, \ldots ,{i_N}} \right\} \in {{\mathop{\rm P}\nolimits} _{N - m}}\left( {{I_N} - \left\{ {{i_m}} \right\} - \left\{ {{i_1}, \ldots ,{i_{m - 1}}} \right\}} \right)} {\sum\limits_{\scriptstyle q = m + 1 \atop 
  \scriptstyle \left\{ {{i_{m + 1}}, \ldots ,{i_N}} \right\}}^N {{C_{q,m+1,N}}} } } } } \nonumber
\\
\!\!\!\!\!\!\!\!\!\!\!\!\!\!\!\!\!\!\!\!\!\!\!\!\!\!\!\!\!\!&&\!\!\!\!\!\!\!\quad \times \left[ {\frac{1}{{\left( {\sum\limits_{l = 1}^m {\left( {\frac{1}{{{{\bar \gamma }_{{i_l}}}}}} \right) - \frac{m}{{{{\bar \gamma }_{{i_k}}}}}} } \right)}}} \right]\int_0^\infty  {\int_0^{\left( {\frac{{1 - T}}{T}} \right){z_1}} {\exp \left( { - \frac{{{z_2}}}{{{{\bar \gamma }_{{i_q}}}}}} \right)\exp \left( { - \left( {\sum\limits_{l = 1}^m {\left( {\frac{1}{{{{\bar \gamma }_{{i_l}}}}}} \right)} } \right)\frac{{{z_1}}}{m}} \right)d{z_2}d{z_1}} }\nonumber
\\
\!\!\!\!\!\!\!\!\!\!\!\!\!\!\!\!\!\!\!\!\!\!\!\!\!\!\!\!\!\!&&\!\!\!\!\!\!\!+ \sum\limits_{{i_m} = 1}^N {\frac{1}{{{{\bar \gamma }_{{i_m}}}}}\sum\limits_{\left\{ {{i_1}, \ldots ,{i_{m - 1}}} \right\} \in {{\mathop{\rm P}\nolimits} _{m - 1}}\left( {{I_N} - \left\{ {{i_m}} \right\}} \right)} {\sum\limits_{\scriptstyle k = 1 \atop 
  \scriptstyle \left\{ {{i_1}, \ldots ,{i_{m - 1}}} \right\}}^{m - 1} {{C_{k,1,m-1}}\sum\limits_{\left\{ {{i_{m + 1}}, \ldots ,{i_N}} \right\} \in {{\mathop{\rm P}\nolimits} _{N - m}}\left( {{I_N} - \left\{ {{i_m}} \right\} - \left\{ {{i_1}, \ldots ,{i_{m - 1}}} \right\}} \right)} {\sum\limits_{\scriptstyle q = m + 1 \atop 
  \scriptstyle \left\{ {{i_{m + 1}}, \ldots ,{i_N}} \right\}}^N {{C_{q,m+1,N}}} } } } } \nonumber
\\
\!\!\!\!\!\!\!\!\!\!\!\!\!\!\!\!\!\!\!\!\!\!\!\!\!\!\!\!\!\!&&\!\!\!\!\!\!\!\quad \times \left[ \!{\sum\limits_{h = 1}^{N - m} \!{\left\{\! {{{\left(\! { - 1} \!\right)}^h}\!\sum\limits_{{j_1} = {j_0} + m + 1}^{N - h + 1} \!{ \cdots \!\sum\limits_{{j_h} = {j_{h - 1}} + 1}^N \!{\left(\! {\frac{1}{{\left( \!{\sum\limits_{m = 1}^h \!{\left( \!{\frac{1}{{{{\bar \gamma }_{{i_{{j_m}}}}}}}} \!\right) \!+ \!\sum\limits_{l = 1}^m \!{\left(\! {\frac{1}{{{{\bar \gamma }_{{i_l}}}}}}\! \right) \!- \!\frac{m}{{{{\bar \gamma }_{{i_k}}}}} - \frac{h}{{{{\bar \gamma }_{{i_q}}}}}} } } \!\right)}}} \!\right)\!\int_0^\infty \! {\int_0^{\left(\! {\frac{{1 - T}}{T}} \!\right){z_1}}\! {\exp \!\left(\! { - \frac{{{z_1}}}{{{{\bar \gamma }_{{i_k}}}}}} \!\right)\!\exp \!\left(\! { - \frac{{{z_2}}}{{{{\bar \gamma }_{{i_q}}}}}}\!\right)\!U\left(\! {\frac{{{z_1}}}{m} - \frac{{{z_2}}}{h}} \!\right)\!d{z_2}d{z_1}} } } } } \!\right\}} } \!\right] \nonumber
\\
\!\!\!\!\!\!\!\!\!\!\!\!\!\!\!\!\!\!\!\!\!\!\!\!\!\!\!\!\!\!&&\!\!\!\!\!\!\! - \sum\limits_{{i_m} = 1}^N {\frac{1}{{{{\bar \gamma }_{{i_m}}}}}\sum\limits_{\left\{ {{i_1}, \ldots ,{i_{m - 1}}} \right\} \in {{\mathop{\rm P}\nolimits} _{m - 1}}\left( {{I_N} - \left\{ {{i_m}} \right\}} \right)} {\sum\limits_{\scriptstyle k = 1 \atop 
  \scriptstyle \left\{ {{i_1}, \ldots ,{i_{m - 1}}} \right\}}^{m - 1} {{C_{k,1,m-1}}\sum\limits_{\left\{ {{i_{m + 1}}, \ldots ,{i_N}} \right\} \in {{\mathop{\rm P}\nolimits} _{N - m}}\left( {{I_N} - \left\{ {{i_m}} \right\} - \left\{ {{i_1}, \ldots ,{i_{m - 1}}} \right\}} \right)} {\sum\limits_{\scriptstyle q = m + 1 \atop 
  \scriptstyle \left\{ {{i_{m + 1}}, \ldots ,{i_N}} \right\}}^N {{C_{q,m+1,N}}} } } } } \nonumber
\\
\!\!\!\!\!\!\!\!\!\!\!\!\!\!\!\!\!\!\!\!\!\!\!\!\!\!\!\!\!\!&&\!\!\!\!\!\!\!\quad \times \Vast[ {\sum\limits_{h = 1}^{N - m} {{{\left( { - 1} \right)}^h}\sum\limits_{{j_1} = {j_0} + m + 1}^{N - h + 1} { \cdots \sum\limits_{{j_h} = {j_{h - 1}} + 1}^N {\left( {\frac{1}{{\left( {\sum\limits_{m = 1}^h {\left( {\frac{1}{{{{\bar \gamma }_{{i_{{j_m}}}}}}}} \right) + \sum\limits_{l = 1}^m {\left( {\frac{1}{{{{\bar \gamma }_{{i_l}}}}}} \right) - \frac{m}{{{{\bar \gamma }_{{i_k}}}}} - \frac{h}{{{{\bar \gamma }_{{i_q}}}}}} } } \right)}}} \right)} } } }\nonumber
\\
\!\!\!\!\!\!\!\!\!\!\!\!\!\!\!\!\!\!\!\!\!\!\!\!\!\!\!\!\!\!&&\!\!\!\!\!\!\!\quad \quad \quad \times {\int_0^\infty  {\int_0^{\left( {\frac{{1 - T}}{T}} \right){z_1}} {\exp \left( { - \frac{{{z_1}}}{{{{\bar \gamma }_{{i_k}}}}}} \right)\exp \left( { - \left( {\sum\limits_{m = 1}^h {\left( {\frac{1}{{{{\bar \gamma }_{{i_{{j_m}}}}}}}} \right) + \sum\limits_{l = 1}^m {\left( {\frac{1}{{{{\bar \gamma }_{{i_l}}}}}} \right) - \frac{m}{{{{\bar \gamma }_{{i_k}}}}}} } } \right)\frac{{{z_2}}}{h}} \right)U\left( {\frac{{{z_1}}}{m} - \frac{{{z_2}}}{h}} \right)d{z_2}d{z_1}} } }\Vast] \nonumber 
\\
\!\!\!\!\!\!\!\!\!\!\!\!\!\!\!\!\!\!\!\!\!\!\!\!\!\!\!\!\!\!&&\!\!\!\!\!\!\! + \sum\limits_{{i_m} = 1}^N {\frac{1}{{{{\bar \gamma }_{{i_m}}}}}\sum\limits_{\left\{ {{i_1}, \ldots ,{i_{m - 1}}} \right\} \in {{\mathop{\rm P}\nolimits} _{m - 1}}\left( {{I_N} - \left\{ {{i_m}} \right\}} \right)} {\sum\limits_{\scriptstyle k = 1 \atop 
  \scriptstyle \left\{ {{i_1}, \ldots ,{i_{m - 1}}} \right\}}^{m - 1} {{C_{k,1,m-1}}\sum\limits_{\left\{ {{i_{m + 1}}, \ldots ,{i_N}} \right\} \in {{\mathop{\rm P}\nolimits} _{N - m}}\left( {{I_N} - \left\{ {{i_m}} \right\} - \left\{ {{i_1}, \ldots ,{i_{m - 1}}} \right\}} \right)} {\sum\limits_{\scriptstyle q = m + 1 \atop 
  \scriptstyle \left\{ {{i_{m + 1}}, \ldots ,{i_N}} \right\}}^N {{C_{q,m+1,N}}} } } } } \nonumber
\\
\!\!\!\!\!\!\!\!\!\!\!\!\!\!\!\!\!\!\!\!\!\!\!\!\!\!\!\!\!\!&&\!\!\!\!\!\!\!\quad \times \Vast[ {\sum\limits_{h = 1}^{N - m} {{{\left( { - 1} \right)}^h}\sum\limits_{{j_1} = {j_0} + m + 1}^{N - h + 1} { \cdots \sum\limits_{{j_h} = {j_{h - 1}} + 1}^N {\left( {\frac{1}{{\left( {\sum\limits_{m = 1}^h {\left( {\frac{1}{{{{\bar \gamma }_{{i_{{j_m}}}}}}}} \right) + \sum\limits_{l = 1}^m {\left( {\frac{1}{{{{\bar \gamma }_{{i_l}}}}}} \right) - \frac{m}{{{{\bar \gamma }_{{i_k}}}}} - \frac{h}{{{{\bar \gamma }_{{i_q}}}}}} } } \right)}}} \right)} } } } \nonumber
\\
\!\!\!\!\!\!\!\!\!\!\!\!\!\!\!\!\!\!\!\!\!\!\!\!\!\!\!\!\!\!&&\!\!\!\!\!\!\!\quad \quad \quad \times {\int_0^\infty  {\int_0^{\left( {\frac{{1 - T}}{T}} \right){z_1}} {\exp \left( { - \frac{{{z_1}}}{{{{\bar \gamma }_{{i_k}}}}}} \right)\exp \left( { - \frac{{{z_2}}}{{{{\bar \gamma }_{{i_q}}}}}} \right)\left[ {1 - U\left( {\frac{{{z_1}}}{m} - \frac{{{z_2}}}{h}} \right)} \right]d{z_2}d{z_1}} } } \Vast] \nonumber
\\
\!\!\!\!\!\!\!\!\!\!\!\!\!\!\!\!\!\!\!\!\!\!\!\!\!\!\!\!\!\!&&\!\!\!\!\!\!\! - \sum\limits_{{i_m} = 1}^N {\frac{1}{{{{\bar \gamma }_{{i_m}}}}}\sum\limits_{\left\{ {{i_1}, \ldots ,{i_{m - 1}}} \right\} \in {{\mathop{\rm P}\nolimits} _{m - 1}}\left( {{I_N} - \left\{ {{i_m}} \right\}} \right)} {\sum\limits_{\scriptstyle k = 1 \atop 
  \scriptstyle \left\{ {{i_1}, \ldots ,{i_{m - 1}}} \right\}}^{m - 1} {{C_{k,1,m-1}}\sum\limits_{\left\{ {{i_{m + 1}}, \ldots ,{i_N}} \right\} \in {{\mathop{\rm P}\nolimits} _{N - m}}\left( {{I_N} - \left\{ {{i_m}} \right\} - \left\{ {{i_1}, \ldots ,{i_{m - 1}}} \right\}} \right)} {\sum\limits_{\scriptstyle q = m + 1 \atop 
  \scriptstyle \left\{ {{i_{m + 1}}, \ldots ,{i_N}} \right\}}^N {{C_{q,m+1,N}}} } } } } \nonumber
\\
\!\!\!\!\!\!\!\!\!\!\!\!\!\!\!\!\!\!\!\!\!\!\!\!\!\!\!\!\!\!&&\!\!\!\!\!\!\!\quad \times \Vast[{\sum\limits_{h = 1}^{N - m} {{{\left( { - 1} \right)}^h}\sum\limits_{{j_1} = {j_0} + m + 1}^{N - h + 1} { \cdots \sum\limits_{{j_h} = {j_{h - 1}} + 1}^N {\left( {\frac{1}{{\left( {\sum\limits_{m = 1}^h {\left( {\frac{1}{{{{\bar \gamma }_{{i_{{j_m}}}}}}}} \right) + \sum\limits_{l = 1}^m {\left( {\frac{1}{{{{\bar \gamma }_{{i_l}}}}}} \right) - \frac{m}{{{{\bar \gamma }_{{i_k}}}}} - \frac{h}{{{{\bar \gamma }_{{i_q}}}}}} } } \right)}}} \right)} } } } \nonumber
\\
\!\!\!\!\!\!\!\!\!\!\!\!\!\!\!\!\!\!\!\!\!\!\!\!\!\!\!\!\!\!&&\!\!\!\!\!\!\!\quad \quad \quad \times {\int_0^\infty  {\int_0^{\left( {\frac{{1 - T}}{T}} \right){z_1}} {\exp \left( { - \frac{{{z_2}}}{{{{\bar \gamma }_{{i_q}}}}}} \right)\exp \left( { - \left( {\sum\limits_{m = 1}^h {\left( {\frac{1}{{{{\bar \gamma }_{{i_{{j_m}}}}}}}} \right) + \sum\limits_{l = 1}^m {\left( {\frac{1}{{{{\bar \gamma }_{{i_l}}}}}} \right) - \frac{h}{{{{\bar \gamma }_{{i_q}}}}}} } } \right)\frac{{{z_1}}}{m}} \right)\left[ {1 - U\left( {\frac{{{z_1}}}{m} - \frac{{{z_2}}}{h}} \right)} \right]d{z_2}d{z_1}} } } \Vast].
\end{eqnarray}
\hrulefill
\end{figure*}
\clearpage
The closed-form expressions of integral parts in the expression presented in (\ref{eq:Capture_probability_closed_form_2}) can be derived as 
\begin{enumerate}
\item[i)] The first integral part:
\begin{equation} \small \label{eq:Capture_probability_closed_form_int_1}
\int_0^\infty  {\int_0^{\left( {\frac{{1 - T}}{T}} \right){z_1}} {\exp \left( { - \frac{{{z_2}}}{{{{\bar \gamma }_{{i_q}}}}}} \right)\exp \left( { - \frac{{{z_1}}}{{{{\bar \gamma }_{{i_k}}}}}} \right)d{z_2}d{z_1}} } ={{\bar \gamma }_{{i_q}}}{{\bar \gamma }_{{i_k}}} - \frac{{{{\bar \gamma }_{{i_q}}}}}{{\left( {\frac{1}{{T \cdot {{\bar \gamma }_{{i_q}}}}} + \frac{1}{{{{\bar \gamma }_{{i_k}}}}} - \frac{1}{{{{\bar \gamma }_{{i_q}}}}}} \right)}}.
\end{equation}
\item[ii)] The second integral part:
 \small\begin{eqnarray} \label{eq:Capture_probability_closed_form_int_2}
&&\int_0^\infty  {\int_0^{\left( {\frac{{1 - T}}{T}} \right){z_1}} {\exp \left( { - \frac{{{z_2}}}{{{{\bar \gamma }_{{i_q}}}}}} \right)\exp \left( { - \left( {\sum\limits_{l = 1}^m {\left( {\frac{1}{{{{\bar \gamma }_{{i_l}}}}}} \right)} } \right)\frac{{{z_1}}}{m}} \right)d{z_2}d{z_1}} } \nonumber
\\
&&=\frac{{{{\bar \gamma }_{{i_q}}}}}{{\left( {\sum\limits_{l = 1}^m {\left( {\frac{1}{{m \cdot {{\bar \gamma }_{{i_l}}}}}} \right)} } \right)}} - \frac{{{{\bar \gamma }_{{i_q}}}}}{{\left( {\sum\limits_{l = 1}^m {\left( {\frac{1}{{m \cdot {{\bar \gamma }_{{i_l}}}}}} \right) + \frac{{1 - T}}{{T \cdot {{\bar \gamma }_{{i_q}}}}}} } \right)}}.
\end{eqnarray}\normalsize
\item[iii)] The third integral part:
\small\begin{eqnarray} \label{eq:Capture_probability_closed_form_int_3}
&&\int_0^\infty  {\int_0^{\left( {\frac{{1 - T}}{T}} \right){z_1}} {\exp \left( { - \frac{{{z_1}}}{{{{\bar \gamma }_{{i_k}}}}}} \right)\exp \left( { - \frac{{{z_2}}}{{{{\bar \gamma }_{{i_q}}}}}} \right)U\left( {\frac{{{z_1}}}{m} - \frac{{{z_2}}}{h}} \right)d{z_2}d{z_1}} } \nonumber
\\
&&={{\bar \gamma }_{{i_q}}}{{\bar \gamma }_{{i_k}}}U\left( {\frac{1}{m} - \frac{{1 - T}}{{T \cdot h}}} \right) - \frac{{{{\bar \gamma }_{{i_q}}}}}{{\left( {\frac{{1 - T}}{{{{\bar \gamma }_{{i_q}}}T}} + \frac{1}{{{{\bar \gamma }_{{i_k}}}}}} \right)}}U\left( {\frac{1}{m} - \frac{{1 - T}}{{T \cdot h}}} \right) \nonumber
\\
&&\quad+ {{\bar \gamma }_{{i_q}}}{{\bar \gamma }_{{i_k}}}\left[ {1 - U\left( {\frac{1}{m} - \frac{{1 - T}}{{T \cdot h}}} \right)} \right] - \frac{{{{\bar \gamma }_{{i_q}}}}}{{\left( {\frac{h}{{{{\bar \gamma }_{{i_q}}}m}} + \frac{1}{{{{\bar \gamma }_{{i_k}}}}}} \right)}}\left[ {1 - U\left( {\frac{1}{m} - \frac{{1 - T}}{{T \cdot h}}} \right)} \right].
\end{eqnarray} \normalsize
\item[iv)] The forth integral part:
\small\begin{eqnarray} \label{eq:Capture_probability_closed_form_int_4}
\!\!\!\!\!\!\!\!\!\!\!\!\!\!\!\!\!\!\!\!\!\!\!\!\!\!\!\!\!\!\!\!\!\!\!\!\!\!\!\!&&\int_0^\infty \! {\int_0^{\left( \!{\frac{{1 - T}}{T}} \!\right){z_1}} \!{\exp \left( \!{ - \frac{{{z_1}}}{{{{\bar \gamma }_{{i_k}}}}}} \!\right)\!\exp \left( \!{ - \left( {\sum\limits_{m = 1}^h {\left(\! {\frac{1}{{{{\bar \gamma }_{{i_{{j_m}}}}}}}}\! \right) + \sum\limits_{l = 1}^m \!{\left( \!{\frac{1}{{{{\bar \gamma }_{{i_l}}}}}} \!\right) - \frac{m}{{{{\bar \gamma }_{{i_k}}}}}} } } \!\right)\frac{{{z_2}}}{h}} \!\right)\!U\left(\! {\frac{{{z_1}}}{m} - \frac{{{z_2}}}{h}} \!\right)d{z_2}d{z_1}} } \nonumber
\\
\!\!\!\!\!\!\!\!\!\!\!\!\!\!\!\!\!\!\!\!\!\!\!\!\!\!\!\!\!\!\!\!\!\!\!\!\!\!\!\!&&=\frac{{{{\bar \gamma }_{{i_k}}}h}}{{\left( {\sum\limits_{m = 1}^h {\left( {\frac{1}{{{{\bar \gamma }_{{i_{{j_m}}}}}}}} \right) + \sum\limits_{l = 1}^m {\left( {\frac{1}{{{{\bar \gamma }_{{i_l}}}}}} \right) - \frac{m}{{{{\bar \gamma }_{{i_k}}}}}} } } \right)}}U\left( {\frac{1}{m} - \frac{{1 - T}}{{T \cdot h}}} \right)\nonumber
\\
\!\!\!\!\!\!\!\!\!\!\!\!\!\!\!\!\!\!\!\!\!\!\!\!\!\!\!\!\!\!\!\!\!\!\!\!\!\!\!\!&&\quad- \frac{h}{{\left(\! {\sum\limits_{m = 1}^h \!{\left( \!{\frac{1}{{{{\bar \gamma }_{{i_{{j_m}}}}}}}} \!\right) + \sum\limits_{l = 1}^m \!{\left(\! {\frac{1}{{{{\bar \gamma }_{{i_l}}}}}} \!\right) - \frac{m}{{{{\bar \gamma }_{{i_k}}}}}} } } \!\right)\!\left\{\! {\left(\! {\sum\limits_{m = 1}^h \!{\left( \!{\frac{1}{{{{\bar \gamma }_{{i_{{j_m}}}}}}}} \!\right) + \sum\limits_{l = 1}^m \!{\left( \!{\frac{1}{{{{\bar \gamma }_{{i_l}}}}}} \!\right) - \frac{m}{{{{\bar \gamma }_{{i_k}}}}}} } } \!\right)\frac{{1 - T}}{{T \cdot h}} + \frac{1}{{{{\bar \gamma }_{{i_k}}}}}} \!\right\}}}U\left( \!{\frac{1}{m} - \frac{{1 - T}}{{T \cdot h}}} \!\right)\nonumber
\\
\!\!\!\!\!\!\!\!\!\!\!\!\!\!\!\!\!\!\!\!\!\!\!\!\!\!\!\!\!\!\!\!\!\!\!\!\!\!\!\!&& \quad + \frac{{{{\bar \gamma }_{{i_k}}}h}}{{\left( {\sum\limits_{m = 1}^h {\left( {\frac{1}{{{{\bar \gamma }_{{i_{{j_m}}}}}}}} \right) + \sum\limits_{l = 1}^m {\left( {\frac{1}{{{{\bar \gamma }_{{i_l}}}}}} \right) - \frac{m}{{{{\bar \gamma }_{{i_k}}}}}} } } \right)}}\left[ {1 - U\left( {\frac{1}{m} - \frac{{1 - T}}{{T \cdot h}}} \right)} \right]\nonumber
\\
\!\!\!\!\!\!\!\!\!\!\!\!\!\!\!\!\!\!\!\!\!\!\!\!\!\!\!\!\!\!\!\!\!\!\!\!\!\!\!\!&&\quad- \frac{h}{{\left(\! {\sum\limits_{m = 1}^h \!{\left(\! {\frac{1}{{{{\bar \gamma }_{{i_{{j_m}}}}}}}}\! \right) + \sum\limits_{l = 1}^m \!{\left( \!{\frac{1}{{{{\bar \gamma }_{{i_l}}}}}} \!\right) - \frac{m}{{{{\bar \gamma }_{{i_k}}}}}} } }\! \right)\!\left\{\! {\left(\! {\sum\limits_{m = 1}^h \!{\left(\! {\frac{1}{{{{\bar \gamma }_{{i_{{j_m}}}}}}}} \!\right) + \sum\limits_{l = 1}^m \!{\left( \!{\frac{1}{{{{\bar \gamma }_{{i_l}}}}}}\! \right) - \frac{m}{{{{\bar \gamma }_{{i_k}}}}}} } } \!\right)\frac{1}{m} + \frac{1}{{{{\bar \gamma }_{{i_k}}}}}} \!\right\}}}\left[\! {1 - U\left(\! {\frac{1}{m} - \frac{{1 - T}}{{T \cdot h}}} \!\right)} \!\right].
\end{eqnarray} \normalsize
\item[v)] The fifth integral part:
\small\begin{eqnarray} \label{eq:Capture_probability_closed_form_int_5}
&&\int_0^\infty  {\int_0^{\left( {\frac{{1 - T}}{T}} \right){z_1}} {\exp \left( { - \frac{{{z_1}}}{{{{\bar \gamma }_{{i_k}}}}}} \right)\exp \left( { - \frac{{{z_2}}}{{{{\bar \gamma }_{{i_q}}}}}} \right)\left[ {1 - U\left( {\frac{{{z_1}}}{m} - \frac{{{z_2}}}{h}} \right)} \right]d{z_2}d{z_1}} } \nonumber
\\
&&=
\frac{{{{\bar \gamma }_{{i_q}}}}}{{\left( {\frac{h}{{m \cdot {{\bar \gamma }_{{i_q}}}}} + \frac{1}{{{{\bar \gamma }_{{i_k}}}}}} \right)}}U\left( {\frac{{1 - T}}{{T \cdot h}} - \frac{1}{m}} \right) - \frac{{{{\bar \gamma }_{{i_q}}}}}{{\left( {\frac{{1 - T}}{{T \cdot {{\bar \gamma }_{{i_q}}}}} + \frac{1}{{{{\bar \gamma }_{{i_k}}}}}} \right)}}U\left( {\frac{{1 - T}}{{T \cdot h}} - \frac{1}{m}} \right).
\end{eqnarray} \normalsize
\item[vi)] The sixth integral part:
\small\begin{eqnarray} \label{eq:Capture_probability_closed_form_int_6}
\!\!\!\!\!\!\!\!\!\!\!\!\!\!\!\!\!\!\!\!\!\!\!\!\!\!\!&&\int_0^\infty  {\int_0^{\left( {\frac{{1 - T}}{T}} \right){z_1}} {\exp \left( { - \frac{{{z_2}}}{{{{\bar \gamma }_{{i_q}}}}}} \right)\exp \left( { - \left( {\sum\limits_{m = 1}^h {\left( {\frac{1}{{{{\bar \gamma }_{{i_{{j_m}}}}}}}} \right) + \sum\limits_{l = 1}^m {\left( {\frac{1}{{{{\bar \gamma }_{{i_l}}}}}} \right) - \frac{h}{{{{\bar \gamma }_{{i_q}}}}}} } } \right)\frac{{{z_1}}}{m}} \right)\left[ {1 - U\left( {\frac{{{z_1}}}{m} - \frac{{{z_2}}}{h}} \right)} \right]d{z_2}d{z_1}} } \nonumber
\\
\!\!\!\!\!\!\!\!\!\!\!\!\!\!\!\!\!\!\!\!\!\!\!\!\!\!\!&& =\frac{{m \cdot {{\bar \gamma }_{{i_q}}}}}{{\left( {\sum\limits_{m = 1}^h {\left( {\frac{1}{{{{\bar \gamma }_{{i_{{j_m}}}}}}}} \right) + \sum\limits_{l = 1}^m {\left( {\frac{1}{{{{\bar \gamma }_{{i_l}}}}}} \right)} } } \right)}}U\left( {\frac{{1 - T}}{{T \cdot h}} - \frac{1}{m}} \right) \nonumber
\\
\!\!\!\!\!\!\!\!\!\!\!\!\!\!\!\!\!\!\!\!\!\!\!\!\!\!\!&&\quad- \frac{{m \cdot {{\bar \gamma }_{{i_q}}}}}{{\left\{ {\left( {\sum\limits_{m = 1}^h {\left( {\frac{1}{{{{\bar \gamma }_{{i_{{j_m}}}}}}}} \right) + \sum\limits_{l = 1}^m {\left( {\frac{1}{{{{\bar \gamma }_{{i_l}}}}}} \right) - \frac{h}{{{{\bar \gamma }_{{i_q}}}}}} } } \right) + \frac{{m\left( {1 - T} \right)}}{{T \cdot {{\bar \gamma }_{{i_q}}}}}} \right\}}}U\left( {\frac{{1 - T}}{{T \cdot h}} - \frac{1}{m}} \right).
\end{eqnarray} \normalsize
\end{enumerate}

\subsection{Finger Replacement Schemes for RAKE Receivers in the Soft Handover Region over i.n.d. fading channels}
Recently, new finger replacement techniques for RAKE reception in the soft handover (SHO) region \cite{kn:S_Choi_2008_1} has been proposed and analyzed over independent and identical fading (i.i.d.) channel. The proposed schemes are basically based on the block comparison among groups of resolvable paths from different base stations and lead to the reduction of complexity while offering commensurate performance. If we let $Y = \sum\limits_{i = 1}^{{L_c} - {L_s}} {{u_i}}$, ${W_1} = \sum\limits_{i = {L_c} - {L_s} + 1}^{{L_c}} {{u_i}}$ and ${W_n} = \sum\limits_{i = 1}^{{L_s}} {{v_i^n}}$ (for  $n = 2, \ldots ,N$), where  $u_i$ ($i=1,2,\ldots,L_1$) and $v_i^n$ ($i=1,2,\ldots,L_n$) are the order statistics obtained by arranging $L_n$ nonnegative i.n.d. path SNRs corresponding to the $n$th base station ($2\le n\le N$) in descending order, then the RAKE combiner output SNR with GSC is given by  $Y + \max_n{W_n}$. $Y$ and $W_1$ are dependent but $Y$ and $W_n$ are independent.  In practice, the i.i.d. fading assumption on the diversity paths is not always realistic due to, for example, the different adjacent multipath routes with the same path loss. Although non-identical fading is important for practical implementation, \cite{kn:S_Choi_2008_1} have only investigated the non-uniform power delay profile case  only through computer simulation due to the high analysis complexity. The major difficulty in this problem is to derive the joint statistics of ordered exponential variates over non-identical fading assumptions, which can be obtained by applying Theorem~\ref{th:case2_1} and \ref{th:case2_3} of section V. Due to space limitation, the analytical details are omitted in this work.

\subsection{Outage Probability of GSC RAKE Receivers Over i.n.d. Rayleigh Fading Channel subject to self-interference}
Recently, the outage probability of GSC RAKE receivers subject to self-interference over independent and identically distributed  Rayleigh fading channels has been investigated in~\cite{kn:capture_outage_GSC}. Let $\gamma_{i}$ be the SNR of the $i$-th diversity path and $u_i$ ($i=1,2,\ldots,N$) be the order statistics obtained by arranging $N$ $(N\ge2)$ nonnegative i.n.d. RVs, $\left\{ {\gamma_{i} } \right\}_{i = 1}^N$, in decreasing order of magnitude such that $u_1 \ge u_2 \ge \cdots \ge u_N$. Then, the outage probability, denoted by $\rm{P_{Out}}$, is then defined as~\cite{kn:capture_outage_GSC},
\begin{equation} \label{eq:Prob_Outage} \small
{\rm{P_{Out}}}={\rm{Pr}}\left[ \frac{\sum\limits_{n = 1}^{m}{{u_{n}}}}{1+\alpha\sum\limits_{n = m+1}^N{{u_{n}}}}<T\right],
\end{equation}
where $T$ $(0 \le T)$ is the outage threshold and $\alpha$ $(0 \le \alpha \le 1)$ is the self-interference cancellation coefficient (in practice, each path may have the different value of $\alpha$). The closed-form expression for this outage probability over i.i.d. Rayleigh fading paths has been derived and compared to that of partial RAKE receivers. However, the average signal-to-noise ratio (SNR) of each path (or branch) is different for most practical channel models, especially for wide-band spread spectrum signals. As results, to evaluate the outage probability over i.n.d. fading channel subject to self-interference, the major difficulty is to derive the joint PDF of $\sum\limits_{n = 1}^{m}{{u_{n}}}$ and $\sum\limits_{n = m+1}^N{{u_{n}}}$ for i.n.d. case. Fortunately, the target joint PDF can be obtained with the help of Theorem~\ref{th:case1_2} in Section IV.

\section*{Appendices}

\appendices

\section{Derivation of $J_m$} \label{AP:B}
\setcounter{section}{1}
In this appendix, we derive Eq. (\ref{eq:CDF_MGF_multiple}). At first, we derive special case $N=3$ and then we extend this result to general case for arbitrary $N$ and $m$.

\subsection{Special Case} \label{AP:B_Special}
Let us first consider $N=3$ and $m=3$ case as
\begin{equation} \small \label{AP:B_1}
\sum\limits_{\scriptstyle {i_1},{i_2},{i_3} \atop 
  \scriptstyle {i_1} \ne {i_2} \ne {i_3}}^{1,2,3} {\int\limits_0^\infty  {d{u_1}{p_{{i_1}}}\left( {{u_1}} \right)\exp \left( {\lambda {u_1}} \right)\int\limits_0^{{u_1}} {d{u_2}{p_{{i_2}}}\left( {{u_2}} \right)\exp \left( {\lambda {u_2}} \right)\int\limits_0^{{u_2}} {d{u_3}{p_{{i_3}}}\left( {{u_3}} \right)\exp \left( {\lambda {u_3}} \right)} } } }.
\end{equation}
In here, we can rewrite (\ref{AP:B_1}) as

\small\begin{eqnarray} \label{AP:B_2}
 &&\sum\limits_{\scriptstyle {i_1},{i_2},{i_3} \atop 
  \scriptstyle {i_1} \ne {i_2} \ne {i_3}}^{1,2,3} {\int\limits_0^\infty  {d{u_1}{p_{{i_1}}}\left( {{u_1}} \right)\exp \left( {\lambda {u_1}} \right)\int\limits_0^{{u_1}} {d{u_2}{p_{{i_2}}}\left( {{u_2}} \right)\exp \left( {\lambda {u_2}} \right)\int\limits_0^{{u_2}} {d{u_3}{p_{{i_3}}}\left( {{u_3}} \right)\exp \left( {\lambda {u_3}} \right)} } } } \nonumber \\ 
 &&= \sum\limits_{{i_1} = 1}^3 {\int\limits_0^\infty  {d{u_1}{p_{{i_1}}}\left( {{u_1}} \right)\exp \left( {\lambda {u_1}} \right)\sum\limits_{\substack{
 {i_2},{i_3} \\ 
 {i_2} \ne {i_3} \\ 
 {i_2} \ne {i_1} \\ 
 {i_3} \ne {i_1} \\ 
 }}^{1,2,3} {\int\limits_0^{{u_1}} {d{u_2}{p_{{i_2}}}\left( {{u_2}} \right)\exp \left( {\lambda {u_2}} \right)\int\limits_0^{{u_2}} {d{u_3}{p_{{i_3}}}\left( {{u_3}} \right)\exp \left( {\lambda {u_3}} \right)} } } } }.  
\end{eqnarray}\normalsize
To simply (\ref{AP:B_2}), we consider $i_1 = 1,2,3$ separately.
\begin{enumerate}
\item[i)] for $i_1= 1$ \label{AP:B_i}
\\
In this case, we can obtain the following result by deploying (\ref{AP:B_2}) as
\small\begin{eqnarray} \label{AP:B_3}
\!\!\!\!\!\!\!\!\!\!\!\!\!\!\!\!\!\!\!\!\!\!\!\! &&\sum\limits_{\substack{
 {i_2},{i_3} \\ 
 {i_2} \ne {i_3} \\ 
 {i_2} \ne {i_1} \\ 
 {i_3} \ne {i_1} \\ 
 }}^{1,2,3} {\int\limits_0^{{u_1}} {d{u_2}{p_{{i_2}}}\left( {{u_2}} \right)\exp \left( {\lambda {u_2}} \right)\int\limits_0^{{u_2}} {d{u_3}{p_{{i_3}}}\left( {{u_3}} \right)\exp \left( {\lambda {u_3}} \right)} } }  \nonumber \\
\!\!\!\!\!\!\!\!\!\!\!\!\!\!\!\!\!\!\!\!\!\!\!\! &&= \int\limits_0^{{u_1}} {d{u_2}{p_2}\left( {{u_2}} \right)\exp \left( {\lambda {u_2}} \right)\int\limits_0^{{u_2}} {d{u_3}{p_3}\left( {{u_3}} \right)\exp \left( {\lambda {u_3}} \right)} }  + \int\limits_0^{{u_1}} {d{u_2}{p_3}\left( {{u_2}} \right)\exp \left( {\lambda {u_2}} \right)\int\limits_0^{{u_2}} {d{u_3}{p_2}\left( {{u_3}} \right)\exp \left( {\lambda {u_3}} \right)} }.
\end{eqnarray} \normalsize
In (\ref{AP:B_3}), noting that $p_n\left(u_m\right)\exp\left( \lambda u_m \right)={c_n}^{'}\left(u_m,\lambda\right)$, after applying integration by part similar to \cite{kn:unified_approach}, we can obtain the following result
\small \begin{eqnarray} \label{AP:B_4}
&&\int\limits_0^{{u_1}} {d{u_2}{p_2}\left( {{u_2}} \right)\exp \left( {\lambda {u_2}} \right)\int\limits_0^{{u_2}} {d{u_3}{p_3}\left( {{u_3}} \right)\exp \left( {\lambda {u_3}} \right)} } \nonumber
\\
&&= \int\limits_0^{{u_1}} {d{u_2}{c_2}^\prime \left( {{u_2},\lambda } \right){c_3}\left( {{u_2},\lambda } \right)} \nonumber
\\
&&= {c_2}\left( {{u_1},\lambda } \right){c_3}\left( {{u_1},\lambda } \right) - \int\limits_0^{{u_1}} {d{u_2}{c_2}\left( {{u_2},\lambda } \right){c_3}^\prime \left( {{u_2},\lambda } \right)} \nonumber
\\
&&= {c_2}\left( {{u_1},\lambda } \right){c_3}\left( {{u_1},\lambda } \right) - \int\limits_0^{{u_1}} {d{u_2}{p_3}\left( {{u_2}} \right)\exp \left( {\lambda {u_2}} \right)\int\limits_0^{{u_2}} {d{u_3}{p_2}\left( {{u_3}} \right)\exp \left( {\lambda {u_3}} \right)} }.
\end{eqnarray} \normalsize

Using (\ref{AP:B_4}) in (\ref{AP:B_3}) and then some manipulation, we can show

\small\begin{eqnarray} \label{AP:B_5}
&&\sum\limits_{\substack{
 {i_2},{i_3} \\ 
 {i_2} \ne {i_3} \\ 
 {i_2} \ne {i_1} \\ 
 {i_3} \ne {i_1} \\ 
 }}^{1,2,3} {\int\limits_0^{{u_1}} {d{u_2}{p_{{i_2}}}\left( {{u_2}} \right)\exp \left( {\lambda {u_2}} \right)\int\limits_0^{{u_2}} {d{u_3}{p_{{i_3}}}\left( {{u_3}} \right)\exp \left( {\lambda {u_3}} \right)} } }  \nonumber \\
&&= {c_2}\left( {{u_1},\lambda } \right){c_3}\left( {{u_1},\lambda } \right) - \int\limits_0^{{u_1}} {d{u_2}{p_3}\left( {{u_2}} \right)\exp \left( {\lambda {u_2}} \right)\int\limits_0^{{u_2}} {d{u_3}{p_2}\left( {{u_3}} \right)\exp \left( {\lambda {u_3}} \right)} }  \nonumber
\\
&&\quad + \int\limits_0^{{u_1}} {d{u_2}{p_3}\left( {{u_2}} \right)\exp \left( {\lambda {u_2}} \right)\int\limits_0^{{u_2}} {d{u_3}{p_2}\left( {{u_3}} \right)\exp \left( {\lambda {u_3}} \right)} }  \\ 
  &&= {c_2}\left( {{u_1},\lambda } \right){c_3}\left( {{u_1},\lambda } \right).
\end{eqnarray}\normalsize
\item[ii)] for $i_1= 2$ \label{AP:B_ii}
\\
In this case, we can obtain the following result by deploying (\ref{AP:B_2}) as
\small\begin{eqnarray} \label{AP:B_6}
\!\!\!\!\!\!\!\!\!\!\!\!\!\!\!&&\sum\limits_{\substack{
 {i_2},{i_3} \\ 
 {i_2} \ne {i_3} \\ 
 {i_2} \ne {i_1} \\ 
 {i_3} \ne {i_1} \\ 
 }}^{1,2,3} {\int\limits_0^{{u_1}} {d{u_2}{p_{{i_2}}}\left( {{u_2}} \right)\exp \left( {\lambda {u_2}} \right)\int\limits_0^{{u_2}} {d{u_3}{p_{{i_3}}}\left( {{u_3}} \right)\exp \left( {\lambda {u_3}} \right)} } }  \nonumber \\
\!\!\!\!\!\!\!\!\!\!\!\!\!\!\!&&= \int\limits_0^{{u_1}} \!{d{u_2}{p_1}\left(\! {{u_2}} \!\right)\exp \!\left( \!{\lambda {u_2}} \!\right)\!\int\limits_0^{{u_2}} \!{d{u_3}{p_3}\left( \!{{u_3}} \!\right)\exp \!\left(\! {\lambda {u_3}} \!\right)} }  + \int\limits_0^{{u_1}}\! {d{u_2}{p_3}\left(\! {{u_2}} \!\right)\exp \!\left(\! {\lambda {u_2}} \!\right)\!\int\limits_0^{{u_2}}\! {d{u_3}{p_1}\left(\! {{u_3}} \!\right)\exp \!\left( \!{\lambda {u_3}} \!\right)} }.  \end{eqnarray}\normalsize
With (\ref{AP:B_6}), by applying similar approach like \ref{AP:B_i}-i), we can show the following result
\small\begin{eqnarray} \label{AP:B_7}
 &&\sum\limits_{\substack{
 {i_2},{i_3} \\ 
 {i_2} \ne {i_3} \\ 
 {i_2} \ne {i_1} \\ 
 {i_3} \ne {i_1} \\ 
 }}^{1,2,3} {\int\limits_0^{{u_1}} {d{u_2}{p_{{i_2}}}\left( {{u_2}} \right)\exp \left( {\lambda {u_2}} \right)\int\limits_0^{{u_2}} {d{u_3}{p_{{i_3}}}\left( {{u_3}} \right)\exp \left( {\lambda {u_3}} \right)} } }  \nonumber \\
&&= {c_1}\left( {{u_1},\lambda } \right){c_3}\left( {{u_1},\lambda } \right).
\end{eqnarray}\normalsize

\item[iii)] for $i_1= 3$
\\
In this case, we can also obtain the following result by deploying (\ref{AP:B_2}) as
\small\begin{eqnarray}\label{AP:B_8}
\!\!\!\!\!\!\!\!\!\!\!\!\!\!\!&&\sum\limits_{\substack{
 {i_2},{i_3} \\ 
 {i_2} \ne {i_3} \\ 
 {i_2} \ne {i_1} \\ 
 {i_3} \ne {i_1} \\ 
 }}^{1,2,3} {\int\limits_0^{{u_1}} {d{u_2}{p_{{i_2}}}\left( {{u_2}} \right)\exp \left( {\lambda {u_2}} \right)\int\limits_0^{{u_2}} {d{u_3}{p_{{i_3}}}\left( {{u_3}} \right)\exp \left( {\lambda {u_3}} \right)} } }  \nonumber \\
\!\!\!\!\!\!\!\!\!\!\!\!\!\!\!&&= \int\limits_0^{{u_1}} \!{d{u_2}{p_1}\left(\! {{u_2}} \!\right)\exp \!\left( \!{\lambda {u_2}} \!\right)\!\int\limits_0^{{u_2}}\! {d{u_3}{p_2}\left(\! {{u_3}} \!\right)\exp \!\left(\! {\lambda {u_3}} \!\right)} }  + \int\limits_0^{{u_1}}\! {d{u_2}{p_2}\left(\! {{u_2}} \!\right)\exp\! \left( \!{\lambda {u_2}} \!\right)\!\int\limits_0^{{u_2}}\! {d{u_3}{p_1}\left(\! {{u_3}} \!\right)\exp \!\left(\! {\lambda {u_3}} \!\right)} }.
\end{eqnarray} \normalsize
With (\ref{AP:B_8}), by applying similar approach like \ref{AP:B_i}-i) and \ref{AP:B_i}-ii), we can show the following result
\small\begin{eqnarray}\label{AP:B_9}
 &&\sum\limits_{\substack{
 {i_2},{i_3} \\ 
 {i_2} \ne {i_3} \\ 
 {i_2} \ne {i_1} \\ 
 {i_3} \ne {i_1} \\ 
 }}^{1,2,3} {\int\limits_0^{{u_1}} {d{u_2}{p_{{i_2}}}\left( {{u_2}} \right)\exp \left( {\lambda {u_2}} \right)\int\limits_0^{{u_2}} {d{u_3}{p_{{i_3}}}\left( {{u_3}} \right)\exp \left( {\lambda {u_3}} \right)} } }  \nonumber \\
&&= {c_1}\left( {{u_1},\lambda } \right){c_2}\left( {{u_1},\lambda } \right). 
\end{eqnarray} \normalsize
\end{enumerate}
From results (\ref{AP:B_5}), (\ref{AP:B_7}), and (\ref{AP:B_9}), we can finally simplify (\ref{AP:B_1}) as
\small\begin{eqnarray} \label{AP:B_special_final}
&&\sum\limits_{\substack{
 {i_2},{i_3} \\ 
 {i_2} \ne {i_3} \\ 
 {i_2} \ne {i_1} \\ 
 {i_3} \ne {i_1} \\ 
 }}^{1,2,3} {\int\limits_0^{{u_1}} {d{u_2}{p_{{i_2}}}\left( {{u_2}} \right)\exp \left( {\lambda {u_2}} \right)\int\limits_0^{{u_2}} {d{u_3}{p_{{i_3}}}\left( {{u_3}} \right)\exp \left( {\lambda {u_3}} \right)} } }  \nonumber
\\
&&= \sum\limits_{\left\{ {{i_2},{i_3}} \right\} \in {{\mathop{\rm P}\nolimits} _2}\left( {{I_3} - \left\{ {{i_1}} \right\}} \right)} {\prod\limits_{\scriptstyle l = 1 \atop 
  \scriptstyle \left\{ {{i_2},{i_3}} \right\}}^2 {{c_{{i_l}}}\left( {{u_1},\lambda } \right)} } \end{eqnarray} \normalsize

\subsection{General Case}  \label{AP:B_General}
With arbitrary $N$ and $m$, we can re-write (\ref{AP:B_1}) as
\small\begin{eqnarray} \label{AP:B_general}
{J_m} &=& \sum\limits_{\substack{
 {i_m},{i_{m + 1}}, \ldots ,{i_N} \\ 
 {i_m} \ne {i_{m + 1}} \ne  \cdots  \ne {i_N} \\ 
 {i_m} \ne {i_1},{i_2}, \ldots ,{i_{m - 1}} \\ 
 {i_{m + 1}} \ne {i_1},{i_2}, \ldots ,{i_{m - 1}} \\ 
  \vdots  \\ 
 {i_N} \ne {i_1},{i_2}, \ldots ,{i_{m - 1}} \\ 
 }}^{1,2, \ldots ,N} {\int\limits_0^{{u_{m - 1}}} {d{u_m}{p_{{i_m}}}\left( {{u_m}} \right)\exp \left( {\lambda {u_m}} \right)\int\limits_0^{{u_m}} {d{u_{m + 1}}{p_{{i_{m + 1}}}}\left( {{u_{m + 1}}} \right)\exp \left( {\lambda {u_{m + 1}}} \right)}}} \nonumber 
\\
&&{{{\cdots \int\limits_0^{{u_{N - 1}}} {d{u_N}{p_{{i_N}}}\left( {{u_N}} \right)\exp \left( {\lambda {u_N}} \right)} } } }.
\end{eqnarray} \normalsize

By applying the process presented in \ref{AP:B_Special} to (\ref{AP:B_general}) similarly, the (\ref{AP:B_special_final}) can be generalized to arbitrary $N$ and $m$, which leads to the result in Eq. (\ref{eq:CDF_MGF_multiple}) as 
\begin{equation} \small \label{AP:B_general_final}
{J_m} = \sum\limits_{\left\{ {{i_m},{i_{m + 1}}, \ldots ,{i_N}} \right\} \in {{\mathop{\rm P}\nolimits} _{N - m + 1}}\left( {{I_N} - \left\{ {{i_1},{i_2}, \ldots ,{i_{m - 1}}} \right\}} \right)} {\prod\limits_{\scriptstyle l = m \atop 
  \scriptstyle \left\{ {{i_m},{i_{m + 1}}, \ldots ,{i_N}} \right\}}^N {{c_{{i_l}}}\left( {{u_{m - 1}},\lambda } \right)} }.
\end{equation}

\section{Derivation of $J'_{m}$} \label{AP:C}
In this appendix, we derive Eq. (\ref{eq:EDF_MGF_multiple}). At first, we similarly derive special case $N=3$ and $m=3$ and then we extend this result to general case for arbitrary $N$ and $m$.

\subsection{Special Case} \label{AP:C_Special}
Let us first consider $N=3$ and $m=3$ case as
\begin{equation} \small \label{AP:C_1}
\sum\limits_{\scriptstyle {i_1},{i_2},{i_3} \atop 
  \scriptstyle {i_1} \ne {i_2} \ne {i_3}}^{1,2,3} {\int\limits_{{u_4}}^\infty  {d{u_3}{p_{{i_3}}}\left( {{u_3}} \right)\exp \left( {\lambda {u_3}} \right)\int\limits_{{u_3}}^\infty  {d{u_2}{p_{{i_2}}}\left( {{u_2}} \right)\exp \left( {\lambda {u_2}} \right)\int\limits_{{u_2}}^\infty  {d{u_1}{p_{{i_1}}}\left( {{u_1}} \right)\exp \left( {\lambda {u_1}} \right)} } } }.
\end{equation}
In here, similar to \ref{AP:B_Special}, after deploying (\ref{AP:C_1}) and then some manipulation with the help of integral by part based on $p_n\left( u_m \right) \exp \left(\lambda u_m\right)=-{e_n}'\left(\lambda u_m\right)$, we can finally simplify (\ref{AP:C_1}) as
\small \begin{eqnarray} \label{AP:C_special_final}
&&\sum\limits_{\scriptstyle {i_1},{i_2},{i_3} \atop 
  \scriptstyle {i_1} \ne {i_2} \ne {i_3}}^{1,2,3} {\int\limits_{{u_4}}^\infty  {d{u_3}{p_{{i_3}}}\left( {{u_3}} \right)\exp \left( {\lambda {u_3}} \right)\int\limits_{{u_3}}^\infty  {d{u_2}{p_{{i_2}}}\left( {{u_2}} \right)\exp \left( {\lambda {u_2}} \right)\int\limits_{{u_2}}^\infty  {d{u_1}{p_{{i_1}}}\left( {{u_1}} \right)\exp \left( {\lambda {u_1}} \right)} } } }  \nonumber \\
&& = {e_1}\left( {{u_4},\lambda } \right){e_2}\left( {{u_4},\lambda } \right){e_3}\left( {{u_4},\lambda } \right).
\end{eqnarray} \normalsize

\subsection{General Case}  \label{AP:C_General}
With arbitrary $N$ and $m$, we can re-write (\ref{AP:C_1}) as
\small\begin{eqnarray} \label{AP:C_general}
{{J'}_m} &=& \sum\limits_{\substack{
 {i_1},{i_2}, \ldots ,{i_m} \\ 
 {i_1} \ne {i_2} \ne  \cdots  \ne {i_m} \\ 
 {i_1} \ne {i_{m + 1}},{i_{m + 2}}, \ldots ,{i_N} \\ 
 {i_2} \ne {i_{m + 1}},{i_{m + 2}}, \ldots ,{i_N} \\ 
  \vdots  \\ 
 {i_m} \ne {i_{m + 1}},{i_{m + 2}}, \ldots ,{i_N} \\ 
 }}^{1,2, \ldots ,N} {\int\limits_{{u_{m + 1}}}^\infty  {d{u_m}{p_{{i_m}}}\left( {{u_m}} \right)\exp \left( {\lambda {u_m}} \right)\int\limits_{{u_m}}^\infty  {d{u_{m - 1}}{p_{{i_{m - 1}}}}\left( {{u_{m - 1}}} \right)\exp \left( {\lambda {u_{m - 1}}} \right) }}}\nonumber
\\
&&{{{\cdots \int\limits_{{u_2}}^\infty  {d{u_1}{p_{{i_1}}}\left( {{u_1}} \right)\exp \left( {\lambda {u_1}} \right)} } } }.
\end{eqnarray} \normalsize

By applying the process presented in \ref{AP:C_Special} to (\ref{AP:C_general}) similar to \ref{AP:B}, the (\ref{AP:C_special_final}) can be generalized to arbitrary $N$ and $m$, which leads to the result in Eq. (\ref{eq:EDF_MGF_multiple}) as the closed-form
\begin{equation} \small \label{AP:C_general_final}
{{J'}_m} = \sum\limits_{\left\{ {{i_1},{i_2}, \ldots ,{i_m}} \right\} \in {{\mathop{\rm P}\nolimits} _m}\left( {{I_N} - \left\{ {{i_{m + 1}},{i_{m + 2}}, \ldots ,{i_N}} \right\}} \right)} {\prod\limits_{\scriptstyle l = 1 \atop 
  \scriptstyle \left\{ {{i_1},{i_2}, \ldots ,{i_m}} \right\}}^m {{e_{{i_l}}}\left( {{u_{m + 1}},\lambda } \right)} }.
\end{equation}

\section{Derivation of $J''_{a,b}$} \label{AP:D}
In this appendix, we show the derivation of Eq.(\ref{eq:IntervalMGF_multiple}). Similar to the derivation progress of (\ref{eq:CDF_MGF_multiple}) and (\ref{eq:EDF_MGF_multiple}), we first derive special case $N=3$ and $m=3$ and then we extend this result to general case for arbitrary $N$ and $m$.

\subsection{Special Case} \label{AP:D_Special}
Let us first consider $N=3$ and $m=3$ case  as
\begin{equation} \small \label{AP:D_1}
\sum\limits_{\scriptstyle {i_1},{i_2},{i_3} \atop 
  \scriptstyle {i_1} \ne {i_2} \ne {i_3}}^{1,2,3} {\int\limits_{{u_5}}^{{u_1}} {d{u_4}{p_{{i_3}}}\left( {{u_4}} \right)\exp \left( {\lambda {u_4}} \right)\int\limits_{{u_4}}^{{u_1}} {d{u_3}{p_{{i_2}}}\left( {{u_3}} \right)\exp \left( {\lambda {u_3}} \right)\int\limits_{{u_3}}^{{u_1}} {d{u_2}{p_{{i_1}}}\left( {{u_2}} \right)\exp \left( {\lambda {u_2}} \right)} } } }.
\end{equation}
In here, by deploying (\ref{AP:D_1}), (\ref{AP:D_1}) can be re-written as
\small \begin{eqnarray} \label{AP:D_2}
 &&\sum\limits_{\scriptstyle {i_1},{i_2},{i_3} \atop 
  \scriptstyle {i_1} \ne {i_2} \ne {i_3}}^{1,2,3} {\int\limits_{{u_5}}^{{u_1}} {d{u_4}{p_{{i_3}}}\left( {{u_4}} \right)\exp \left( {\lambda {u_4}} \right)\int\limits_{{u_4}}^{{u_1}} {d{u_3}{p_{{i_2}}}\left( {{u_3}} \right)\exp \left( {\lambda {u_3}} \right)\int\limits_{{u_3}}^{{u_1}} {d{u_2}{p_{{i_1}}}\left( {{u_2}} \right)\exp \left( {\lambda {u_2}} \right)} } } }  \nonumber \\ 
  &&= \int\limits_{{u_5}}^{{u_1}} {d{u_4}{p_1}\left( {{u_4}} \right)\exp \left( {\lambda {u_4}} \right)\int\limits_{{u_4}}^{{u_1}} {d{u_3}{p_2}\left( {{u_3}} \right)\exp \left( {\lambda {u_3}} \right)\int\limits_{{u_3}}^{{u_1}} {d{u_2}{p_3}\left( {{u_2}} \right)\exp \left( {\lambda {u_2}} \right)} } }  \nonumber \\ 
  &&\quad + \int\limits_{{u_5}}^{{u_1}} {d{u_4}{p_1}\left( {{u_4}} \right)\exp \left( {\lambda {u_4}} \right)\int\limits_{{u_4}}^{{u_1}} {d{u_3}{p_3}\left( {{u_3}} \right)\exp \left( {\lambda {u_3}} \right)\int\limits_{{u_3}}^{{u_1}} {d{u_2}{p_2}\left( {{u_2}} \right)\exp \left( {\lambda {u_2}} \right)} } }  \nonumber \\ 
  &&\quad + \int\limits_{{u_5}}^{{u_1}} {d{u_4}{p_2}\left( {{u_4}} \right)\exp \left( {\lambda {u_4}} \right)\int\limits_{{u_4}}^{{u_1}} {d{u_3}{p_1}\left( {{u_3}} \right)\exp \left( {\lambda {u_3}} \right)\int\limits_{{u_3}}^{{u_1}} {d{u_2}{p_3}\left( {{u_2}} \right)\exp \left( {\lambda {u_2}} \right)} } }  \nonumber \\ 
  &&\quad + \int\limits_{{u_5}}^{{u_1}} {d{u_4}{p_2}\left( {{u_4}} \right)\exp \left( {\lambda {u_4}} \right)\int\limits_{{u_4}}^{{u_1}} {d{u_3}{p_3}\left( {{u_3}} \right)\exp \left( {\lambda {u_3}} \right)\int\limits_{{u_3}}^{{u_1}} {d{u_2}{p_2}\left( {{u_2}} \right)\exp \left( {\lambda {u_2}} \right)} } }  \nonumber \\ 
  &&\quad + \int\limits_{{u_5}}^{{u_1}} {d{u_4}{p_3}\left( {{u_4}} \right)\exp \left( {\lambda {u_4}} \right)\int\limits_{{u_4}}^{{u_1}} {d{u_3}{p_1}\left( {{u_3}} \right)\exp \left( {\lambda {u_3}} \right)\int\limits_{{u_3}}^{{u_1}} {d{u_2}{p_2}\left( {{u_2}} \right)\exp \left( {\lambda {u_2}} \right)} } }  \nonumber \\ 
  &&\quad + \int\limits_{{u_5}}^{{u_1}} {d{u_4}{p_3}\left( {{u_4}} \right)\exp \left( {\lambda {u_4}} \right)\int\limits_{{u_4}}^{{u_1}} {d{u_3}{p_2}\left( {{u_3}} \right)\exp \left( {\lambda {u_3}} \right)\int\limits_{{u_3}}^{{u_1}} {d{u_2}{p_1}\left( {{u_2}} \right)\exp \left( {\lambda {u_2}} \right)} } }.   
\end{eqnarray} \normalsize
In (\ref{AP:D_2}), using similar manipulations with (\ref{AP:B_4}) to the ones used in the previous Appendices \ref{AP:B} and \ref{AP:C}, the first, the second and the third multiple integral terms can be also re-written as, respectively 
\small \begin{eqnarray} \label{AP:D_3}
\!\!\!\!\! &&\int\limits_{{u_5}}^{{u_1}}\! {d{u_4}{p_1}\!\left( \!{{u_4}} \!\right)\exp \!\left( \!{\lambda {u_4}}\! \right)\!\int\limits_{{u_4}}^{{u_1}}\! {d{u_3}{p_2}\!\left(\! {{u_3}}\! \right)\exp\! \left( \!{\lambda {u_3}}\! \right)\!\int\limits_{{u_3}}^{{u_1}}\! {d{u_2}{p_3}\left(\! {{u_2}}\! \right)\exp\! \left(\! {\lambda {u_2}} \!\right)} } }  \nonumber \\
\!\!\!\!\!&&=\! \int\limits_{{u_5}}^{{u_1}}\! {d{u_4}{p_1}\left(\! {{u_4}} \!\right)\exp \!\left(\! {\lambda {u_4}} \!\right)\left\{\! {{c_2}\left(\! {{u_4},\lambda } \!\right){c_3}\left(\! {{u_4},\lambda } \!\right) - {c_2}\left(\! {{u_4},\lambda } \!\right){c_3}\left(\! {{u_1},\lambda } \!\right) + \int\limits_{{u_4}}^{{u_1}}\! {d{u_3}{p_3}\left( \!{{u_3}}\! \right)\exp \!\left(\! {\lambda {u_3}}\! \right){c_2}\left(\! {{u_3},\lambda } \!\right)} } \!\right\}}  ,
\end{eqnarray}
\begin{eqnarray} \label{AP:D_4}
\!\!\!\!\! &&\int\limits_{{u_5}}^{{u_1}}\! {d{u_4}{p_1}\!\left( \!{{u_4}} \!\right)\exp \!\left(\! {\lambda {u_4}} \!\right)\!\int\limits_{{u_4}}^{{u_1}}\! {d{u_3}{p_3}\left( \!{{u_3}} \!\right)\exp \!\left( \!{\lambda {u_3}} \!\right)\!\int\limits_{{u_3}}^{{u_1}} \!{d{u_2}{p_2}\left(\! {{u_2}} \!\right)\exp\! \left( \!{\lambda {u_2}}\! \right)} } }  \nonumber \\
\!\!\!\!\!&&=\! \int\limits_{{u_5}}^{{u_1}} \!{d{u_4}{p_1}\left(\! {{u_4}} \!\right)\exp\! \left(\! {\lambda {u_4}} \!\right)\left\{\! {{c_3}\left(\! {{u_1},\lambda } \!\right){c_2}\left(\! {{u_1},\lambda } \!\right) - {c_3}\left(\! {{u_4},\lambda } \!\right){c_2}\left(\! {{u_1},\lambda } \!\right) - \int\limits_{{u_4}}^{{u_1}} \!{d{u_3}{p_3}\left(\! {{u_3}} \!\right)\exp \!\left(\! {\lambda {u_3}} \!\right){c_2}\left(\! {{u_3},\lambda } \!\right)} }\! \right\}},
\end{eqnarray}
\begin{eqnarray} \label{AP:D_5}
\!\!\!\!\!&&\int\limits_{{u_5}}^{{u_1}} \!{d{u_4}{p_2}\left( \!{{u_4}} \!\right)\exp\! \left(\! {\lambda {u_4}}\! \right)\!\int\limits_{{u_4}}^{{u_1}} \!{d{u_3}{p_1}\left( \!{{u_3}} \!\right)\exp\! \left( \!{\lambda {u_3}} \!\right)\!\int\limits_{{u_3}}^{{u_1}} \!{d{u_2}{p_3}\left(\! {{u_2}} \!\right)\exp\! \left(\! {\lambda {u_2}} \!\right)} } }  \nonumber \\
\!\!\!\!\!&&= \!\int\limits_{{u_5}}^{{u_1}}\! {d{u_4}{p_2}\left( \!{{u_4}} \!\right)\exp\! \left(\! {\lambda {u_4}} \!\right)\left\{\! {{c_1}\left( \!{{u_4},\lambda }\! \right){c_3}\left(\! {{u_4},\lambda } \!\right) - {c_1}\left(\! {{u_4},\lambda } \!\right){c_3}\left(\! {{u_1},\lambda } \!\right) + \int\limits_{{u_4}}^{{u_1}}\! {d{u_3}{p_3}\left(\! {{u_3}} \!\right)\exp \!\left( \!{\lambda {u_3}} \!\right){c_1}\left(\! {{u_3},\lambda } \!\right)} } \!\right\}}.
\end{eqnarray} \normalsize

Similarly in (\ref{AP:D_2}), the $4$-th, $5$-th and the final multiple integral terms can be also re-written as respectively 
\small \begin{eqnarray} \label{AP:D_6}
\!\!\!\!\!&&\int\limits_{{u_5}}^{{u_1}}\! {d{u_4}{p_2}\left(\! {{u_4}} \!\right)\exp \!\left(\! {\lambda {u_4}} \!\right)\!\int\limits_{{u_4}}^{{u_1}}\! {d{u_3}{p_3}\left( \!{{u_3}} \!\right)\exp \!\left(\! {\lambda {u_3}} \!\right)\!\int\limits_{{u_3}}^{{u_1}}\! {d{u_2}{p_1}\left( \!{{u_2}} \!\right)\exp \!\left( \!{\lambda {u_2}} \!\right)} } }  \nonumber \\
\!\!\!\!\!&&=\! \int\limits_{{u_5}}^{{u_1}} \!{d{u_4}{p_2}\left( \!{{u_4}} \!\right)\exp \!\left(\! {\lambda {u_4}} \!\right)\left\{ \!{{c_3}\left( \!{{u_1},\lambda } \!\right){c_1}\left(\! {{u_1},\lambda } \!\right) - {c_3}\left(\! {{u_4},\lambda } \!\right){c_1}\left( \!{{u_1},\lambda } \!\right) - \int\limits_{{u_4}}^{{u_1}}\! {d{u_3}{p_3}\left(\! {{u_3}} \!\right)\exp \!\left( \!{\lambda {u_3}} \!\right){c_1}\left( \!{{u_3},\lambda } \!\right)} } \!\right\}},
\end{eqnarray}
\begin{eqnarray} \label{AP:D_7}
\!\!\!\!\!&&\int\limits_{{u_5}}^{{u_1}} \!{d{u_4}{p_3}\left(\! {{u_4}} \!\right)\exp\! \left( \!{\lambda {u_4}}\! \right)\!\int\limits_{{u_4}}^{{u_1}}\! {d{u_3}{p_1}\left(\! {{u_3}} \!\right)\exp\! \left( \!{\lambda {u_3}} \!\right)\!\int\limits_{{u_3}}^{{u_1}} \!{d{u_2}{p_2}\left( \!{{u_2}} \!\right)\exp \!\left( \!{\lambda {u_2}} \!\right)} } }  \nonumber \\
\!\!\!\!\!&&=\! \int\limits_{{u_5}}^{{u_1}} \!{d{u_4}{p_3}\left(\! {{u_4}} \!\right)\exp \!\left(\! {\lambda {u_4}} \!\right)\left\{\! {{c_1}\left(\! {{u_4},\lambda } \!\right){c_2}\left( \!{{u_4},\lambda } \!\right) - {c_1}\left(\! {{u_4},\lambda } \!\right){c_2}\left( \!{{u_1},\lambda } \!\right) + \int\limits_{{u_4}}^{{u_1}}\! {d{u_3}{p_2}\left(\! {{u_3}} \!\right)\exp \!\left(\! {\lambda {u_3}} \!\right){c_1}\left(\! {{u_3},\lambda } \!\right)} } \!\right\}},
\end{eqnarray}
\begin{eqnarray} \label{AP:D_8}
\!\!\!\!\!&&\int\limits_{{u_5}}^{{u_1}}\! {d{u_4}{p_3}\left(\! {{u_4}} \!\right)\exp \!\left(\! {\lambda {u_4}} \!\right)\!\int\limits_{{u_4}}^{{u_1}}\! {d{u_3}{p_2}\left(\! {{u_3}} \!\right)\exp\! \left( \!{\lambda {u_3}} \!\right)\!\int\limits_{{u_3}}^{{u_1}} \!{d{u_2}{p_1}\left(\! {{u_2}} \!\right)\exp \!\left(\! {\lambda {u_2}} \!\right)} } }  \nonumber \\
\!\!\!\!\!&&=\! \int\limits_{{u_5}}^{{u_1}} \!{d{u_4}{p_3}\left(\! {{u_4}}\! \right)\exp \!\left(\! {\lambda {u_4}} \!\right)\left\{ \!{{c_2}\left(\! {{u_1},\lambda }\! \right){c_1}\left(\! {{u_1},\lambda } \!\right) - {c_2}\left( \!{{u_4},\lambda } \!\right){c_1}\left(\! {{u_1},\lambda } \!\right) - \int\limits_{{u_4}}^{{u_1}} \!{d{u_3}{p_2}\left(\! {{u_3}} \!\right)\exp\! \left(\! {\lambda {u_3}} \!\right){c_1}\left(\! {{u_3},\lambda } \!\right)} }\! \right\}}.
\end{eqnarray} \normalsize

Using all the above results from (\ref{AP:D_3}) to (\ref{AP:D_8}) in (\ref{AP:D_2}) and then after some manipulations similar to the one used in previous Appendices \ref{AP:B} and \ref{AP:C}, (\ref{AP:D_2}) can be simplified as
\small \begin{eqnarray} \label{AP:D_9}
&&\sum\limits_{\scriptstyle {i_1},{i_2},{i_3} \atop 
  \scriptstyle {i_1} \ne {i_2} \ne {i_3}}^{1,2,3} {\int\limits_{{u_5}}^{{u_1}} {d{u_4}{p_{{i_3}}}\left( {{u_4}} \right)\exp \left( {\lambda {u_4}} \right)\int\limits_{{u_4}}^{{u_1}} {d{u_3}{p_{{i_2}}}\left( {{u_3}} \right)\exp \left( {\lambda {u_3}} \right)\int\limits_{{u_3}}^{{u_1}} {d{u_2}{p_{{i_1}}}\left( {{u_2}} \right)\exp \left( {\lambda {u_2}} \right)} } } }  \nonumber \\
&&= \int\limits_{{u_5}}^{{u_1}} {d{u_4}{p_1}\left( {{u_4}} \right)\exp \left( {\lambda {u_4}} \right)\left\{ {{c_2}\left( {{u_1},\lambda } \right) - {c_2}\left( {{u_4},\lambda } \right)} \right\}\left\{ {{c_3}\left( {{u_1},\lambda } \right) - {c_3}\left( {{u_4},\lambda } \right)} \right\}}  \nonumber \\ 
  &&\quad + \int\limits_{{u_5}}^{{u_1}} {d{u_4}{p_2}\left( {{u_4}} \right)\exp \left( {\lambda {u_4}} \right)\left\{ {{c_1}\left( {{u_1},\lambda } \right) - {c_1}\left( {{u_4},\lambda } \right)} \right\}\left\{ {{c_3}\left( {{u_1},\lambda } \right) - {c_3}\left( {{u_4},\lambda } \right)} \right\}}  \nonumber \\ 
  &&\quad + \int\limits_{{u_5}}^{{u_1}} {d{u_4}{p_3}\left( {{u_4}} \right)\exp \left( {\lambda {u_4}} \right)\left\{ {{c_1}\left( {{u_1},\lambda } \right) - {c_1}\left( {{u_4},\lambda } \right)} \right\}\left\{ {{c_2}\left( {{u_1},\lambda } \right) - {c_2}\left( {{u_4},\lambda } \right)} \right\}}.
\end{eqnarray} \normalsize
In (\ref{AP:D_9}), after applying (\ref{AP:B_4}) to the first integral terms and then some manipulations, it can be simply re-written as 
\small \begin{eqnarray} \label{AP:D_10}
&& \int\limits_{{u_5}}^{{u_1}} {d{u_4}{p_1}\left( {{u_4}} \right)\exp \left( {\lambda {u_4}} \right)\left\{ {{c_2}\left( {{u_1},\lambda } \right) - {c_2}\left( {{u_4},\lambda } \right)} \right\}\left\{ {{c_3}\left( {{u_1},\lambda } \right) - {c_3}\left( {{u_4},\lambda } \right)} \right\}}  \nonumber \\
&&=  - {c_1}\left( {{u_5},\lambda } \right)\left\{ {{c_2}\left( {{u_1},\lambda } \right) - {c_2}\left( {{u_5},\lambda } \right)} \right\}\left\{ {{c_3}\left( {{u_1},\lambda } \right) - {c_3}\left( {{u_5},\lambda } \right)} \right\} \nonumber \\ 
  &&\quad + \int\limits_{{u_5}}^{{u_1}} {d{u_4}{p_2}\left( {{u_4}} \right)\exp \left( {\lambda {u_4}} \right){c_1}\left( {{u_4},\lambda } \right)\left\{ {{c_3}\left( {{u_1},\lambda } \right) - {c_3}\left( {{u_4},\lambda } \right)} \right\}}  \nonumber \\ 
  &&\quad + \int\limits_{{u_5}}^{{u_1}} {d{u_4}{p_3}\left( {{u_4}} \right)\exp \left( {\lambda {u_4}} \right){c_1}\left( {{u_4},\lambda } \right)\left\{ {{c_2}\left( {{u_1},\lambda } \right) - {c_2}\left( {{u_4},\lambda } \right)} \right\}}. 
\end{eqnarray}  \normalsize
Using (\ref{AP:D_10}) in (\ref{AP:D_9}), (\ref{AP:D_9}) can be simplified as
\small \begin{eqnarray} \label{AP:D_11}
&&\sum\limits_{\scriptstyle {i_1},{i_2},{i_3} \atop 
  \scriptstyle {i_1} \ne {i_2} \ne {i_3}}^{1,2,3} {\int\limits_{{u_5}}^{{u_1}} {d{u_4}{p_{{i_3}}}\left( {{u_4}} \right)\exp \left( {\lambda {u_4}} \right)\int\limits_{{u_4}}^{{u_1}} {d{u_3}{p_{{i_2}}}\left( {{u_3}} \right)\exp \left( {\lambda {u_3}} \right)\int\limits_{{u_3}}^{{u_1}} {d{u_2}{p_{{i_1}}}\left( {{u_2}} \right)\exp \left( {\lambda {u_2}} \right)} } } }  \nonumber \\
&&=  - {c_1}\left( {{u_5},\lambda } \right)\left\{ {{c_2}\left( {{u_1},\lambda } \right) - {c_2}\left( {{u_5},\lambda } \right)} \right\}\left\{ {{c_3}\left( {{u_1},\lambda } \right) - {c_3}\left( {{u_5},\lambda } \right)} \right\} \nonumber \\ 
  &&\quad + \int\limits_{{u_5}}^{{u_1}} {d{u_4}{p_2}\left( {{u_4}} \right)\exp \left( {\lambda {u_4}} \right){c_1}\left( {{u_1},\lambda } \right)\left\{ {{c_3}\left( {{u_1},\lambda } \right) - {c_3}\left( {{u_4},\lambda } \right)} \right\}}  \nonumber \\ 
  &&\quad + \int\limits_{{u_5}}^{{u_1}} {d{u_4}{p_3}\left( {{u_4}} \right)\exp \left( {\lambda {u_4}} \right){c_1}\left( {{u_1},\lambda } \right)\left\{ {{c_2}\left( {{u_1},\lambda } \right) - {c_2}\left( {{u_4},\lambda } \right)} \right\}} .
\end{eqnarray}  \normalsize
In (\ref{AP:D_11}), after applying (\ref{AP:B_4}) to the first integral terms with the help of  similar manipulations used in (\ref{AP:D_10}), the first integral terms in (\ref{AP:D_11}) can be simply re-written as 
\small \begin{eqnarray} \label{AP:D_12}
 &&\int\limits_{{u_5}}^{{u_1}} {d{u_4}{p_2}\left( {{u_4}} \right)\exp \left( {\lambda {u_4}} \right){c_1}\left( {{u_1},\lambda } \right)\left\{ {{c_3}\left( {{u_1},\lambda } \right) - {c_3}\left( {{u_4},\lambda } \right)} \right\}}  \nonumber \\
&&=  - {c_1}\left( {{u_1},\lambda } \right){c_2}\left( {{u_5},\lambda } \right)\left\{ {{c_3}\left( {{u_1},\lambda } \right) - {c_3}\left( {{u_5},\lambda } \right)} \right\} \nonumber \\ 
  &&\quad + \int\limits_{{u_5}}^{{u_1}} {d{u_4}{p_3}\left( {{u_4}} \right)\exp \left( {\lambda {u_4}} \right){c_1}\left( {{u_1},\lambda } \right){c_2}\left( {{u_4},\lambda } \right)} . 
\end{eqnarray}  \normalsize

Now, using (\ref{AP:D_12}), after adding  (\ref{AP:D_12}) and the second integral term in (\ref{AP:D_11}), we can obtain the following result

\small \begin{eqnarray} \label{AP:D_13}
&&\int\limits_{{u_5}}^{{u_1}} {d{u_4}{p_2}\left( {{u_4}} \right)\exp \left( {\lambda {u_4}} \right){c_1}\left( {{u_4},\lambda } \right)\left\{ {{c_3}\left( {{u_1},\lambda } \right) - {c_3}\left( {{u_4},\lambda } \right)} \right\}}  \nonumber \\ 
&&\quad + \int\limits_{{u_5}}^{{u_1}} {d{u_4}{p_3}\left( {{u_4}} \right)\exp \left( {\lambda {u_4}} \right){c_1}\left( {{u_4},\lambda } \right)\left\{ {{c_2}\left( {{u_1},\lambda } \right) - {c_2}\left( {{u_4},\lambda } \right)} \right\}} \nonumber \\
&&=  - {c_1}\left( {{u_1},\lambda } \right){c_2}\left( {{u_5},\lambda } \right)\left\{ {{c_3}\left( {{u_1},\lambda } \right) - {c_3}\left( {{u_5},\lambda } \right)} \right\} \nonumber \\ 
  &&\quad + \left\{ {{c_3}\left( {{u_1},\lambda } \right) - {c_3}\left( {{u_5},\lambda } \right)} \right\}{c_1}\left( {{u_1},\lambda } \right){c_2}\left( {{u_1},\lambda } \right) \nonumber \\ 
  &&= {c_1}\left( {{u_1},\lambda } \right)\left\{ {{c_2}\left( {{u_1},\lambda } \right) - {c_2}\left( {{u_5},\lambda } \right)} \right\}\left\{ {{c_3}\left( {{u_1},\lambda } \right) - {c_3}\left( {{u_5},\lambda } \right)} \right\} .
\end{eqnarray}  \normalsize
Finally, using (\ref{AP:D_13}) in (\ref{AP:D_11}), (\ref{AP:D_11}) can be re-written as
\small\begin{eqnarray} \label{AP:D_14}
 &&\sum\limits_{\scriptstyle {i_1},{i_2},{i_3} \atop 
  \scriptstyle {i_1} \ne {i_2} \ne {i_3}}^{1,2,3} {\int\limits_{{u_5}}^{{u_1}} {d{u_4}{p_{{i_3}}}\left( {{u_4}} \right)\exp \left( {\lambda {u_4}} \right)\int\limits_{{u_4}}^{{u_1}} {d{u_3}{p_{{i_2}}}\left( {{u_3}} \right)\exp \left( {\lambda {u_3}} \right)\int\limits_{{u_3}}^{{u_1}} {d{u_2}{p_{{i_1}}}\left( {{u_2}} \right)\exp \left( {\lambda {u_2}} \right)} } } }  \nonumber \\
&&= {c_1}\left( {{u_1},\lambda } \right)\left\{ {{c_2}\left( {{u_1},\lambda } \right) - {c_2}\left( {{u_5},\lambda } \right)} \right\}\left\{ {{c_3}\left( {{u_1},\lambda } \right) - {c_3}\left( {{u_5},\lambda } \right)} \right\} \nonumber \\ 
  &&\quad - {c_1}\left( {{u_5},\lambda } \right)\left\{ {{c_2}\left( {{u_1},\lambda } \right) - {c_2}\left( {{u_5},\lambda } \right)} \right\}\left\{ {{c_3}\left( {{u_1},\lambda } \right) - {c_3}\left( {{u_5},\lambda } \right)} \right\}. 
\end{eqnarray} \normalsize

By simplifying (\ref{AP:D_14}), we can obtain the final closed-form for special case $N=3$ and $m=3$ as
\small\begin{eqnarray} \label{AP:D_14r}\label{AP:D_special_final}
&&\sum\limits_{\scriptstyle {i_1},{i_2},{i_3} \atop 
  \scriptstyle {i_1} \ne {i_2} \ne {i_3}}^{1,2,3} {\int\limits_{{u_5}}^{{u_1}} {d{u_4}{p_{{i_3}}}\left( {{u_4}} \right)\exp \left( {\lambda {u_4}} \right)\int\limits_{{u_4}}^{{u_1}} {d{u_3}{p_{{i_2}}}\left( {{u_3}} \right)\exp \left( {\lambda {u_3}} \right)\int\limits_{{u_3}}^{{u_1}} {d{u_2}{p_{{i_1}}}\left( {{u_2}} \right)\exp \left( {\lambda {u_2}} \right)} } } }  \nonumber \\&&= \left\{ {{c_1}\left( {{u_1},\lambda } \right) - {c_1}\left( {{u_5},\lambda } \right)} \right\}\left\{ {{c_2}\left( {{u_1},\lambda } \right) - {c_2}\left( {{u_5},\lambda } \right)} \right\}\left\{ {{c_3}\left( {{u_1},\lambda } \right) - {c_3}\left( {{u_5},\lambda } \right)} \right\} \\ 
  &&= {\mu _1}\left( {{u_5},{u_1},\lambda } \right){\mu _2}\left( {{u_5},{u_1},\lambda } \right){\mu _3}\left( {{u_5},{u_1},\lambda } \right).
\end{eqnarray} \normalsize

\subsection{General Case}  \label{AP:D_General}
With arbitrary $N$ and $m$, we can also re-write (\ref{AP:D_1}) as
\small\begin{eqnarray} \label{AP:D_general}
{{J'}_{a,b}} &=& \sum\limits_{\substack{
 {i_{a + 1}}, \ldots ,{i_{b - 1}} \\ 
 {i_{a + 1}} \ne {i_{a + 2}} \ne  \cdots  \ne {i_{b - 1}} \\ 
 {i_{a + 1}} \ne {i_1}, \cdots ,{i_a},{i_b}, \ldots ,{i_N} \\ 
 {i_{a + 2}} \ne {i_1}, \cdots ,{i_a},{i_b}, \ldots ,{i_N} \\ 
  \vdots  \\ 
 {i_{b - 1}} \ne {i_1}, \cdots ,{i_a},{i_b}, \ldots ,{i_N} \\ 
 }}^{1,2, \ldots ,N} {\int\limits_{{u_b}}^{{u_a}} {d{u_{b - 1}}{p_{{i_{b - 1}}}}\left( {{u_{b - 1}}} \right)\exp \left( {\lambda {u_{b - 1}}} \right)\int\limits_{{u_{b - 1}}}^{{u_a}} {d{u_{b - 2}}{p_{{i_{b - 2}}}}\left( {{u_{b - 2}}} \right)\exp \left( {\lambda {u_{b - 2}}} \right) }}}\nonumber
\\
&&{{{\cdots \int\limits_{{u_{a + 2}}}^{{u_a}} {d{u_{a + 1}}{p_{{i_{a + 1}}}}\left( {{u_{a + 1}}} \right)\exp \left( {\lambda {u_{a + 1}}} \right)} } } }.
\end{eqnarray} \normalsize

By applying the similar process presented in \ref{AP:B} and \ref{AP:C}, the (\ref{AP:D_special_final}) can be generalized to arbitrary $N$ and $m$, which leads to the result in Eq. (\ref{eq:IntervalMGF_multiple}) as the closed-form
\begin{equation} \small \label{AP:D_general_final}
{{J'}_{a,b}} = \sum\limits_{\left\{ {{i_{a + 1}}, \ldots ,{i_{b - 1}}} \right\} \in {{\mathop{\rm P}\nolimits} _{b - a + 1}}\left( {{I_N} - \left\{ {{i_1}, \cdots ,{i_a},{i_b}, \ldots ,{i_N}} \right\}} \right)} {\prod\limits_{\scriptstyle l = a + 1 \atop 
  \scriptstyle \left\{ {{i_{a + 1}}, \ldots ,{i_{b - 1}}} \right\}}^{b - 1} {{\mu _{{i_l}}}\left( {{u_b},{u_a},\lambda } \right)} }.
\end{equation}

\section{Derivation of (\ref{eq:joint_MGF_2})}\label{AP:E}
Starting with (\ref{eq:joint_MGF_7_integralform}), with the help of integral solution, (\ref{eq:joint_MGF_7_integralform}) can be simply re-written as
\small \begin{eqnarray} \label{eq:joint_MGF_7_simplified_integralform}
\!\!\!\!\!\!\!\!\!\!\!\!\!\!\!\!\!\!\!\!\!\!\!\!\!\!\!\!&&MGF_Z \left( {\lambda _1 ,\lambda _2 } \right)\nonumber \\
\!\!\!\!\!\!\!\!\!\!\!\!\!\!\!\!\!\!\!\!\!\!\!\!\!\!\!\!&&=\!\!\!\sum\limits_{{i_m} = 1}^N {\int\limits_0^\infty \! {d{u_m}{p_{{i_m}}}\left( {{u_m}} \right)\exp \left( {{\lambda _1}{u_m}} \right)} } \nonumber \\
\!\!\!\!\!\!\!\!\!\!\!\!\!\!\!\!\!\!\!\!\!\!\!\!\!\!\!\!&&\quad \times \!\!\! \sum\limits_{\left\{\! {{i_1}, \ldots ,{i_{m - 1}}} \!\right\} \in {{\mathop{\rm P}\nolimits} _{m - 1}}\left( \!{{I_N} - \left\{\! {{i_m}} \!\right\}} \!\right)} {\int\limits_{{u_m}}^\infty \! {d{u_{m - 1}}{p_{{i_{m - 1}}}}\left( \!{{u_{m - 1}}} \!\right)\exp \!\left(\! {{\lambda _1}{u_{m - 1}}} \!\right) \cdots \int\limits_{{u_2}}^\infty  \!{d{u_1}{p_{{i_1}}}\left( \!{{u_1}} \!\right)\exp \!\left( \!{{\lambda _1}{u_1}} \!\right)} } } \nonumber \\
\!\!\!\!\!\!\!\!\!\!\!\!\!\!\!\!\!\!\!\!\!\!\!\!\!\!\!\!&&\quad \times \!\!\! \sum\limits_{\left\{\! {{i_{m + 1}}, \ldots ,{i_N}} \!\right\} \in {{\mathop{\rm P}\nolimits} _{N - m}}\left(\! {{I_N} - \left\{\! {{i_m}} \!\right\} - \left\{\! {{i_1}, \ldots ,{i_{m - 1}}} \!\right\}} \!\right)}\! {\int\limits_0^{{u_m}} \!{d{u_{m + 1}}{p_{{i_{m + 1}}}}\left(\! {{u_{m + 1}}} \!\right)\exp\! \left(\! {{\lambda _2}{u_{m + 1}}} \!\right) \cdots \!\!\! \int\limits_0^{{u_{N - 1}}}\!\!\! {d{u_N}{p_{{i_N}}}\left(\! {{u_N}} \!\right)\exp \!\left( \!{{\lambda _2}{u_N}} \!\right)} } }.
\end{eqnarray} \normalsize
In (\ref{eq:joint_MGF_7_simplified_integralform}), by simply applying (\ref{eq:CDF_MGF_multiple}) and (\ref{eq:EDF_MGF_multiple}), we can easily obtain each of the following results
\small \begin{eqnarray} 
\!\!\!\!&&\sum\limits_{\left\{\! {{i_{m + 1}}, \ldots ,{i_N}} \!\right\} \in {{\mathop{\rm P}\nolimits} _{N - m}}\left(\! {{I_N} - \left\{\! {{i_m}} \!\right\} - \left\{\! {{i_1}, \ldots ,{i_{m - 1}}} \!\right\}} \!\right)} {\int\limits_0^{{u_m}} \! {d{u_{m + 1}}{p_{{i_{m + 1}}}}\left(\! {{u_{m + 1}}} \!\right)\exp\! \left( \!{{\lambda _2}{u_{m + 1}}} \!\right) \cdots \int\limits_0^{{u_{N - 1}}} \!{d{u_N}{p_{{i_N}}}\left( \!{{u_N}} \!\right)\exp \!\left(\! {{\lambda _2}{u_N}} \!\right)} } }  \nonumber \\
\!\!\!\!&&= \sum\limits_{\left\{ {{i_{m + 1}}, \ldots ,{i_N}} \right\} \in {{\mathop{\rm P}\nolimits} _{N - m}}\left( {{I_N} - \left\{ {{i_m}} \right\} - \left\{ {{i_1}, \ldots ,{i_{m - 1}}} \right\}} \right)} {\prod\limits_{\scriptstyle l = m + 1 \atop 
  \scriptstyle \left\{ {{i_{m + 1}}, \ldots ,{i_N}} \right\}}^N {{c_{{i_l}}}\left( {{u_m},{\lambda _2}} \right)} }, \label{eq:joint_MGF_8_integralform} \\ 
\!\!\!\!&&\sum\limits_{\left\{ {{i_1}, \ldots ,{i_{m - 1}}} \right\} \in {{\mathop{\rm P}\nolimits} _{m - 1}}\left( {{I_N} - \left\{ {{i_m}} \right\}} \right)} {\int\limits_{{u_m}}^\infty  {d{u_{m - 1}}{p_{{i_{m - 1}}}}\left( {{u_{m - 1}}} \right)\exp \left( {{\lambda _1}{u_{m - 1}}} \right) \cdots \int\limits_{{u_2}}^\infty  {d{u_1}{p_{{i_1}}}\left( {{u_1}} \right)\exp \left( {{\lambda _1}{u_1}} \right)} } }  \nonumber\\
\!\!\!\!&&= \sum\limits_{\left\{ {{i_1}, \ldots ,{i_{m - 1}}} \right\} \in {{\mathop{\rm P}\nolimits} _{m - 1}}\left( {{I_N} - \left\{ {{i_m}} \right\}} \right)} {\prod\limits_{\scriptstyle k = 1 \atop 
  \scriptstyle \left\{ {{i_1}, \ldots ,{i_{m - 1}}} \right\}}^{m - 1} {{e_{{i_k}}}\left( {{u_m},{\lambda _1}} \right)} }. \label{eq:joint_MGF_8_1_integralform}
\end{eqnarray} \normalsize
By inserting (\ref{eq:joint_MGF_8_integralform}) and (\ref{eq:joint_MGF_8_1_integralform}) in order into (\ref{eq:joint_MGF_7_integralform}), we can obtain the second order MGF of $Z_1  = \sum\limits_{n = 1}^m {u_n }$ and $Z_2  = \sum\limits_{n = m + 1}^N {u_n }$ as

\small
\begin{eqnarray} \label{eq:APP_joint_MGF_2}
MGF_Z \left( {\lambda _1 ,\lambda _2 } \right) &=& \sum\limits_{{i_m} = 1}^N {\int\limits_0^\infty  {d{u_m}{p_{{i_m}}}\left( {{u_m}} \right)\exp \left( {{\lambda _1}{u_m}} \right)} } \nonumber \\
&&\times \sum\limits_{\left\{ {{i_1}, \ldots ,{i_{m - 1}}} \right\} \in {{\mathop{\rm P}\nolimits} _{m - 1}}\left( {{I_N} - \left\{ {{i_m}} \right\}} \right)} {\prod\limits_{\scriptstyle k = 1 \atop 
  \scriptstyle \left\{ {{i_1}, \ldots ,{i_{m - 1}}} \right\}}^{m - 1} {{e_{{i_k}}}\left( {{u_m},{\lambda _1}} \right)} } \nonumber \\
&&\times \sum\limits_{\left\{ {{i_{m + 1}}, \ldots ,{i_N}} \right\} \in {{\mathop{\rm P}\nolimits} _{N - m}}\left( {{I_N} - \left\{ {{i_m}} \right\} - \left\{ {{i_1}, \ldots ,{i_{m - 1}}} \right\}} \right)} {\prod\limits_{\scriptstyle l = m + 1 \atop 
  \scriptstyle \left\{ {{i_{m + 1}}, \ldots ,{i_N}} \right\}}^N {{c_{{i_l}}}\left( {{u_m},{\lambda _2}} \right)} }.
\end{eqnarray} \normalsize

\section{Derivation of the joint PDF of $u_{m}$ and $\sum\limits_{\scriptstyle n = 1 \hfill \atop \scriptstyle n \ne m \hfill}^{N_s } {u_n }$ for $1<m<N_s-1$ among $N$ ordered RVs}\label{AP:G}

In this Appendix, we derive the joint PDF of $u_{m}$ and $\sum\limits_{\scriptstyle n = 1 \hfill \atop \scriptstyle n \ne m \hfill}^{N_s } {u_{n} }$ among $N$ ordered RVs by considering $1<m<N_s-1$.

Let $Z_1  = \sum\limits_{n = 1}^{m - 1} {u_{n} }$, $Z_2  = u_m$, $Z_3 = \sum\limits_{n = m + 1}^{N_s  - 1} {u_{n} }$ and $Z_4  = u_{N_s}$. The 4-dimensional MGF of $Z=\left[Z_1,Z_2,Z_3,Z_4\right]$ is given by the expectation
\small \begin{eqnarray} \label{eq:APP_joint_MGF_GSC_2_integralform}
\!\!\!\!\!\!\!\!\!\!\!\! && MGF_Z \left( {\lambda _1 ,\lambda _2 ,\lambda _3 ,\lambda _4 } \right) \!\! = \!\! E\left\{ {\exp \left( {\lambda _1 Z_1  + \lambda _2 Z_2  + \lambda _3 Z_3  + \lambda _4 Z_4 } \right)} \right\} \nonumber
\\ 
\!\!\!\!\!\!\!\!\!\!\!\!  &&=\sum\limits_{\substack{
   {{i_1},{i_2}, \cdots ,{i_N}}  \\
   {{i_1} \ne {i_2} \ne  \cdots  \ne {i_N}}  \\
}}^{1,2, \cdots ,N} {\int\limits_0^\infty  {d{u_1}{p_{{i_1}}}\left( {{u_1}} \right)\exp \left( {{\lambda _1}{u_1}} \right) \cdots \int\limits_0^{{u_{m - 2}}} {d{u_{m - 1}}{p_{{i_{m - 1}}}}\left( {{u_{m - 1}}} \right)\exp \left( {{\lambda _1}{u_{m - 1}}} \right)} } } \nonumber
\\ 
\!\!\!\!\!\!\!\!\!\!\!\!  && \quad \times \int\limits_0^{{u_{m - 1}}} {d{u_m}{p_{{i_m}}}\left( {{u_m}} \right)\exp \left( {{\lambda _2}{u_m}} \right)} \nonumber
\\ 
\!\!\!\!\!\!\!\!\!\!\!\!  && \quad \times \int\limits_0^{{u_m}} {d{u_{m + 1}}{p_{{i_{m + 1}}}}\left( {{u_{m + 1}}} \right)\exp \left( {{\lambda _3}{u_{m + 1}}} \right) \cdots \int\limits_0^{{u_{{N_s} - 2}}} {d{u_{{N_s} - 1}}{p_{{i_{{N_s} - 1}}}}\left( {{u_{{N_s} - 1}}} \right)\exp \left( {{\lambda _3}{u_{{N_s} - 1}}} \right)} } \nonumber
\\ 
\!\!\!\!\!\!\!\!\!\!\!\!  && \quad \times \int\limits_0^{{u_{{N_s} - 1}}} {d{u_{{N_s}}}{p_{{i_{{N_s}}}}}\left( {{u_{{N_s}}}} \right)\exp \left( {{\lambda _4}{u_{{N_s}}}} \right)\prod\limits_{j = {N_s} + 1}^N {{P_{{i_j}}}\left( {{u_{{N_s}}}} \right)} }. 
\end{eqnarray} \normalsize
With the help of integral solution presented in \cite{kn:unified_approach}, (\ref{eq:CDF_MGF_multiple}), (\ref{eq:EDF_MGF_multiple}) and (\ref{eq:IntervalMGF_multiple}), we can easily obtain the 4-dimensional MGF of $Z_1$, $Z_2$, $Z_3$ and $Z_4$ as
\small \begin{eqnarray} \label{eq:APP_joint_MGF_GSC_2}
&& MGF_Z \left( {\lambda _1 ,\lambda _2 ,\lambda _3 ,\lambda _4 } \right) \nonumber
\\
&&= \sum\limits_{\scriptstyle {i_{{N_s}}}, \ldots ,{i_N} \atop 
  \scriptstyle {i_{{N_s}}} \ne  \cdots  \ne {i_N}}^{1,2, \ldots ,N} {\int\limits_0^\infty  {d{u_{{N_s}}}{p_{{i_{{N_s}}}}}\left( {{u_{{N_s}}}} \right)\exp \left( {{\lambda _4}{u_{{N_s}}}} \right)\prod\limits_{\scriptstyle j = {N_s} + 1 \atop 
  \scriptstyle \left\{ {{i_{{N_s} + 1}}, \ldots ,{i_N}} \right\}}^N {{P_{{i_j}}}\left( {{u_{{N_s}}}} \right)} } } \nonumber
\\
&&\quad \times \sum\limits_{\scriptstyle {i_m} = 1 \atop 
  \scriptstyle {i_m} \ne {i_{{N_s}, \ldots ,}}{i_N}}^N {\int\limits_{{u_{{N_s}}}}^\infty  {d{u_m}{p_{{i_m}}}\left( {{u_m}} \right)\exp \left( {{\lambda _2}{u_m}} \right)} } \nonumber
\\
&& \quad \times \sum\limits_{\left\{ {{i_{m + 1}}, \ldots ,{i_{{N_s} - 1}}} \right\} \in {{\mathop{\rm P}\nolimits} _{{N_s} - m - 1}}\left( {{I_N} - \left\{ {{i_m}} \right\} - \left\{ {{i_{{N_s}}}, \ldots ,{i_N}} \right\}} \right)} {\prod\limits_{\scriptstyle k = m + 1 \atop 
  \scriptstyle \left\{ {{i_{m + 1}}, \ldots ,{i_{{N_s} - 1}}} \right\}}^{{N_s} - 1} {{\mu _{{i_k}}}\left( {{u_{{N_s}}},{u_m},{\lambda _3}} \right)} } \nonumber
\\
&&\quad \times \sum\limits_{\left\{ {{i_1}, \ldots ,{i_{m - 1}}} \right\} \in {{\mathop{\rm P}\nolimits} _{m - 1}}\left( {{I_N} - \left\{ {{i_m}} \right\} - \left\{ {{i_{{N_s}}}, \ldots ,{i_N}} \right\} - \left\{ {{i_{m + 1}}, \ldots ,{i_{{N_s} - 1}}} \right\}} \right)} {\prod\limits_{\scriptstyle l = 1 \atop 
  \scriptstyle \left\{ {{i_1}, \ldots ,{i_{m - 1}}} \right\}}^{m - 1} {{e_{{i_l}}}\left( {{u_m},{\lambda _1}} \right)} }.
\end{eqnarray} \normalsize

Having a MGF expression given in (\ref{eq:APP_joint_MGF_GSC_2}), we are now in the position to derive the 4-dimensional joint PDF of $Z_1  = \sum\limits_{n = 1}^{m - 1} {u_{n} }$, $Z_2  = u_{m}$, $Z_3 = \sum\limits_{n = m + 1}^{N_s  - 1} {u_{n} }$ and $Z_4  = u_{N_s}$. Letting $\lambda _1  =  - S_1$, $\lambda _2  =  - S_2$, $\lambda _3  =  - S_3$,  and $\lambda _4  =  - S_4$  we can derive the 4-dimensional PDF of $Z_1$, $Z_2$, $Z_3$ and $Z_4$ by applying an inverse Laplace transform yielding
\small \begin{eqnarray} \label{eq:APP_joint_PDF_GSC_2}
\!\!\!\!\!\!\!\!\!\!\!\!\!\!\!\! && p_Z \left( {z_1 ,z_2 ,z_3 ,z_4 } \right) = \mathcal{L}_{S_1 ,S_2 ,S_3 ,S_4 }^{ - 1} \left\{ {MGF_Z \left( { - S_1 , - S_2 , - S_3 , - S_4 } \right)} \right\} \nonumber
\\ 
\!\!\!\!\!\!\!\!\!\!\!\!\!\!\!\!  &&= \sum\limits_{\scriptstyle {i_{{N_s}}}, \ldots ,{i_N} \atop 
  \scriptstyle {i_{{N_s}}} \ne  \cdots  \ne {i_N}}^{1,2, \ldots ,N} {\int\limits_0^\infty  {d{u_{{N_s}}}{p_{{i_{{N_s}}}}}\left( {{u_{{N_s}}}} \right)L_{{S_4}}^{ - 1}\left\{ {\exp \left( { - {S_4}{u_{{N_s}}}} \right)} \right\}\prod\limits_{\scriptstyle j = {N_s} + 1 \atop 
  \scriptstyle \left\{ {{i_{{N_s} + 1}}, \ldots ,{i_N}} \right\}}^N {{P_{{i_j}}}\left( {{u_{{N_s}}}} \right)} } } \nonumber
\\ 
\!\!\!\!\!\!\!\!\!\!\!\!\!\!\!\!  &&\quad \times \sum\limits_{\scriptstyle {i_m} = 1 \atop 
  \scriptstyle {i_m} \ne {i_{{N_s}, \ldots ,}}{i_N}}^N {\int\limits_{{u_{{N_s}}}}^\infty  {d{u_m}{p_{{i_m}}}\left( {{u_m}} \right)L_{{S_2}}^{ - 1}\left\{ {\exp \left( { - {S_2}{u_m}} \right)} \right\}} } \nonumber
\\
\!\!\!\!\!\!\!\!\!\!\!\!\!\!\!\!  &&\quad \times \sum\limits_{\left\{ {{i_{m + 1}}, \ldots ,{i_{{N_s} - 1}}} \right\} \in {{\mathop{\rm P}\nolimits} _{{N_s} - m - 1}}\left( {{I_N} - \left\{ {{i_m}} \right\} - \left\{ {{i_{{N_s}}}, \ldots ,{i_N}} \right\}} \right)} {L_{{S_3}}^{ - 1}\left\{ {\prod\limits_{\scriptstyle k = m + 1 \atop 
  \scriptstyle \left\{ {{i_{m + 1}}, \ldots ,{i_{{N_s} - 1}}} \right\}}^{{N_s} - 1} {{\mu _{{i_k}}}\left( {{u_{{N_s}}},{u_m}, - {S_3}} \right)} } \right\}} \nonumber
\\ 
\!\!\!\!\!\!\!\!\!\!\!\!\!\!\!\!  &&\quad \times \sum\limits_{\left\{ {{i_1}, \ldots ,{i_{m - 1}}} \right\} \in {{\mathop{\rm P}\nolimits} _{m - 1}}\left( {{I_N} - \left\{ {{i_m}} \right\} - \left\{ {{i_{{N_s}}}, \ldots ,{i_N}} \right\} - \left\{ {{i_{m + 1}}, \ldots ,{i_{{N_s} - 1}}} \right\}} \right)} {L_{{S_1}}^{ - 1}\left\{ {\prod\limits_{\scriptstyle l = 1 \atop 
  \scriptstyle \left\{ {{i_1}, \ldots ,{i_{m - 1}}} \right\}}^{m - 1} {{e_{{i_l}}}\left( {{u_m}, - {S_1}} \right)} } \right\}}. 
\end{eqnarray} \normalsize

With this 4-dimensional joint PDF, letting $X=Z_2$ and $Y=Z_1+Z_3+Z_4$ we can obtain the 2-dimensional joint PDF of $Z^{'}=[X,Y]$ by integrating over $z_1$ and $z_4$ yielding
\begin{equation} \small\label{eq:AP_final_1}
\small p_{Z^{'}} \left( {x,y} \right) = \int_0^x {\int_{\left( {m - 1} \right)x}^{y - \left(N_s-m\right)z_4 } {p_Z \left( {z_1 ,x,y - z_4 ,z_4 } \right)dz_1 } dz_4 }, 
\end{equation}
or equivalently we can obtain the 2-dimensional joint PDF of $Z^{'}=[X,Y]$ by integrating over $z_3$ and $z_4$ giving
\begin{equation} \small\label{eq:AP_final_2}
\small p_{Z^{'}} \left( {x,y} \right) = \int_0^x {\int_{\left( {N_s  - m - 1} \right)z_4 }^{\left( {N_s  - m - 1} \right)x} {p_Z \left( {y - z_3  - z_4 ,x,z_3 ,z_4 } \right)dz_3 } dz_4 } .
\end{equation}

\section{Derivation of Multiple Product of Common Functions}\label{AP:H}
In \ref{SEC:VI}-ii), (\ref{eq:common_function_Rayleigh_1}), (\ref{eq:common_function_Rayleigh_2}), and (\ref{eq:common_function_Rayleigh_3}) have the form of multiple product of (\ref{eq:common_function_Rayleigh_1_s}), (\ref{eq:common_function_Rayleigh_2_s}), and (\ref{eq:common_function_Rayleigh_3_s}), respectively. Therefore, to apply an inverse LT for deriving final PDF closed-form expressions from MGF expressions, a multiple product expression needs to be converted to a summation expression of $\lambda$ function. In this appendix, we derive simple summation expressions of $\lambda$ function from multiple product expressions. To derive them, the following four formulas should be converted to a summation expression. 
\begin{enumerate}
\item[i)] $\frac{1}{{\prod\limits_l {\left( {1 - {{\bar \gamma }_{{i_l}}}\lambda } \right)} }}$

At first, we derive special case for a) the multiple product from $1$ to $n$ and then we extend this result to general case for b) the multiple product from arbitrary $n_1$ to $n_2$.

For case a), we need to convert the following multiple product from $1$ to $n$ to a summation expression.
\begin{equation} \small \label{AP:H_1}
\frac{1}{{\prod\limits_{l = {1}}^{n} {\left( {1 - {{\bar \gamma }_{{i_l}}}\lambda } \right)} }} .
\end{equation}
With (\ref{AP:H_1}), after deploying the multiple product term and then rearrange and simplify them,  the multiple product term can be converted to the summation expression of just $\lambda$ as
\begin{equation} \small \label{AP:H_2}
 \frac{1}{{\prod\limits_{l = {1}}^{n} {\left( {1 - {{\bar \gamma }_{{i_l}}}\lambda } \right)} }}  
  = \sum\limits_{l = {1}}^{n} {\frac{{{C_{l,1,n}}}}{{\left( {\lambda  - \frac{1}{{{{\bar \gamma }_{{i_l}}}}}} \right)}}},
\end{equation}
where $j_0=0$,
\begin{equation}\small \label{AP:H_3}
{C_{l,1,n}} = \frac{1}{{\prod\limits_{l = {1}}^{n} {\left( { - {{\bar \gamma }_{{i_l}}}} \right)} }{F'\left( {\frac{1}{{{{\bar \gamma }_{{i_l}}}}}} \right)}},
\end{equation}
\begin{equation}\small \label{AP:H_4}
F'\left( x \right) = \left[ {\sum\limits_{l = 1}^{{n} - {1}} {\left( {{n} -  l} \right){x^{{n} - {1} - l}}{{\left( { - 1} \right)}^l}\sum\limits_{{j_1} = {j_0} + {1}}^{{n} - l + 1} { \cdots \sum\limits_{{j_l} = {j_{l - 1}} + 1}^{{n}} {\prod\limits_{m = 1}^l {\frac{1}{{{{\bar \gamma }_{{i_{{j_m}}}}}}}} } } } } \right] + \left( {{n}} \right){x^{{n} - {1}}}.
\end{equation}

For the case of the multiple product from arbitrary $n_1$ to $n_2$, after applying the same derivation progress as (\ref{AP:H_2}), we can obtain the final result as 
\begin{equation} \small \label{AP:H_5}
\frac{1}{{\prod\limits_{l = {n_1}}^{n_2} {\left( {1 - {{\bar \gamma }_{{i_l}}}\lambda } \right)} }}= \sum\limits_{l = {n_1}}^{n_2} {\frac{{{C_{l,n_1,n_2}}}}{{\left( {\lambda  - \frac{1}{{{{\bar \gamma }_{{i_l}}}}}} \right)}}},
\end{equation}
where
\begin{equation} \small \label{AP:H_6} 
{C_{l,n_1,n2}} = \frac{1}{{\prod\limits_{l = {n_1}}^{{n_2}} {\left( { - {{\bar \gamma }_{{i_l}}}} \right)} }{F'\left( {\frac{1}{{{{\bar \gamma }_{{i_l}}}}}} \right)}},
\end{equation}
\small\begin{eqnarray} \label{AP:H_7} 
F'\left( x \right) &&= \left[ {\sum\limits_{l = 1}^{{n_2} - {n_1}} {\left( {{n_2} - {n_1} - l + 1} \right){x^{{n_2} - {n_1} - l}}{{\left( { - 1} \right)}^l}\sum\limits_{{j_1} = {j_0} + {n_1}}^{{n_2} - l + 1} { \cdots \sum\limits_{{j_l} = {j_{l - 1}} + 1}^{{n_2}} {\prod\limits_{m = 1}^l {\frac{1}{{{{\bar \gamma }_{{i_{{j_m}}}}}}}} } } } } \right] \nonumber
\\
&&\quad + \left( {{n_2} - {n_1} + 1} \right){x^{{n_2} - {n_1}}}.
\end{eqnarray} \normalsize

\item[ii)] $\prod\limits_l {\left[ {1 - \exp \left( {\left( {\lambda  - \frac{1}{{{{\bar \gamma }_{{i_l}}}}}} \right){z_a}} \right)} \right]}$

Similar to \ref{AP:H}-i), at first, we derive special case for a) the multiple product from $1$ to $n$ and then we extend this result to general case for b) the multiple product from arbitrary $n_1$ to $n_2$.

For case a), after deploying the multiple product term of exponential function from $1$ to $n$ and then simplify them, the multiple product term can be converted to the summation expression of $\lambda$ as
\small\begin{eqnarray} \label{AP:H_9}
&&\prod\limits_{l = 1}^n {\left[ {1 - \exp \left( {\left( {\lambda  - \frac{1}{{{{\bar \gamma }_{{i_l}}}}}} \right){z_a}} \right)} \right]}  \nonumber
\\
&&= 1 + \left[ {\sum\limits_{l = 1}^n {\exp \left( {l \cdot {z_a} \cdot \lambda } \right)\left\{ {{{\left( { - 1} \right)}^l}\sum\limits_{{j_1} = {j_0} + 1}^{n - l + 1} { \cdots \sum\limits_{{j_l} = {j_{l - 1}} + 1}^n {\exp \left( { - \sum\limits_{m = 1}^l {\frac{{{z_a}}}{{{{\bar \gamma }_{{i_{{j_m}}}}}}}} } \right)} } } \right\}} } \right],
\end{eqnarray} \normalsize
where $j_0=0$.

For case b), after applying the same derivation progress as (\ref{AP:H_9}), the multiple product from arbitrary $n_1$ to $n_2$ can be obtained as 
\small\begin{eqnarray} \label{AP:H_10} 
&&\prod\limits_{l = {n_1}}^{{n_2}} {\left[ {1 - \exp \left( {\left( {\lambda  - \frac{1}{{{{\bar \gamma }_{{i_l}}}}}} \right){z_a}} \right)} \right]}  \nonumber
\\
&&= 1 + \left[ {\sum\limits_{l = 1}^{{n_2} - {n_1} + 1} {\exp \left( {l \cdot {z_a} \cdot \lambda } \right)\left\{ {{{\left( { - 1} \right)}^l}\sum\limits_{{j_1} = {j_0} + {n_1}}^{{n_2} - l + 1} { \cdots \sum\limits_{{j_l} = {j_{l - 1}} + 1}^{{n_2}} {\exp \left( { - \sum\limits_{m = 1}^l {\frac{{{z_a}}}{{{{\bar \gamma }_{{i_{{j_m}}}}}}}} } \right)} } } \right\}} } \right].
\end{eqnarray} \normalsize

\item[iii)] $\prod\limits_l {\left[ {\exp \left( {\left( {\lambda  - \frac{1}{{{{\bar \gamma }_{{i_l}}}}}} \right){z_a}} \right) - \exp \left( {\left( {\lambda  - \frac{1}{{{{\bar \gamma }_{{i_l}}}}}} \right){z_b}} \right)} \right]}$

Similar to \ref{AP:H}-i) and ii), especially, using the similar manipulation used in \ref{AP:H}-i) and ii) in (\ref{AP:H_10}), the final simple summation expression from arbitrary $n_1$ to $n_2$ can be obtained as

\small \begin{eqnarray} \label{AP:H_13}
\!\!\!\!\!\!\!\!\!\!\!\!\!\!\!\!&&\prod\limits_{l = {n_1}}^{{n_2}} {\left[ {\exp \left( {\left( {\lambda  - \frac{1}{{{{\bar \gamma }_{{i_l}}}}}} \right){z_a}} \right) - \exp \left( {\left( {\lambda  - \frac{1}{{{{\bar \gamma }_{{i_l}}}}}} \right){z_b}} \right)} \right]} \nonumber
\\
\!\!\!\!\!\!\!\!\!\!\!\!\!\!\!\!&&= \prod\limits_{l = {n_1}}^{{n_2}} {\exp \left( {\left( {\lambda  - \frac{1}{{{{\bar \gamma }_{{i_l}}}}}} \right){z_a}} \right)} \prod\limits_{l = {n_1}}^{{n_2}} {\left[ {1 - \exp \left( {\left( {\lambda  - \frac{1}{{{{\bar \gamma }_{{i_l}}}}}} \right)\left( {{z_b} - {z_a}} \right)} \right)} \right]}   \nonumber 
\\ 
\!\!\!\!\!\!\!\!\!\!\!\!\!\!\!\!&&= \exp \left( {\left( {{n_2} - {n_1} + 1} \right) \cdot \lambda  \cdot {z_a}} \right)\exp \left( { - \sum\limits_{l = {n_1}}^{{n_2}} {\frac{{{z_a}}}{{{{\bar \gamma }_{{i_l}}}}}} } \right)\nonumber
\\
\!\!\!\!\!\!\!\!\!\!\!\!\!\!\!\!&&\quad \times\left[ {1 + \sum\limits_{l = {n_1}}^{{n_2} - {n_1} + 1} {\exp \left( {l \cdot \left( {{z_b} - {z_a}} \right) \cdot \lambda } \right)\left\{ {{{\left( { - 1} \right)}^l}\sum\limits_{{j_1} = {j_0} + {n_1}}^{{n_2} - l + 1} { \cdots \sum\limits_{{j_l} = {j_{l - 1}} + 1}^{{n_2}} {\exp \left( { - \sum\limits_{m = 1}^l {\frac{{{z_b} - {z_a}}}{{{{\bar \gamma }_{{i_{{j_m}}}}}}}} } \right)} } } \right\}} } \right].
\end{eqnarray} \normalsize

\item[iv)] $\prod\limits_l {\exp \left( {\left( {\lambda  - \frac{1}{{{{\bar \gamma }_{{i_l}}}}}} \right){z_a}} \right)}$

In this case, with the help of the property of exponential multiplication, we can easily derive the summation expression from the multiple product expression from arbitrary $n_1$ to $n_2$, respectively, as

\small \begin{eqnarray} \label{AP:H_15}
 \prod\limits_{l = {n_1}}^{{n_2}} {\exp \left( {\left( {\lambda  - \frac{1}{{{{\bar \gamma }_{{i_l}}}}}} \right){z_a}} \right)}  &=& \exp \left( {\left\{ {\sum\limits_{l = {n_1}}^{{n_2}} {\left( {\lambda  - \frac{1}{{{{\bar \gamma }_{{i_l}}}}}} \right)} } \right\}{z_a}} \right) \nonumber
\\ 
  &=& \exp \left( {\left\{ { - \sum\limits_{l = {n_1}}^{{n_2}} {\left( {\frac{{{z_a}}}{{{{\bar \gamma }_{{i_l}}}}}} \right)} } \right\}} \right)\exp \left( {\left( {{n_2} - {n_1} + 1} \right){z_a}\lambda } \right).
\end{eqnarray} \normalsize
\end{enumerate}
Based on the above results, we can now obtain the summation expressions of (\ref{eq:common_function_Rayleigh_1_s}), (\ref{eq:common_function_Rayleigh_2_s}), and (\ref{eq:common_function_Rayleigh_3_s}) for arbitrary $n_1$ to $n_2$.  With (\ref{eq:common_function_Rayleigh_1_s}), (\ref{eq:common_function_Rayleigh_2_s}), and (\ref{eq:common_function_Rayleigh_3_s}), we can write the multiple product of (\ref{eq:common_function_Rayleigh_1_s}), (\ref{eq:common_function_Rayleigh_2_s}), and (\ref{eq:common_function_Rayleigh_3_s}) for arbitrary $n_1$ to $n_2$ respectively as
\begin{equation} \small  \label{AP:H_16}
\prod\limits_{l = {n_1}}^{{n_2}} {{c_{{i_l}}}\left( {{z_a},\lambda } \right)}= \frac{1}{{\prod\limits_{l = {n_1}}^{{n_2}} {\left( {1 - {{\bar \gamma }_{{i_l}}}\lambda } \right)} }}\prod\limits_{l = {n_1}}^{{n_2}} {\left[ {1 - \exp \left( {\left( {\lambda  - \frac{1}{{{{\bar \gamma }_{{i_l}}}}}} \right){z_a}} \right)} \right]} ,
\end{equation}
\begin{equation} \small  \label{AP:H_17}
\prod\limits_{l = {n_1}}^{n_2} {e_{i_l}}\left( z_a,\lambda  \right)= \frac{1}{{\prod\limits_{l = {n_1}}^{{n_2}} {\left( {1 - {{\bar \gamma }_{{i_l}}}\lambda } \right)} }}\prod\limits_{l = {n_1}}^{{n_2}} {\left[ {\exp \left( {\left( {\lambda  - \frac{1}{{{{\bar \gamma }_{{i_l}}}}}} \right){z_a}} \right)} \right]},
\end{equation}
\begin{equation} \small \label{AP:H_18}
\prod\limits_{l = {n_1}}^{{n_2}} {{\mu _{{i_l}}}\left( {{z_a},{z_b},\lambda } \right)} = \frac{1}{{\prod\limits_{l = {n_1}}^{{n_2}} {\left( {1 - {{\bar \gamma }_{{i_l}}}\lambda } \right)} }}\prod\limits_{l = {n_1}}^{{n_2}} {\left[ {\exp \left( {\left( {\lambda  - \frac{1}{{{{\bar \gamma }_{{i_l}}}}}} \right){z_a}} \right) - \exp \left( {\left( {\lambda  - \frac{1}{{{{\bar \gamma }_{{i_l}}}}}} \right){z_b}} \right)} \right]}.
\end{equation}
For the summation expression of the multiple product of (\ref{eq:common_function_Rayleigh_1_s}) for arbitrary $n_1$ to $n_2$, using (\ref{AP:H_5}) and (\ref{AP:H_10}) in (\ref{AP:H_16}), we can obtain the final summation closed-form expression (\ref{eq:common_function_Rayleigh_1}).

For the summation expression of the multiple product of (\ref{eq:common_function_Rayleigh_2_s}) for arbitrary $n_1$ to $n_2$, using (\ref{AP:H_5}) and (\ref{AP:H_15}) in (\ref{AP:H_17}), we can obtain the final summation closed-form expression (\ref{eq:common_function_Rayleigh_2}).

Finally, for the summation expression of the multiple product of (\ref{eq:common_function_Rayleigh_3_s}) for arbitrary $n_1$ to $n_2$, using (\ref{AP:H_5}) and (\ref{AP:H_13}) in (\ref{AP:H_18}), we can obtain the final summation closed-form expression (\ref{eq:common_function_Rayleigh_3}).

\section{Capture Probability of GSC RAKE receivers}\label{AP:capture_prob_GSC}

\subsection{Joint PDF} \label{AP:joint_PDF}
Starting from (\ref{eq:non_closed_form_3}), we can re-write the joint PDF (\ref{eq:non_closed_form_3}) as
\footnotesize\begin{eqnarray} \label{AP:joint_PDF_1}
\!\!\!\!\!\!\!\!\!\!\!\!\!\!\!\!\!\!&&{p_Z}\left( {{z_1},{z_2}} \right) \nonumber
\\
\!\!\!\!\!\!\!\!\!\!\!\!\!\!\!\!\!\!&&= \!\sum\limits_{{i_m} = 1}^N \!{\frac{1}{{{{\bar \gamma }_{{i_m}}}}}\!\sum\limits_{\left\{\! {{i_1}, \ldots ,{i_{m - 1}}} \!\right\} \in {{\mathop{\rm P}\nolimits} _{m - 1}}\left(\! {{I_N} - \left\{\! {{i_m}} \!\right\}} \!\right)} \!{\sum\limits_{\scriptstyle k = 1 \atop 
  \scriptstyle \left\{ {{i_1}, \ldots ,{i_{m - 1}}} \right\}}^{m - 1} \!{{C_{k,1,m-1}}\sum\limits_{\left\{\! {{i_{m + 1}}, \ldots ,{i_N}} \!\right\} \in {{\mathop{\rm P}\nolimits} _{N - m}}\left( \!{{I_N} - \left\{ {{i_m}} \!\right\} - \left\{\! {{i_1}, \ldots ,{i_{m - 1}}} \!\right\}} \!\right)} {\sum\limits_{\scriptstyle q = m + 1 \atop 
  \scriptstyle \left\{ {{i_{m + 1}}, \ldots ,{i_N}} \right\}}^N \!{{C_{q,m+1,N}}} } } } } \nonumber
\\
\!\!\!\!\!\!\!\!\!\!\!\!\!\!\!\!\!\!&&\quad\quad\times \!\exp \!\left(\! { - \frac{{{z_2}}}{{{{\bar \gamma }_{{i_q}}}}}} \!\right)\exp\! \left(\! { - \frac{{{z_1}}}{{{{\bar \gamma }_{{i_k}}}}}} \!\right)\!\int\limits_0^{\frac{{{z_1}}}{m}} \!{d{u_m}\exp \!\left(\! { - \left(\! {\sum\limits_{l = 1}^m \!{\left(\! {\frac{1}{{{{\bar \gamma }_{{i_l}}}}}} \!\right) \!- \!\frac{m}{{{{\bar \gamma }_{{i_k}}}}}} } \!\right){u_m}} \!\right)} \nonumber
\\
\!\!\!\!\!\!\!\!\!\!\!\!\!\!\!\!\!\!&&\quad+ \!\sum\limits_{{i_m} = 1}^N \!{\frac{1}{{{{\bar \gamma }_{{i_m}}}}}\!\sum\limits_{\left\{\! {{i_1}, \ldots ,{i_{m - 1}}} \!\right\} \in {{\mathop{\rm P}\nolimits} _{m - 1}}\left(\! {{I_N} - \left\{\! {{i_m}} \!\right\}} \!\right)} \!{\sum\limits_{\scriptstyle k = 1 \atop 
  \scriptstyle \left\{ {{i_1}, \ldots ,{i_{m - 1}}} \right\}}^{m - 1} \!{{C_{k,1,m-1}}\sum\limits_{\left\{\! {{i_{m + 1}}, \ldots ,{i_N}} \!\right\} \in {{\mathop{\rm P}\nolimits} _{N - m}}\left(\! {{I_N} - \left\{\! {{i_m}} \!\right\} - \left\{\! {{i_1}, \ldots ,{i_{m - 1}}} \!\right\}} \!\right)} \!{\sum\limits_{\scriptstyle q = m + 1 \atop 
  \scriptstyle \left\{ {{i_{m + 1}}, \ldots ,{i_N}} \right\}}^N \!{{C_{q,m+1,N}}} } } } } \nonumber
\\
\!\!\!\!\!\!\!\!\!\!\!\!\!\!\!\!\!\!&&\quad\quad\times \!\Vast[ \!{\sum\limits_{h = 1}^{N - m}\! {{{\left( \!{ - 1} \!\right)}^h}\!\sum\limits_{{j_1} = {j_0} + m + 1}^{N - h + 1} { \cdots \!\sum\limits_{{j_h} = {j_{h - 1}} + 1}^N \!{\exp \!\left(\! { - \frac{{{z_1}}}{{{{\bar \gamma }_{{i_k}}}}}} \!\right)\exp\! \left(\! { - \frac{{{z_2}}}{{{{\bar \gamma }_{{i_q}}}}}} \!\right)} } } } \nonumber
\\
\!\!\!\!\!\!\!\!\!\!\!\!\!\!\!\!\!\!&& \quad\quad \quad \quad\times \int\limits_0^\infty  \!{d{u_m}} \exp \!\left( \!{ - \left(\! {\sum\limits_{m = 1}^h \! {\left(\! {\frac{1}{{{{\bar \gamma }_{{i_{{j_m}}}}}}}} \!\right) \!+ \!\sum\limits_{l = 1}^m \!{\left(\! {\frac{1}{{{{\bar \gamma }_{{i_l}}}}}} \!\right) \!- \!\frac{m}{{{{\bar \gamma }_{{i_k}}}}} \!-\! \frac{h}{{{{\bar \gamma }_{{i_q}}}}}} } } \!\right){u_m}} \!\right)U\left(\! {{z_1} - m{u_m}} \!\right)U\left(\! {{z_2} - h{u_m}} \!\right) \!\Vast].
\end{eqnarray}\normalsize

In (\ref{AP:joint_PDF_1}), there are two integral expressions and the first integral part can be directly derived as the following closed form expression
\begin{equation} \small \label{AP:joint_PDF_2}
\int\limits_0^{\frac{{{z_1}}}{m}} {d{u_m}\exp \left( { - \left( {\sum\limits_{l = 1}^m {\left( {\frac{1}{{{{\bar \gamma }_{{i_l}}}}}} \right) - \frac{m}{{{{\bar \gamma }_{{i_k}}}}}} } \right){u_m}} \right)}  = \frac{{1 - \exp \left( { - \left( {\sum\limits_{l = 1}^m {\left( {\frac{1}{{{{\bar \gamma }_{{i_l}}}}}} \right) - \frac{m}{{{{\bar \gamma }_{{i_k}}}}}} } \right)\frac{{{z_1}}}{m}} \right)}}{{\left( {\sum\limits_{l = 1}^m {\left( {\frac{1}{{{{\bar \gamma }_{{i_l}}}}}} \right) - \frac{m}{{{{\bar \gamma }_{{i_k}}}}}} } \right)}}. 
\end{equation}
However, for the second integral part, we need to consider two cases separately based on the valid integral region of $z_1$, $z_2$, and $u_m$ as

\small\begin{eqnarray} \label{AP:joint_PDF_3}
&&\int\limits_0^\infty  {d{u_m}} \exp \left( { - \left( {\sum\limits_{m = 1}^h {\left( {\frac{1}{{{{\bar \gamma }_{{i_{{j_m}}}}}}}} \right) + \sum\limits_{l = 1}^m {\left( {\frac{1}{{{{\bar \gamma }_{{i_l}}}}}} \right) - \frac{m}{{{{\bar \gamma }_{{i_k}}}}} - \frac{h}{{{{\bar \gamma }_{{i_q}}}}}} } } \right){u_m}} \right)U\left( {{z_1} - m{u_m}} \right)U\left( {{z_2} - h{u_m}} \right) \nonumber
\\
&&= \int\limits_0^{\frac{{{z_2}}}{h}} {d{u_m}} \exp \left( { - \left( {\sum\limits_{m = 1}^h {\left( {\frac{1}{{{{\bar \gamma }_{{i_{{j_m}}}}}}}} \right) + \sum\limits_{l = 1}^m {\left( {\frac{1}{{{{\bar \gamma }_{{i_l}}}}}} \right) - \frac{m}{{{{\bar \gamma }_{{i_k}}}}} - \frac{h}{{{{\bar \gamma }_{{i_q}}}}}} } } \right){u_m}} \right)U\left( {\frac{{{z_1}}}{m} - \frac{{{z_2}}}{h}} \right)\nonumber
\\
&&\quad+ \int\limits_0^{\frac{{{z_1}}}{m}} {d{u_m}} \exp \left( { - \left( {\sum\limits_{m = 1}^h {\left( {\frac{1}{{{{\bar \gamma }_{{i_{{j_m}}}}}}}} \right) + \sum\limits_{l = 1}^m {\left( {\frac{1}{{{{\bar \gamma }_{{i_l}}}}}} \right) - \frac{m}{{{{\bar \gamma }_{{i_k}}}}} - \frac{h}{{{{\bar \gamma }_{{i_q}}}}}} } } \right){u_m}} \right)\left[ {1 - U\left( {\frac{{{z_1}}}{m} - \frac{{{z_2}}}{h}} \right)} \right].
\end{eqnarray} \normalsize
With simplified (\ref{AP:joint_PDF_3}), we can get the following closed-form expressions, respectively, as

\small\begin{eqnarray}\label{AP:joint_PDF_4}
&&\int\limits_0^{\frac{{{z_2}}}{h}} {d{u_m}} \exp \left( { - \left( {\sum\limits_{m = 1}^h {\left( {\frac{1}{{{{\bar \gamma }_{{i_{{j_m}}}}}}}} \right) + \sum\limits_{l = 1}^m {\left( {\frac{1}{{{{\bar \gamma }_{{i_l}}}}}} \right) - \frac{m}{{{{\bar \gamma }_{{i_k}}}}} - \frac{h}{{{{\bar \gamma }_{{i_q}}}}}} } } \right){u_m}} \right)U\left( {\frac{{{z_1}}}{m} - \frac{{{z_2}}}{h}} \right)\nonumber
\\
&&= \frac{{1 - \exp \left( { - \left( {\sum\limits_{m = 1}^h {\left( {\frac{1}{{{{\bar \gamma }_{{i_{{j_m}}}}}}}} \right) + \sum\limits_{l = 1}^m {\left( {\frac{1}{{{{\bar \gamma }_{{i_l}}}}}} \right) - \frac{m}{{{{\bar \gamma }_{{i_k}}}}} - \frac{h}{{{{\bar \gamma }_{{i_q}}}}}} } } \right)\frac{{{z_2}}}{h}} \right)}}{{\left( {\sum\limits_{m = 1}^h {\left( {\frac{1}{{{{\bar \gamma }_{{i_{{j_m}}}}}}}} \right) + \sum\limits_{l = 1}^m {\left( {\frac{1}{{{{\bar \gamma }_{{i_l}}}}}} \right) - \frac{m}{{{{\bar \gamma }_{{i_k}}}}} - \frac{h}{{{{\bar \gamma }_{{i_q}}}}}} } } \right)}}U\left( {\frac{{{z_1}}}{m} - \frac{{{z_2}}}{h}} \right),
\end{eqnarray}\normalsize
and
\small\begin{eqnarray} \label{AP:joint_PDF_5}
&&\int\limits_0^{\frac{{{z_1}}}{m}} {d{u_m}} \exp \left( { - \left( {\sum\limits_{m = 1}^h {\left( {\frac{1}{{{{\bar \gamma }_{{i_{{j_m}}}}}}}} \right) + \sum\limits_{l = 1}^m {\left( {\frac{1}{{{{\bar \gamma }_{{i_l}}}}}} \right) - \frac{m}{{{{\bar \gamma }_{{i_k}}}}} - \frac{h}{{{{\bar \gamma }_{{i_q}}}}}} } } \right){u_m}} \right)\left[ {1 - U\left( {\frac{{{z_1}}}{m} - \frac{{{z_2}}}{h}} \right)} \right] \nonumber
\\
&&= \frac{{1 - \exp \left( { - \left( {\sum\limits_{m = 1}^h {\left( {\frac{1}{{{{\bar \gamma }_{{i_{{j_m}}}}}}}} \right) + \sum\limits_{l = 1}^m {\left( {\frac{1}{{{{\bar \gamma }_{{i_l}}}}}} \right) - \frac{m}{{{{\bar \gamma }_{{i_k}}}}} - \frac{h}{{{{\bar \gamma }_{{i_q}}}}}} } } \right)\frac{{{z_1}}}{m}} \right)}}{{\left( {\sum\limits_{m = 1}^h {\left( {\frac{1}{{{{\bar \gamma }_{{i_{{j_m}}}}}}}} \right) + \sum\limits_{l = 1}^m {\left( {\frac{1}{{{{\bar \gamma }_{{i_l}}}}}} \right) - \frac{m}{{{{\bar \gamma }_{{i_k}}}}} - \frac{h}{{{{\bar \gamma }_{{i_q}}}}}} } } \right)}}\left[ {1 - U\left( {\frac{{{z_1}}}{m} - \frac{{{z_2}}}{h}} \right)} \right].
\end{eqnarray}\normalsize
\subsection{Capture Probability} \label{AP:capture_prob_CF}
Starting from (\ref{eq:Capture_probability_closed_form_1}), inserting the closed-form expression of (\ref{eq:non_closed_form_3}) presented in \ref{AP:joint_PDF} into (\ref{eq:Capture_probability_closed_form_1}), the closed-form expression for i.n.d. Rayleigh fading conditions can be written in (\ref{eq:Capture_probability_closed_form_2}).
In (\ref{eq:Capture_probability_closed_form_2}), there are six double-integral expressions.
For the first and second cases, we can directly obtain the closed-from expression as shown in (\ref{eq:Capture_probability_closed_form_int_1}) and (\ref{eq:Capture_probability_closed_form_int_2}). However, for others, we need to carefully consider the valid integral region respectively as
\begin{enumerate}
\item[iii)] The third integral expression:
\\
In this case, for valid integration, we need to consider two cases separately. If $\frac{h}{m} \ge \frac{{1 - T}}{T}$, then ${z_2} \le \frac{{1 - T}}{T}{z_1}$ and $\frac{1}{m} \ge \frac{{1 - T}}{{T \cdot h}}$. If $\frac{h}{m} < \frac{{1 - T}}{T}$, then ${z_2} \le \frac{h}{m}{z_1}$ and $\frac{1}{m} < \frac{{1 - T}}{{T \cdot h}}$. As a result, we can re-write the third integral expression as
\small\begin{eqnarray} \label{AP:capture_probability__CF_1}
&&\int_0^\infty  {\int_0^{\left( {\frac{{1 - T}}{T}} \right){z_1}} {\exp \left( { - \frac{{{z_1}}}{{{{\bar \gamma }_{{i_k}}}}}} \right)\exp \left( { - \frac{{{z_2}}}{{{{\bar \gamma }_{{i_q}}}}}} \right)U\left( {\frac{{{z_1}}}{m} - \frac{{{z_2}}}{h}} \right)d{z_2}d{z_1}} }\nonumber
\\
&&=\int_0^\infty  {\exp \left( { - \frac{{{z_1}}}{{{{\bar \gamma }_{{i_k}}}}}} \right)\int_0^{\left( {\frac{{1 - T}}{T}} \right){z_1}} {\exp \left( { - \frac{{{z_2}}}{{{{\bar \gamma }_{{i_q}}}}}} \right)U\left( {\frac{1}{m} - \frac{{1 - T}}{{T \cdot h}}} \right)d{z_2}} d{z_1}} \nonumber
\\
&&\quad + \int_0^\infty  {\exp \left( { - \frac{{{z_1}}}{{{{\bar \gamma }_{{i_k}}}}}} \right)\int_0^{\left( {\frac{h}{m}} \right){z_1}} {\exp \left( { - \frac{{{z_2}}}{{{{\bar \gamma }_{{i_q}}}}}} \right)\left[ {1 - U\left( {\frac{1}{m} - \frac{{1 - T}}{{T \cdot h}}} \right)} \right]d{z_2}} d{z_1}}.
\end{eqnarray} \normalsize
From (\ref{AP:capture_probability__CF_1}), we can directly derive the closed-form expressions as
\small\begin{eqnarray}
&&\int_0^\infty  {\exp \left( { - \frac{{{z_1}}}{{{{\bar \gamma }_{{i_k}}}}}} \right)\int_0^{\left( {\frac{{1 - T}}{T}} \right){z_1}} {\exp \left( { - \frac{{{z_2}}}{{{{\bar \gamma }_{{i_q}}}}}} \right)U\left( {\frac{1}{m} - \frac{{1 - T}}{{T \cdot h}}} \right)d{z_2}} d{z_1}} \nonumber
\\
&& = {{\bar \gamma }_{{i_q}}}{{\bar \gamma }_{{i_k}}}U\left( {\frac{1}{m} - \frac{{1 - T}}{{T \cdot h}}} \right) - \frac{{{{\bar \gamma }_{{i_q}}}}}{{\left( {\frac{{1 - T}}{{{{\bar \gamma }_{{i_q}}}T}} + \frac{1}{{{{\bar \gamma }_{{i_k}}}}}} \right)}}U\left( {\frac{1}{m} - \frac{{1 - T}}{{T \cdot h}}} \right),
\end{eqnarray} \normalsize
and
\small\begin{eqnarray}
&&\int_0^\infty  {\exp \left( { - \frac{{{z_1}}}{{{{\bar \gamma }_{{i_k}}}}}} \right)\int_0^{\left( {\frac{h}{m}} \right){z_1}} {\exp \left( { - \frac{{{z_2}}}{{{{\bar \gamma }_{{i_q}}}}}} \right)\left[ {1 - U\left( {\frac{1}{m} - \frac{{1 - T}}{{T \cdot h}}} \right)} \right]d{z_2}} d{z_1}} \nonumber
\\
&& = {{\bar \gamma }_{{i_q}}}{{\bar \gamma }_{{i_k}}}\left[ {1 - U\left( {\frac{1}{m} - \frac{{1 - T}}{{T \cdot h}}} \right)} \right] - \frac{{{{\bar \gamma }_{{i_q}}}}}{{\left( {\frac{h}{{{{\bar \gamma }_{{i_q}}}m}} + \frac{1}{{{{\bar \gamma }_{{i_k}}}}}} \right)}}\left[ {1 - U\left( {\frac{1}{m} - \frac{{1 - T}}{{T \cdot h}}} \right)} \right].
\end{eqnarray} \normalsize
\item[iv)] The forth integral expression:
\\
In this case, similar to the case iii), we also need to consider two cases separately. As a result, we can re-write the forth integral expression as
\small \begin{eqnarray} \label{AP:capture_probability__CF_2}
\!\!\!\!\!\!\!\!\!\!\!\!\!\!\!\!\!\!\!\!\!\!\!\!\!\!\!\!\!\!\!\!\!\!\!\!&&\int_0^\infty \! {\int_0^{\left(\! {\frac{{1 - T}}{T}} \!\right){z_1}}\! {\exp \!\left( \!{ - \frac{{{z_1}}}{{{{\bar \gamma }_{{i_k}}}}}} \!\right)\exp\! \left(\! { - \left( \!{\sum\limits_{m = 1}^h \!{\left(\! {\frac{1}{{{{\bar \gamma }_{{i_{{j_m}}}}}}}} \!\right) \!+ \!\sum\limits_{l = 1}^m \!{\left( \!{\frac{1}{{{{\bar \gamma }_{{i_l}}}}}} \!\right) - \frac{m}{{{{\bar \gamma }_{{i_k}}}}}} } } \!\right)\frac{{{z_2}}}{h}} \!\right)U\left(\! {\frac{{{z_1}}}{m} - \frac{{{z_2}}}{h}} \!\right)\!d{z_2}d{z_1}} } \nonumber
\\
\!\!\!\!\!\!\!\!\!\!\!\!\!\!\!\!\!\!\!\!\!\!\!\!\!\!\!\!\!\!\!\!\!\!\!\!&&= \!\int_0^\infty \! {\exp \!\left(\! { - \frac{{{z_1}}}{{{{\bar \gamma }_{{i_k}}}}}} \!\right)\!\int_0^{\left(\! {\frac{{1 - T}}{T}} \!\right){z_1}}\! {\exp \!\left(\! { - \left(\! {\sum\limits_{m = 1}^h \!{\left( \!{\frac{1}{{{{\bar \gamma }_{{i_{{j_m}}}}}}}} \!\right) \!+ \!\sum\limits_{l = 1}^m \!{\left( \!{\frac{1}{{{{\bar \gamma }_{{i_l}}}}}} \!\right)\! - \!\frac{m}{{{{\bar \gamma }_{{i_k}}}}}} } } \!\right)\frac{{{z_2}}}{h}} \!\right)U\left(\! {\frac{1}{m} - \frac{{1 - T}}{{T \cdot h}}} \!\right)\!d{z_2}} d{z_1}} \nonumber
\\
\!\!\!\!\!\!\!\!\!\!\!\!\!\!\!\!\!\!\!\!\!\!\!\!\!\!\!\!\!\!\!\!\!\!\!\!&& \quad + \!\int_0^\infty \!\! {\exp \!\left(\! { - \frac{{{z_1}}}{{{{\bar \gamma }_{{i_k}}}}}} \!\right)\!\int_0^{\left(\! {\frac{h}{m}} \!\right){z_1}} \!\!\!\!{\!\!\exp \!\left(\! { - \left(\! {\sum\limits_{m = 1}^h \!{\left(\! {\frac{1}{{{{\bar \gamma }_{{i_{{j_m}}}}}}}} \!\right) \!+ \!\sum\limits_{l = 1}^m \!{\left(\! {\frac{1}{{{{\bar \gamma }_{{i_l}}}}}} \!\right) \!- \!\frac{m}{{{{\bar \gamma }_{{i_k}}}}}} } } \!\right)\frac{{{z_2}}}{h}} \!\right)\! \left[\! {1 - U\left(\! {\frac{1}{m} - \frac{{1 - T}}{{T \cdot h}}} \!\right)} \!\right]\!d{z_2}} d{z_1}}.
\end{eqnarray} \normalsize
With (\ref{AP:capture_probability__CF_2}), we can also directly derive the closed-form expressions as

\small\begin{eqnarray}
\!\!\!\!\!\!\!\!\!\!\!\!\!\!\!\!\!\!\!\!\!\!\!\!\!\!\!\!\!\!\!\!\!\!\!\!\!\!\!\!\!\!\!\!\!\!\!\!&&\int_0^\infty \! {\exp \!\left(\! { - \frac{{{z_1}}}{{{{\bar \gamma }_{{i_k}}}}}} \!\right)\!\int_0^{\left(\! {\frac{{1 - T}}{T}} \!\right){z_1}} \!{\exp \!\left(\! { - \left(\! {\sum\limits_{m = 1}^h \!{\left(\! {\frac{1}{{{{\bar \gamma }_{{i_{{j_m}}}}}}}} \!\right) \!+ \!\sum\limits_{l = 1}^m \! {\left(\! {\frac{1}{{{{\bar \gamma }_{{i_l}}}}}} \!\right)\! - \!\frac{m}{{{{\bar \gamma }_{{i_k}}}}}} } } \!\right)\frac{{{z_2}}}{h}} \!\right)U\left(\! {\frac{1}{m} - \frac{{1 - T}}{{T \cdot h}}} \!\right)\!d{z_2}} d{z_1}} \nonumber
\\
\!\!\!\!\!\!\!\!\!\!\!\!\!\!\!\!\!\!\!\!\!\!\!\!\!\!\!\!\!\!\!\!\!\!\!\!\!\!\!\!\!\!\!\!\!\!\!\!&& = \!\frac{{{{\bar \gamma }_{{i_k}}}h}}{{\left(\! {\sum\limits_{m = 1}^h \!{\left(\! {\frac{1}{{{{\bar \gamma }_{{i_{{j_m}}}}}}}} \!\right) \!+ \!\sum\limits_{l = 1}^m \!{\left(\! {\frac{1}{{{{\bar \gamma }_{{i_l}}}}}} \!\right) \!- \!\frac{m}{{{{\bar \gamma }_{{i_k}}}}}} } } \!\right)}}U\left(\! {\frac{1}{m} - \frac{{1 - T}}{{T \cdot h}}} \!\right)\nonumber
\\
\!\!\!\!\!\!\!\!\!\!\!\!\!\!\!\!\!\!\!\!\!\!\!\!\!\!\!\!\!\!\!\!\!\!\!\!\!\!\!\!\!\!\!\!\!\!\!\!&& \quad - \!\frac{h}{{\left(\! {\sum\limits_{m = 1}^h \!{\left(\! {\frac{1}{{{{\bar \gamma }_{{i_{{j_m}}}}}}}} \!\right) \!+ \!\sum\limits_{l = 1}^m \!{\left(\! {\frac{1}{{{{\bar \gamma }_{{i_l}}}}}} \!\right) \!- \!\frac{m}{{{{\bar \gamma }_{{i_k}}}}}} } } \!\right)\left\{\! {\left(\! {\sum\limits_{m = 1}^h \!{\left(\! {\frac{1}{{{{\bar \gamma }_{{i_{{j_m}}}}}}}} \!\right) \!+ \!\sum\limits_{l = 1}^m \!{\left(\! {\frac{1}{{{{\bar \gamma }_{{i_l}}}}}} \!\right) \!- \!\frac{m}{{{{\bar \gamma }_{{i_k}}}}}} } } \!\right)\frac{{1 - T}}{{T \cdot h}} \!+ \!\frac{1}{{{{\bar \gamma }_{{i_k}}}}}} \!\right\}}}U\left(\! {\frac{1}{m} - \frac{{1 - T}}{{T \cdot h}}} \!\right),
\end{eqnarray} \normalsize
and
\small\begin{eqnarray}
\!\!\!\!\!\!\!\!\!\!\!\!\!\!\!\!\!\!\!\!\!\!\!\!\!\!\!\!\!\!\!\!\!\!\!\!\!\!\!\!\!\!&&\int_0^\infty \! {\exp \!\left(\! { - \frac{{{z_1}}}{{{{\bar \gamma }_{{i_k}}}}}} \!\right)\!\int_0^{\left(\! {\frac{h}{m}} \!\right){z_1}} \!{\exp \!\left(\! { - \left( \!{\sum\limits_{m = 1}^h \!{\left(\! {\frac{1}{{{{\bar \gamma }_{{i_{{j_m}}}}}}}} \!\right) \!+ \!\sum\limits_{l = 1}^m \!{\left(\! {\frac{1}{{{{\bar \gamma }_{{i_l}}}}}} \!\right) \!- \!\frac{m}{{{{\bar \gamma }_{{i_k}}}}}} } } \!\right)\frac{{{z_2}}}{h}} \!\right)\left[\! {1 - U\left(\! {\frac{1}{m} - \frac{{1 - T}}{{T \cdot h}}} \!\right)} \!\right]\! d{z_2}} d{z_1}} \nonumber
\\
\!\!\!\!\!\!\!\!\!\!\!\!\!\!\!\!\!\!\!\!\!\!\!\!\!\!\!\!\!\!\!\!\!\!\!\!\!\!\!\!\!\!&&= \!\frac{{{{\bar \gamma }_{{i_k}}}h}}{{\left(\! {\sum\limits_{m = 1}^h \!{\left(\! {\frac{1}{{{{\bar \gamma }_{{i_{{j_m}}}}}}}} \!\right) \!+ \!\sum\limits_{l = 1}^m \!{\left(\! {\frac{1}{{{{\bar \gamma }_{{i_l}}}}}} \!\right) \!- \!\frac{m}{{{{\bar \gamma }_{{i_k}}}}}} } } \!\right)}}\left[\! {1 - U\left(\! {\frac{1}{m} - \frac{{1 - T}}{{T \cdot h}}} \!\right)} \!\right]\nonumber
\\
\!\!\!\!\!\!\!\!\!\!\!\!\!\!\!\!\!\!\!\!\!\!\!\!\!\!\!\!\!\!\!\!\!\!\!\!\!\!\!\!\!\!&& \quad - \!\frac{h}{{\left(\! {\sum\limits_{m = 1}^h \!{\left(\! {\frac{1}{{{{\bar \gamma }_{{i_{{j_m}}}}}}}} \!\right) \!+ \!\sum\limits_{l = 1}^m \!{\left(\! {\frac{1}{{{{\bar \gamma }_{{i_l}}}}}} \!\right) \!- \!\frac{m}{{{{\bar \gamma }_{{i_k}}}}}} } } \!\right)\left\{\! {\left(\! {\sum\limits_{m = 1}^h \!{\left(\! {\frac{1}{{{{\bar \gamma }_{{i_{{j_m}}}}}}}} \!\right) \!+\! \sum\limits_{l = 1}^m \!{\left(\! {\frac{1}{{{{\bar \gamma }_{{i_l}}}}}} \!\right) \!- \!\frac{m}{{{{\bar \gamma }_{{i_k}}}}}} } } \!\right)\frac{1}{m} \!+\! \frac{1}{{{{\bar \gamma }_{{i_k}}}}}} \!\right\}}}\left[\! {1 - U\left(\! {\frac{1}{m} - \frac{{1 - T}}{{T \cdot h}}} \!\right)} \!\right].
\end{eqnarray} \normalsize
\item[v)] The fifth integral expression:
\\
In this case, we need to consider two cases separately for valid integration. If $\frac{{1 - T}}{T} \ge \frac{h}{m}$, then $\frac{h}{m}{z_1}{\rm{ < }}{z_2} \le \frac{{1 - T}}{T}{z_1}$ and $\frac{{1 - T}}{{T \cdot h}} \ge \frac{1}{m}$. If $\frac{{1 - T}}{T} < \frac{h}{m}$, then there is no valid overlap integration region. As a result, we can re-write the third integral expression as
\small \begin{eqnarray} \label{AP:capture_probability__CF_3}
&&\int_0^\infty  {\int_0^{\left( {\frac{{1 - T}}{T}} \right){z_1}} {\exp \left( { - \frac{{{z_1}}}{{{{\bar \gamma }_{{i_k}}}}}} \right)\exp \left( { - \frac{{{z_2}}}{{{{\bar \gamma }_{{i_q}}}}}} \right)\left[ {1 - U\left( {\frac{{{z_1}}}{m} - \frac{{{z_2}}}{h}} \right)} \right]d{z_2}d{z_1}} } \nonumber
\\
&&= \int_0^\infty  {\exp \left( { - \frac{{{z_1}}}{{{{\bar \gamma }_{{i_k}}}}}} \right)\int_{\left( {\frac{h}{m}} \right){z_1}}^{\left( {\frac{{1 - T}}{T}} \right){z_1}} {\exp \left( { - \frac{{{z_2}}}{{{{\bar \gamma }_{{i_q}}}}}} \right)U\left( {\frac{{1 - T}}{{T \cdot h}} - \frac{1}{m}} \right)d{z_2}d{z_1}} }.
\end{eqnarray} \normalsize
With (\ref{AP:capture_probability__CF_3}), we can also directly derive the closed-form expressions as
\small\begin{eqnarray}
&&\int_0^\infty  {\exp \left( { - \frac{{{z_1}}}{{{{\bar \gamma }_{{i_k}}}}}} \right)\int_{\left( {\frac{h}{m}} \right){z_1}}^{\left( {\frac{{1 - T}}{T}} \right){z_1}} {\exp \left( { - \frac{{{z_2}}}{{{{\bar \gamma }_{{i_q}}}}}} \right)U\left( {\frac{{1 - T}}{{T \cdot h}} - \frac{1}{m}} \right)d{z_2}d{z_1}} } \nonumber
\\
&&= \frac{{{{\bar \gamma }_{{i_q}}}}}{{\left( {\frac{h}{{m \cdot {{\bar \gamma }_{{i_q}}}}} + \frac{1}{{{{\bar \gamma }_{{i_k}}}}}} \right)}}U\left( {\frac{{1 - T}}{{T \cdot h}} - \frac{1}{m}} \right) - \frac{{{{\bar \gamma }_{{i_q}}}}}{{\left( {\frac{{1 - T}}{{T \cdot {{\bar \gamma }_{{i_q}}}}} + \frac{1}{{{{\bar \gamma }_{{i_k}}}}}} \right)}}U\left( {\frac{{1 - T}}{{T \cdot h}} - \frac{1}{m}} \right).
\end{eqnarray} \normalsize
\item[vi)] The sixth integral expression:
\\
In this case, similar to the case v), we also need to consider two cases separately. As a result, we can re-write the forth integral expression as

\small \begin{eqnarray} \label{AP:capture_probability_CF_4}
\!\!\!\!\!\!\!\!\!\!\!\!\!\!\!\!\!\!\!\!\!\!\!\!\!\!\!\!\!\!\!\!\!\!\!\!\!\!\!\!\!\!&&\int_0^\infty \! {\int_0^{\left(\! {\frac{{1 - T}}{T}} \!\right){z_1}} \!{\exp \!\left(\! { - \frac{{{z_2}}}{{{{\bar \gamma }_{{i_q}}}}}} \!\right)\exp\! \left(\! { - \left(\! {\sum\limits_{m = 1}^h \!{\left(\! {\frac{1}{{{{\bar \gamma }_{{i_{{j_m}}}}}}}} \!\right) \!+ \!\sum\limits_{l = 1}^m \!{\left(\! {\frac{1}{{{{\bar \gamma }_{{i_l}}}}}} \!\right) \!-\! \frac{h}{{{{\bar \gamma }_{{i_q}}}}}} } } \!\right)\frac{{{z_1}}}{m}} \!\right)\left[\! {1 - U\left(\! {\frac{{{z_1}}}{m} - \frac{{{z_2}}}{h}} \!\right)} \!\right]\!d{z_2}d{z_1}} } \nonumber
\\
\!\!\!\!\!\!\!\!\!\!\!\!\!\!\!\!\!\!\!\!\!\!\!\!\!\!\!\!\!\!\!\!\!\!\!\!\!\!\!\!\!\!&&= \!\int_0^\infty \! {\exp\! \left(\! { - \left(\! {\sum\limits_{m = 1}^h \!{\left(\! {\frac{1}{{{{\bar \gamma }_{{i_{{j_m}}}}}}}} \!\right)\! + \!\sum\limits_{l = 1}^m \!{\left(\! {\frac{1}{{{{\bar \gamma }_{{i_l}}}}}} \!\right) \!- \!\frac{h}{{{{\bar \gamma }_{{i_q}}}}}} } }\! \right)\frac{{{z_1}}}{m}} \!\right)\!\int_{\left(\! {\frac{h}{m}} \!\right){z_1}}^{\left(\! {\frac{{1 - T}}{T}} \!\right){z_1}} \!{\exp\! \left(\! { - \frac{{{z_2}}}{{{{\bar \gamma }_{{i_q}}}}}} \!\right)U\left(\! {\frac{{1 - T}}{{T \cdot h}} - \frac{1}{m}} \!\right)\!d{z_2}} d{z_1}}.
\end{eqnarray} \normalsize
With (\ref{AP:capture_probability_CF_4}), we can also directly derive the closed-form expressions as
\small\begin{eqnarray}
\!\!\!\!\!\!\!\!\!\!&&\int_0^\infty  {\exp \left( { - \left( {\sum\limits_{m = 1}^h {\left( {\frac{1}{{{{\bar \gamma }_{{i_{{j_m}}}}}}}} \right) + \sum\limits_{l = 1}^m {\left( {\frac{1}{{{{\bar \gamma }_{{i_l}}}}}} \right) - \frac{h}{{{{\bar \gamma }_{{i_q}}}}}} } } \right)\frac{{{z_1}}}{m}} \right)\int_{\left( {\frac{h}{m}} \right){z_1}}^{\left( {\frac{{1 - T}}{T}} \right){z_1}} {\exp \left( { - \frac{{{z_2}}}{{{{\bar \gamma }_{{i_q}}}}}} \right)U\left( {\frac{{1 - T}}{{T \cdot h}} - \frac{1}{m}} \right)d{z_2}} d{z_1}} \nonumber
\\
\!\!\!\!\!\!\!\!\!\!&&= \frac{{m \cdot {{\bar \gamma }_{{i_q}}}}}{{\left( {\sum\limits_{m = 1}^h {\left( {\frac{1}{{{{\bar \gamma }_{{i_{{j_m}}}}}}}} \right) + \sum\limits_{l = 1}^m {\left( {\frac{1}{{{{\bar \gamma }_{{i_l}}}}}} \right)} } } \right)}}U\left( {\frac{{1 - T}}{{T \cdot h}} - \frac{1}{m}} \right)\nonumber
\\
\!\!\!\!\!\!\!\!\!\!&& \quad - \frac{{m \cdot {{\bar \gamma }_{{i_q}}}}}{{\left\{ {\left( {\sum\limits_{m = 1}^h {\left( {\frac{1}{{{{\bar \gamma }_{{i_{{j_m}}}}}}}} \right) + \sum\limits_{l = 1}^m {\left( {\frac{1}{{{{\bar \gamma }_{{i_l}}}}}} \right) - \frac{h}{{{{\bar \gamma }_{{i_q}}}}}} } } \right) + \frac{{m\left( {1 - T} \right)}}{{T \cdot {{\bar \gamma }_{{i_q}}}}}} \right\}}}U\left( {\frac{{1 - T}}{{T \cdot h}} - \frac{1}{m}} \right).
\end{eqnarray} \normalsize

\end{enumerate}

\bibliographystyle{ieeetran}
\bibliography{IEEEabrv,thesis}

\clearpage

\begin{figure}
\centering
\subfigure[Example of original $M$-dimensional groups\label{Example_a}]{\includegraphics[width=5.5in,trim=0.5cm 1.5cm 0.5cm 0cm]{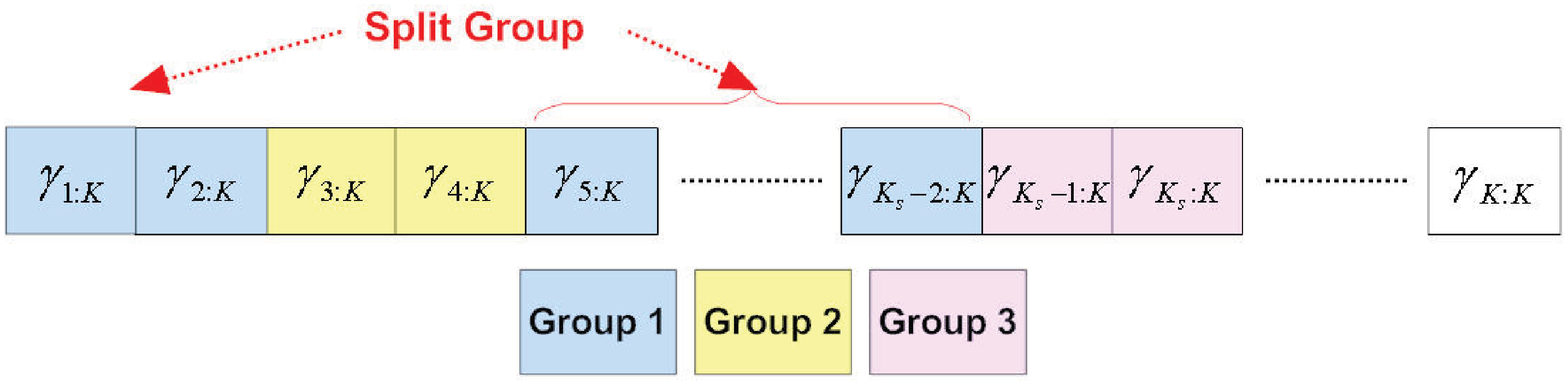}}\\
\subfigure[Example of substituted split groups\label{Example_b}]{\includegraphics[width=5.5in,trim=0.5cm 1.5cm 0.5cm 0cm]{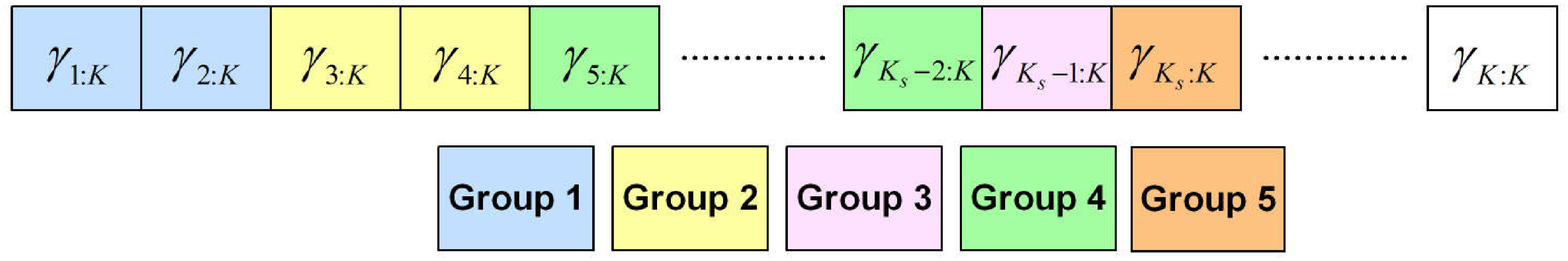}}
\caption{Examples for 3-dimensional joint PDF with split groups.}
\label{Example_2}
\end{figure}

\begin{figure}
\centering
\includegraphics[width=5.5in,trim=0.5cm 1.5cm 0.5cm 0cm]{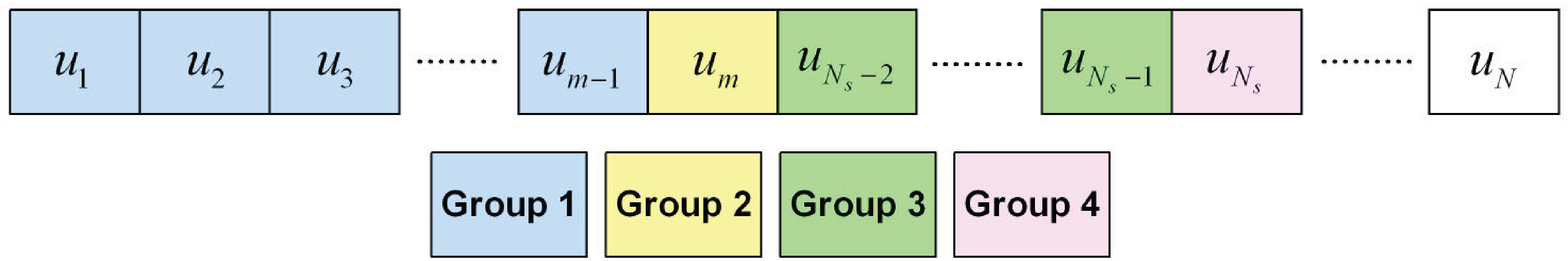}
\caption{Joint MGF of $u _{m}$ and $\sum\limits_{\scriptstyle n = 1 \hfill \atop \scriptstyle n \ne m \hfill}^{N_s } {u _{n} }$ for $1<m<N_s-1$.}
\label{Example_3_2}
\end{figure}

\end{document}